\definecolor{darkred}  {rgb}{0.5,0,0}
\definecolor{darkblue} {rgb}{0,0,0.5}
\definecolor{darkgreen}{rgb}{0,0.5,0}
\newcommand{\be}{\begin{equation}}
\newcommand{\ee}{\end{equation}}
\newcommand{\ba}{\begin{array}}
	\newcommand{\ea}{\end{array}}
\newcommand{\bea}{\begin{eqnarray}}
\newcommand{\eea}{\end{eqnarray}}
\newcommand{\calL}{{\cal L }}
\newcommand{\calR}{{\cal R }}
\newcommand{\calV}{{\cal V }}
\newcommand{\calS}{{\cal S }}
\newcommand{\calM}{{\cal M }}
\newcommand{\calZ}{{\cal Z }}
\newcommand{\ZZ}{\mathbb{Z}}
\newcommand{\RR}{\mathbb{R}}
\newcommand{\la}{\langle}
\newcommand{\ra}{\rangle}
\newcommand{\EX}{\mathbb{E}}
\newtheorem{dfn}{Definition}
\newtheorem{prop}{Proposition}
\newtheorem{claim}{Claim}
\newtheorem{lemma}{Lemma}
\newtheorem{corol}{Corollary}
\newtheorem{fact}{Fact}
\newtheorem{theorem}{Theorem}
\newtheorem*{theorem*}{Theorem}
\newcommand{\ket}[1]{\ensuremath{\vert#1\rangle}}
\newcommand{\bra}[1]{\ensuremath{\langle #1\vert}}
\newcommand{\bk}[2]{\ensuremath{\langle #1\vert #2\rangle}}
\newcommand{\kb}[2]{\ensuremath{\vert #1 \rangle \langle #2 \vert}}
\renewcommand{\vec}[1]{{#1}}
\def\id{\mbox{\small 1} \!\! \mbox{1}}
\def\id{{\mathchoice {\rm 1\mskip-4mu l} {\rm 1\mskip-4mu l} {\rm 1\mskip-4.5mu l} {\rm 1\mskip-5mu l}}}
\begin{document}

\title{Simulation of quantum circuits by low-rank stabilizer decompositions}

\author{Sergey Bravyi}
\affiliation{IBM T.J. Watson Research Center, Yorktown Heights NY 10598}
\author{Dan Browne}
\affiliation{Department of Physics and Astronomy, University College London, London, UK}
\author{Padraic Calpin}
\affiliation{Department of Physics and Astronomy, University College London, London, UK}
\author{Earl Campbell}
\affiliation{Department of Physics and Astronomy, University of Sheffield, Sheffield, UK}
\author{David Gosset}
\affiliation{IBM T.J. Watson Research Center, Yorktown Heights NY 10598}
\affiliation{Department of Combinatorics \& Optimization and Institute for Quantum Computing, University of Waterloo, Waterloo, Canada}
\author{Mark Howard} 
\affiliation{Department of Physics and Astronomy, University of Sheffield, Sheffield, UK}

\begin{abstract}

Recent work has explored using the stabilizer formalism to classically simulate quantum circuits containing a few non-Clifford gates. The computational cost of such methods is directly related to the notion of {\em stabilizer rank}, which for a pure state $\psi$ is defined to be the smallest integer $\chi$ such that  $\psi$ is a superposition of $\chi$ stabilizer states.  Here we develop a comprehensive mathematical theory of the stabilizer rank and the related approximate stabilizer rank.  We also present a suite of classical simulation algorithms with broader applicability and significantly improved performance over the previous state-of-the-art.  A new feature is the capability to simulate circuits composed of Clifford gates and arbitrary diagonal gates, extending the reach of a previous algorithm specialized to the Clifford+T gate set. We implemented the new simulation methods and used them to simulate quantum algorithms with 40-50 qubits and over 60 non-Clifford gates, without resorting to high-performance computers. We report a simulation of the Quantum Approximate Optimization Algorithm in which we process superpositions of $\chi\sim10^6$ stabilizer states and sample from the full $n$-bit output distribution, improving on previous simulations which used $\sim 10^3$ stabilizer states and sampled only from single-qubit marginals. We also simulated instances of the Hidden Shift algorithm with circuits including up to 64 $T$ gates or 16 CCZ gates; these simulations showcase the performance gains available by optimizing the decomposition of a circuit's non-Clifford components.
\end{abstract}
	
\maketitle
\tableofcontents

\section{Introduction}
It is widely believed that universal quantum computers cannot be efficiently simulated by classical probabilistic algorithms. This belief is partly supported by the fact that
state-of-the-art classical simulators employing modern supercomputers are still limited to a few dozens of qubits~\cite{smelyanskiy2016qhipster,haner20170,pednault2017breaking,chen2018classical}.
At the same time, certain quantum information processing tasks do not require computational universality. For example, quantum error correction
based on stabilizer codes and Pauli noise models~\cite{gottesman1998theory} only requires quantum circuits composed of 
Clifford gates and Pauli measurements--which can be easily simulated  classically 
for thousands of qubits using the Gottesman-Knill theorem~\cite{aaronson04improved,anders2006fast}. Furthermore, it is known that Clifford circuits
can be promoted to universal quantum computation when provided with a plentiful supply of some computational primitive outside the stabilizer operations, such as a non-Clifford gate or magic state~\cite{bravyi2005universal}.  This raises the possibility of simulating quantum circuits with a large number of qubits and few non-Clifford gates.  Aaronson and Gottesman~\cite{aaronson04improved} were the first to propose a classical simulation method covering this situation, with a runtime that scales polynomially with the number of qubits and Clifford gate count but exponentially with the number of non-Clifford gates.  This early work is an intriguing proof of principle but with a very large exponent, limiting potential applications.

Recent algorithmic improvements have helped tame this exponential scaling by significantly decreasing the size of the exponent.  A first step was made by Garcia, Markov and Cross~\cite{garcia2012efficient,Garcia14moreStabRank}, who proposed and studied the decomposition of states into a superposition of stabilizer states. Bravyi, Smith and Smolin~\cite{Bravyi16stabRank} formalized this into the notion of stabilizer rank. The stabilizer rank $\chi(\psi)$  of 
 a pure state $\psi$ is defined as the smallest integer  $\chi$ such that $\psi$
 can be expressed  as a superposition of $\chi$ stabilizer states. 
It can be thought of as a measure of computational non-classicality analogous the Schmidt rank measure of entanglement.  In particular,  $\chi(\psi)$ quantifies the simulation cost of 
stabilizer operations (Clifford gates and Pauli measurements) applied to the initial state  $\psi$. 

It is known that stabilizer operations augmented with  preparation of certain
single-qubit ``magic states" become computationally universal~\cite{bravyi2005universal}.
In particular, any quantum circuit composed of Clifford gates
and $m$ gates $T=|0\ra\la 0|+e^{i\pi/4}|1\ra\la 1|$ can be implemented by
stabilizer operations acting on the initial state $|\psi\ra =|T\ra^{\otimes m}$,
where $|T\ra \propto |0\ra + e^{i\pi/4} |1\ra$.
Thus the stabilizer rank  $\chi(T^{\otimes m})$ provides an upper bound on the simulation cost
of Clifford+$T$ circuits with $m$ $T$-gates.
The authors of  Ref.~\cite{Bravyi16stabRank} used a numerical search method to 
compute the stabilizer rank $\chi(T^{\otimes m})$ for $m\le 6$ finding that
$\chi(T^{\otimes 6})=7$. The numerical search  becomes impractical for
$m>6$ 
and one instead works with suboptimal decompositions by breaking $m$ magic states up into blocks of six or fewer qubits. This yields a classical simulator of Clifford+$T$ circuits running in time $2^{0.48 m}$ with certain polynomial prefactors~\cite{bravyi2016improved}.
More recently, Ref.~\cite{bravyi2016improved} introduced
an approximate version of the stabilizer rank and a method of constructing  approximate
stabilizer decomposition of the magic states $|T\ra^{\otimes m}$. This led to a simulation algorithm with runtime scaling as $2^{0.23m}$
that  samples the output distribution of the target circuit with  a small statistical error.
In practice, it can simulate single-qubit measurements on the output state of Clifford+$T$ circuits with  $m\le 50$ on a standard laptop~\cite{bravyi2016improved}.  A similar class of simulation methods uses Monte Carlo sampling over quasiprobability distributions, where the distribution can be over either a discrete phase space ~\cite{veitch2012negative,pashayan15,Delfosse15rebits}, over the class of stabilizer states~\cite{Howard17robustness} or over stabilizer operations~\cite{OakRidge17}.  These quasiprobability methods are a natural method for simulating noisy circuits but for pure circuits they appear to be slower than simulation methods based on stabilizer rank.  

Here we present a more general set of tools for finding exact and approximate stabilizer decompositions 
as well as improved simulation algorithms based on such decompositions.
A central theme throughout this paper  is generalizing the results  of Refs.~\cite{Bravyi16stabRank,bravyi2016improved}
beyond the Clifford+$T$ setting.  While Clifford+$T$ is a universal gate set, it requires several hundred $T$ gates to synthesize an arbitrary single qubit gate to a high precision (e.g. below $10^{-10}$ error).   Therefore, it would be impractical to simulate such gates using the Clifford+$T$ framework.  We achieve
significant improvements in the simulation runtime by branching out to more general gate sets
including arbitrary-angle $Z$-rotations and CCZ gates. 
Furthermore, we propose more efficient subroutines for simulating the action of Clifford gates
and Pauli measurements on superpositions of $\chi\gg 1$ stabilizer states. 
In practice, this enables us to perform simulations in the regime $\chi \sim 10^6$
with about 50 qubits
on a laptop computer improving upon $\chi\sim 10^3$ simulations reported in
Ref.~\cite{bravyi2016improved}.
The table provided below summarizes new simulation methods,
simulation tasks addressed by each method, and the runtime scaling. 
\begin{figure}[h]
	\includegraphics[height=6.65cm]{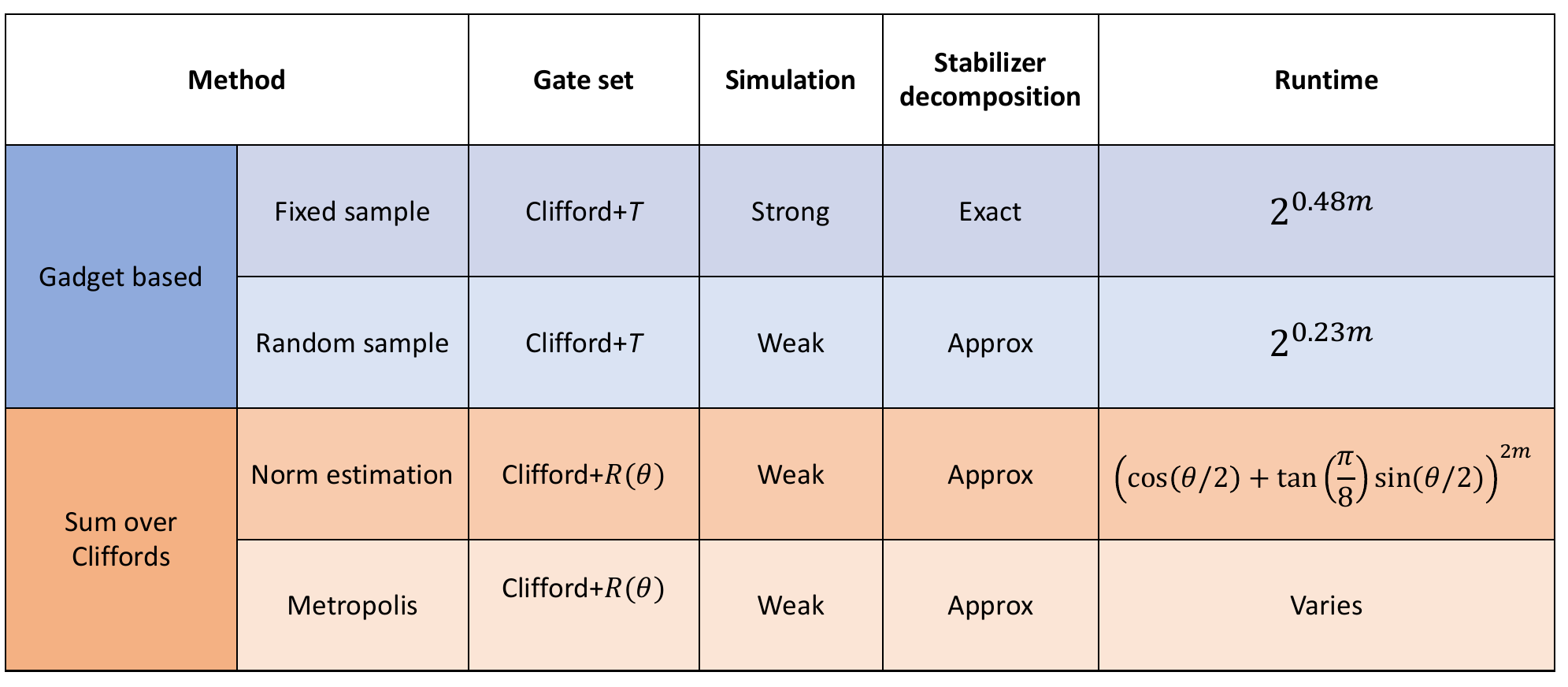}
	\caption{Summary of new simulation methods.
	For simplicity, here we restrict the
	attention to quantum circuits composed of Clifford gates and diagonal single-qubit gates
	$R(\theta)=\mathrm{diag}(1,e^{i\theta})$. The $T$-gate can be obtained as a special
	case $T=R(\pi/4)$. We consider strong and weak simulation tasks where the goal is to
	estimate a single output probability (with a small multiplicative error) and
	sample the output probability distribution (with a small  statistical error)
	respectively.
	The runtime scales exponentially with the non-Clifford gate count $m$
	and polynomially with the number of qubits and the Clifford gate count. For simplicity, here
	we ignore the polynomial prefactors. For a detailed description of our simulation methods, 
	see Section~\ref{sec:simulations}.}
	\label{summary_table}
\end{figure}

On the theory side, we establish some general  properties of the approximate stabilizer rank.
Our main tool is a Sparsification Lemma that shows how to 
convert a dense stabilizer decomposition of a given target state 
(that may contain all possible stabilizer states) to a sparse decomposition
that contains fewer stabilizer states. 
The lemma generalizes the method of random linear codes introduced  in Ref.~\cite{bravyi2016improved}
in the context of Clifford+$T$ circuits. 
It allows us to obtain sparse stabilizer decompositions for the output state of
more general quantum circuits directly without using magic state gadgets. 
Combining the Sparsification Lemma and  convex duality arguments, we relate
the approximate stabilizer rank of a state $\psi$ to a stabilizer fidelity $F(\psi)$ defined
as the maximum overlap between $\psi$ and stabilizer states. Central to these calculations is a new quantity called Stabilizer Extent, which quantifies, in an operationally relevant way, how non-stabilizer a state is. We give necessary and sufficient conditions under which the stabilizer fidelity is multiplicative under the tensor product. Finally, we propose a new strategy for proving lower bounds on the stabilizer rank
of the magic states which uses the machinery of ultra-metric matrices~\cite{MMM,NabenVarga}.

As a main application of our simulation algorithms we envision verification of noisy intermediate-size
quantum circuits~\cite{preskill2018quantum}
in the regime when a brute-force classical simulation may be
 impractical~\cite{aharonov2017interactive,morimae2016post,Jozsa2017}.
For example, a quantum circuit composed of Clifford gates and single-qubit $Z$-rotations
with angles  $\theta_1,\ldots,\theta_m$ can be efficiently simulated using our methods
in the regime when only a few of the angles $\theta_a$ are non-zero or if all the angles 
$\theta_a$ are small in magnitude, see Section~\ref{sec:sum_cliffords}.
By fixing the Clifford part of the circuit and varying the rotation angles $\theta_a$ one can
therefore interpolate between the regimes where the circuit output  can and cannot be verified 
classically. From the experimental perspective, single-qubit $Z$-rotations
are often the most reliable elementary operations~\cite{mckay2017efficient}.
Thus one should expect that the circuit output fidelity should not depend significantly on the
choice of the angles $\theta_a$.

The next section provides a more detailed overview of our results.

\section{Main results}
\label{detailed}
Recall that the Clifford group is a group of unitary $n$-qubit operators
generated by  single-qubit and two-qubit gates
from the set $\{H,S,CX\}$. Here $H$ is the Hadamard gate,
$S=|0\ra\la 0|+i|1\ra\la 1|$ is the phase shift gate, 
and CX=CNOT is the controlled-X gate.
Stabilizer states are $n$-qubit states of the form $|\phi\ra=U|0^n\ra$, where $U$ is a
Clifford operator. We also use $X_j,Y_j,Z_j$ to denote Pauli operators
acting on the $j$-th qubit. Below we also make use of the stabilizer formalism, and refer the unfamiliar reader to the existing literature~\cite{nielsen2002quantum}.

\subsection{Tools for constructing low-rank stabilizer decompositions}
\label{sec:rank_properties}
In this section we summarize our results pertaining to the stabilizer rank and describe methods of decomposing a state into a superposition of stabilizer states.  A reader interested only in the application for simulation of quantum circuits may wish to proceed to Sections~\ref{sec:subroutines}, \ref{sec:simulations}.

\begin{dfn}[\bf Exact stabilizer rank, $\mathbf{\chi}$~\cite{Bravyi16stabRank}]
Suppose $\psi$ is a pure $n$-qubit state. 
The  exact stabilizer rank $\chi(\psi)$ is the smallest integer $k$ such that 
$\psi$ can be written as 
	\begin{equation}
	\label{srank1}
	\vert  \psi \rangle  = \sum_{\alpha=1}^{k} c_\alpha \vert \phi_\alpha \rangle	,
	\end{equation}	
for some  $n$-qubit stabilizer states $\phi_\alpha$ and some complex coefficients $c_\alpha$. 
\end{dfn}
\noindent
By definition, $\chi(\psi)\ge 1$ for all $\psi$ and $\chi(\psi)=1$ iff $\psi$ is a stabilizer state.

\begin{dfn}[\bf Approximate stabilizer rank, $\mathbf{\chi_\delta}$~\cite{bravyi2016improved}]
Suppose $\psi$ is a pure $n$-qubit state such that $\|\psi\|=1$.
Let  $\delta>0$ be a precision parameter.
The approximate stabilizer rank $\chi_\delta(\psi)$ is the smallest integer $k$ such that 
$\| \psi -\psi'\|\le \delta$ for some state $\psi'$ with exact stabilizer rank $k$.
\end{dfn}
\noindent
Note that this definition differs slightly from the one from Ref.~\cite{bravyi2016improved} which is based on the fidelity.
 Our first result provides an upper bound on the approximate stabilizer rank.
\begin{theorem}[\bf Upper bound on $\mathbf{\chi_\delta}$]
\label{thm:randomCvec}
Let $\psi$ be a normalized $n$-qubit state with a stabilizer decomposition 
$\ket{\psi} = \sum_{\alpha=1}^k c_\alpha 	|\phi_\alpha\ra$
where $\ket{\phi_\alpha}$ are normalized stabilizer states and $c_\alpha\in \mathbb{C}$.  Then
\begin{equation}
\chi_\delta(\psi) \leq 1+ \| \vec{c} \|_1^2 / \delta^2 .
\end{equation}
Here $\| \vec{c}\|_1 \equiv  \sum_{\alpha=1}^k |c_\alpha|$. 
\end{theorem}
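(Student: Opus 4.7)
The plan is to prove this via a probabilistic (random sparsification) argument: show that a randomly sampled sub-sum of the given decomposition approximates $\psi$ well in expectation, and then invoke the probabilistic method to extract an approximating state of the desired rank.

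First I would set up the sampling distribution. Let $L = \|\vec{c}\|_1$ and define the probability distribution $p_\alpha = |c_\alpha|/L$ on $\{1,\ldots,k\}$. For each $\alpha$ introduce the rescaled (and phase-adjusted) stabilizer state
\begin{equation}
|\Omega_\alpha\ra \;=\; L\,\frac{c_\alpha}{|c_\alpha|}\,|\phi_\alpha\ra,
\end{equation}
which is still a (scalar multiple of a) stabilizer state with $\||\Omega_\alpha\ra\|=L$. A direct computation shows $\EX_{\alpha\sim p}|\Omega_\alpha\ra = \sum_\alpha p_\alpha L \frac{c_\alpha}{|c_\alpha|}|\phi_\alpha\ra = \sum_\alpha c_\alpha|\phi_\alpha\ra = |\psi\ra$, so $|\Omega_\alpha\ra$ is an unbiased estimator of $|\psi\ra$.

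Next I would construct the candidate approximation. Draw $N$ i.i.d.\ samples $\alpha_1,\ldots,\alpha_N$ from $p$ and set
\begin{equation}
|\tilde\psi\ra = \frac{1}{N}\sum_{i=1}^N |\Omega_{\alpha_i}\ra .
\end{equation}
This is, by construction, a superposition of at most $N$ stabilizer states (collapsing coincidences only reduces the count). Because the samples are independent and $\EX|\tilde\psi\ra = |\psi\ra$, the standard variance calculation for vector-valued random variables gives
\begin{equation}
\EX\,\|\,|\tilde\psi\ra - |\psi\ra\,\|^2 \;=\; \frac{1}{N}\bigl(\EX\||\Omega_\alpha\ra\|^2 - \||\psi\ra\|^2\bigr) \;=\; \frac{L^2 - 1}{N} \;\le\; \frac{\|\vec{c}\|_1^2}{N}.
\end{equation}

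Finally, choose $N=\lceil \|\vec{c}\|_1^2/\delta^2\rceil$, so that the expected squared error is at most $\delta^2$. By the probabilistic method, there exists a realization of the sampling for which $\|\,|\tilde\psi\ra - |\psi\ra\,\|\le \delta$. That realization is a superposition of at most $N$ stabilizer states, so $\chi_\delta(\psi)\le N \le \|\vec{c}\|_1^2/\delta^2$ (in the nontrivial regime; one can absorb the ceiling into the standard convention). The only mildly subtle step is the variance identity for complex vector-valued random variables — one should expand $\|\tilde\psi-\psi\|^2 = \la\tilde\psi|\tilde\psi\ra - 2\,\mathrm{Re}\,\la\tilde\psi|\psi\ra + \la\psi|\psi\ra$, take expectations, and use independence of the $\alpha_i$ to kill cross terms; this is routine but worth doing carefully. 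No deep obstacle is expected; the whole argument is a clean application of Monte Carlo sparsification.
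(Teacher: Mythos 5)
Your proposal is correct and is essentially the paper's own argument: the paper proves this theorem via its Sparsification Lemma, which samples $k$ i.i.d.\ stabilizer states with probabilities $p_\alpha=|c_\alpha|/\|\vec c\|_1$ (with phases absorbed into the states), forms the equal-weight average $\frac{\|\vec c\|_1}{k}\sum_\alpha|\omega_\alpha\ra$, computes the expected squared error $\approx\|\vec c\|_1^2/k$ exactly as in your variance identity, and invokes the probabilistic method. The only differences are cosmetic (your exact value $(\|\vec c\|_1^2-1)/N$ versus the paper's stated $\|\vec c\|_1^2/k$, and the integrality of $N$, which both treatments gloss over equally harmlessly).
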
	
We note that the stabilizer decomposition $\ket{\psi} = \sum_{\alpha=1}^k c_\alpha 
|\phi_\alpha\ra$
in the statement of the theorem 
does not have to be
optimal. For example,  it may include all  stabilizer states. The proof of the theorem is provided in Section~\ref{Sec_approx_stab_rank}. It is constructive in the sense that it provides a method of calculating
a state $\psi'$ which is a superposition of $\chi'\approx \delta^{-2}  \| \vec{c} \|_1^2$ stabilizer states 
such that $\|\psi'-\psi\|\le \delta$. Such a state $\psi'$ is obtained 
using a randomized sparsification method. It works by sampling
$\chi'$ stabilizer states $\phi_\alpha$ from the given stabilizer decomposition of $\psi$ at random
with probabilities proportional to $|c_\alpha|$. The state $\psi'$ is then defined as
a superposition of the sampled states $\phi_\alpha$ with equal weights, 
see the Sparsification Lemma and related discussion in Section~\ref{Sec_Lemma_Proof}. The theorem motivates the following definition. 
\begin{dfn}[\bf Stabilizer Extent, $\xi$]
	\label{Cstar_states}
Suppose $\psi$ is a normalized $n$-qubit state. 
Define the stabilizer extent $\xi(\psi)$ as the minimum of $\| \vec{c} \|^2_1$ over all
stabilizer decompositions  $|\psi\ra=\sum_{\alpha=1}^k c_\alpha |\phi_\alpha\ra$
where $\phi_\alpha$ are normalized stabilizer states. 
\end{dfn}
The theorem immediately implies that 
\begin{equation}
\label{cstar}
\chi_\delta(\psi) \leq 1+\xi(\psi) / \delta^2.
\end{equation}
While it is difficult to compute or prove tight bounds for the exact or approximate stabilizer rank, 
we find that $\xi(\psi)$ is a more amenable quantity that can 
be calculated   for many states  $\psi$ relevant in the context of quantum circuit
simulation. 
In particular, we prove 
\begin{prop}[\textbf{Multiplicativity of Stabilizer Extent}]
	\label{multi}
Let $\{ \psi_1,\psi_2,\ldots, \psi_L \}$ be any set of states 
such that each state $\psi_j$ describes a system of at most three qubits.
	Then
	\begin{equation}
	\xi( \psi_1 \otimes \psi_2  \otimes  \ldots \otimes \psi_L) = \prod_{j=1}^L \xi(\psi_j)  .
	\end{equation}
	\label{thm:prod}
\end{prop}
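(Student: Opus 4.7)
The plan is to establish the two inequalities $\xi(\psi_1\otimes\cdots\otimes\psi_L) \le \prod_j \xi(\psi_j)$ and $\xi(\psi_1\otimes\cdots\otimes\psi_L) \ge \prod_j \xi(\psi_j)$ separately. The first is immediate: given near-optimal decompositions $|\psi_j\ra = \sum_\alpha c^{(j)}_\alpha |\phi^{(j)}_\alpha\ra$ attaining $\bigl(\sum_\alpha |c^{(j)}_\alpha|\bigr)^2 = \xi(\psi_j)$, their tensor product is itself a stabilizer decomposition of $\bigotimes_j \psi_j$ --- because tensor products of stabilizer states are stabilizer states --- and the $\ell_1$-norm of the product coefficient vector factorizes as $\prod_j \sum_\alpha |c^{(j)}_\alpha|$.

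For the reverse inequality I would use convex duality. The square root $A(\psi) := \sqrt{\xi(\psi)}$ is an atomic norm whose generating set is, modulo phase, the finite collection of normalized $n$-qubit stabilizer states. Its dual norm is
\[
\|\omega\|_\ast \;:=\; \max_\phi |\la \phi | \omega \ra|,
\]
where $\phi$ ranges over normalized stabilizer states, and standard duality for atomic norms gives $A(\psi) = \max\bigl\{\mathrm{Re}\,\la \omega | \psi \ra \,:\, \|\omega\|_\ast \le 1\bigr\}$. For each factor I would pick a dual-optimal witness $\omega_j$ attaining $\la \omega_j | \psi_j \ra = A(\psi_j)$ with $\|\omega_j\|_\ast = 1$, and consider the product witness $\omega = \bigotimes_j \omega_j$, which by construction satisfies $\la \omega | \bigotimes_j \psi_j \ra = \prod_j A(\psi_j)$. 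Plugging $\omega$ into the dual formulation would immediately yield $A\bigl(\bigotimes_j \psi_j\bigr) \ge \prod_j A(\psi_j)$, provided one can verify $\|\omega\|_\ast \le 1$, that is
\[
\max_\Phi \;\bigl|\la \Phi \,|\, \textstyle\bigotimes_j \omega_j \ra\bigr|
\;\le\; \prod_j \max_{\phi_j} |\la \phi_j | \omega_j \ra|,
\]
where $\Phi$ ranges over \emph{all} stabilizer states on the composite system, including those entangled across the factor partition.

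This last inequality is the heart of the proposition and is where the three-qubit restriction enters essentially. The obstacle is precisely that $\Phi$ may be entangled across the partition into the $\psi_j$, so the desired bound does not follow from any naive factoring of the inner product; indeed, an analogous multiplicativity statement is not expected to hold once the factors are allowed to be sufficiently large, so the hypothesis that each $\psi_j$ acts on at most three qubits cannot be dropped. My plan for this step is to exploit the fact that on $\le 3$ qubits the dual unit ball $\{\omega : \|\omega\|_\ast \le 1\}$ is a low-dimensional polytope whose extreme points admit an explicit classification, since the sets of stabilizer states on one, two and three qubits are small and highly symmetric. The tensor-product inequality then reduces to a finite (and likely computer-assisted) check over these extremal witnesses, packaged as a lemma that controls $|\la \Phi | \omega_1\otimes\cdots\otimes\omega_L\ra|$ for any entangled stabilizer $\Phi$ in terms of products of the single-factor overlaps $\max_{\phi_j} |\la \phi_j | \omega_j \ra|$. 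Carrying out this enumeration and verifying the inequality for every extremal combination is the main technical obstacle.
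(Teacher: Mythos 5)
Your overall skeleton coincides with the paper's: the easy direction is submultiplicativity via tensoring optimal decompositions, and the hard direction is a duality argument in which one tensors optimal dual witnesses and must then show that the dual quantity does not increase under the tensor product --- in your notation $\|\omega_1\otimes\cdots\otimes\omega_L\|_\ast \le \prod_j\|\omega_j\|_\ast$, which is exactly multiplicativity of the stabilizer fidelity $F$ on the witnesses. You correctly isolate this as the heart of the matter, but your plan for proving it has a genuine gap. First, the dual unit ball $\{\omega : \max_\phi|\langle\phi|\omega\rangle|\le 1\}$ is not a polytope: because the atoms $e^{i\theta}\phi$ carry arbitrary phases, it is an intersection of complex ``slabs'' $\{|\langle\phi|\omega\rangle|\le 1\}$ and its extreme points form continuous families (already for one qubit and a single constraint the extreme set is a circle), so there is no finite list of extremal witnesses over which to run a computer check. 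Second, even if a two-factor inequality were verified for all pairs of $\le 3$-qubit witnesses, the inequality you need involves stabilizer states $\Phi$ entangled across all $L$ blocks for arbitrary $L$; an induction on $L$ would require the class of witnesses for which multiplicativity holds to be closed under tensor product, since after one step $\omega_1\otimes\omega_2$ lives on up to six qubits and falls outside any class you enumerated. Your proposal supplies neither the finite reduction nor the closure property, so the crucial step remains unproven.

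The paper closes exactly this hole structurally rather than by enumeration. It defines the quantities $F_m(\psi)=2^{-m/2}\max_\Pi\langle\psi|\Pi|\psi\rangle$ over rank-$2^m$ stabilizer projectors and calls $\psi$ \emph{stabilizer-aligned} if $F_m(\psi)\le F_0(\psi)$ for all $m$. An analytic argument (finding a nontrivial Pauli with expectation at least $\sqrt{(2^m-1)/(4^m-1)}$ on the projected state) shows $F_m\le F_0$ automatically for $m\le 3$, hence every state of at most three qubits --- in particular every optimal witness $\omega_{\star,j}$ --- is stabilizer-aligned, with no case check. Then, using the classification of entanglement in tripartite stabilizer states to put an arbitrary bipartite stabilizer projector into a normal form, plus Cauchy--Schwarz, the paper proves that $F$ is multiplicative on stabilizer-aligned states \emph{and} that this class is closed under tensor product; this closure is what lets the argument handle all $L$ factors and all entangled $\Phi$ at once. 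If you want to complete your proof, you should replace the proposed enumeration with a lemma of this kind (or an equivalent structural bound on $|\langle\Phi|\omega_1\otimes\cdots\otimes\omega_L\rangle|$ that is stable under adding factors); as it stands, the key inequality is asserted but not within reach of the method you describe.
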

\noindent
This shows that the upper bound of Theorem~\ref{thm:randomCvec} is multiplicative under tensor product in the case of few-qubit states. It remains open whether $\xi$ is multiplicative on arbitrary collections of states.  

The proof of Proposition~\ref{thm:prod} is provided in Section \ref{Sec_Cstar_multi}.  It uses the fact that standard convex duality provides a characterization of $\xi$ in terms of the following quantity. 
\begin{dfn}[\bf Stabilizer Fidelity, $F$]
The stabilizer fidelity, $F(\psi)$, of a state $\psi$ is 
\begin{equation}
F( \psi ) = \mathrm{max}_{\phi} |\bk{\phi}{\psi}|^2,
\label{eq:stabfid}
\end{equation}
where the maximization is over all normalized stabilizer states $\phi$.
\end{dfn}
Proposition \ref{thm:prod} is obtained as a consequence of new results concerning multiplicativity of the stabilizer fidelity. In particular, we apply the classification of entanglement
in three-partite stabilizer states~\cite{ghz}
to derive conditions for the multiplicativity  of $F(\psi)$. 
More precisely, we define a set of quantum states $\mathcal{S}$ which
we call {\em stabilizer aligned} such that 
$F(\phi \otimes \psi) =F(\phi)F(\psi)$ whenever $\phi,\psi\in \mathcal{S}$. 
A state  $\psi$ is called stabilizer aligned
if  the overlap between $\psi$ and any stabilizer projector of rank $2^k$
is at most $2^{k/2} F(\psi)$. 
Remarkably, the set of stabilizer aligned states is closed under tensor product, that is $\phi\otimes \psi \in \mathcal{S}$ whenever $\phi,\psi\in \mathcal{S}$.
Moreover, we show that the stabilizer fidelity is not multiplicative for all states
$\phi \notin \calS$. That is, for any $\phi \notin \calS$ there exists a state $\psi$
such that $F(\phi\otimes \psi)>F(\phi) F(\psi)$. In that sense, our results provide
necessary and sufficient conditions under which the stabilizer fidelity is multiplicative. 

Proposition \ref{thm:prod} enables computation of $\xi(\psi)$ if $\psi$ is a tensor product of
few-qubit states (that involve at most three qubits).
We now describe another large subclass of states $\psi$ relevant for quantum circuit simulation for which we are able to compute $\xi$. To describe these states, 
recall that any diagonal $t$-qubit gate $V$ can be performed using 
a state-injection gadget that contains only stabilizer operations and consumes an ancillary state
$\ket{V}=V\ket{+}^{\otimes t}$ (see the discussion in Section~\ref{sec:simulations} and Figure~\ref{fig_injection}). Here and below $|+\rangle \equiv (|0\rangle + |1\rangle)/\sqrt{2}$.
The gadget also involves a computational basis measurement over $t$ qubits. Let $\vec{x}\in \{0,1\}^t$ be a string of measurement outcomes.
The desired gate $V$ is performed whenever $\vec{x}=0^t$.
However, given some other outcome $\vec{x}\ne 0^t$,
the gadget implements a gate  $V_\vec{x}=C_\vec{x} V$ where 
\[
C_\vec{x}=\prod_{j\, : \, x_j=1} VX_jV^\dag ,
\]
is the required correction, where $X_j$ is the Pauli $X$ operator acting on the jth qubit.
A special class of unitaries are those where the correction $C_\vec{x}$ is always a Clifford operator.
In this case a unitary gate $V$ is equivalent to the preparation of the ancillary state $|V\ra$
modulo stabilizer operations. 
This motivates the following definition.

\begin{dfn}[\bf Clifford magic states]
\label{Dfn_CMS}
Let $V$ be a diagonal $t$-qubit unitary such that $V X_j V^\dagger$ is a Clifford operator for all 
$j$. Such unitary $V$ is said to belong to the $3^{\mathrm{rd}}$ level of the Clifford hierarchy (see e.g. Ref.~\cite{CliffHier}).   The ancillary state  $\vert V \rangle \equiv V \vert +\rangle^{\otimes t}$ is called
a Clifford magic state. 
\end{dfn}
\noindent
For example, $|T\ra^{\otimes m}$   is a Clifford magic state for any integer $m$.
Note that in general the set of Clifford magic states is closed under tensor product. 
\begin{prop}
Let $\psi$ be a Clifford magic state. Then 
$\xi(\psi)=F(\psi)^{-1}$.
	\label{thm:clifmagic}
\end{prop}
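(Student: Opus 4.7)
The plan is to prove the two inequalities $\xi(\ket V)\geq F(\ket V)^{-1}$ and $\xi(\ket V)\leq F(\ket V)^{-1}$ separately. The first is not specific to Clifford magic states, so I would dispatch it quickly: given any normalized decomposition $\ket\psi=\sum_\alpha c_\alpha\ket{\phi_\alpha}$, the identity $1=\langle\psi|\psi\rangle=\sum_\alpha c_\alpha\langle\psi|\phi_\alpha\rangle$ combined with the triangle inequality and $|\langle\psi|\phi_\alpha\rangle|\leq\sqrt{F(\psi)}$ yields $1\leq\|\vec{c}\|_1\sqrt{F(\psi)}$; minimizing over decompositions gives $\xi(\psi)\geq F(\psi)^{-1}$ for every normalized state.

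For the matching upper bound in the Clifford magic case I would exploit an abelian Clifford symmetry of $\ket V$. Define $P_j:=VX_jV^\dagger$, which are Clifford operators by hypothesis; they mutually commute because $[X_i,X_j]=0$, they square to $I$, and each satisfies $P_j\ket V=VX_j\ket{+}^{\otimes t}=\ket V$. Let $\calG$ be the abelian group they generate. I would next show that the common $+1$ eigenspace of $\calG$ is one-dimensional and spanned by $\ket V$: the condition $P_j\ket\psi=\ket\psi$ for all $j$ rewrites as $X_j V^\dagger\ket\psi=V^\dagger\ket\psi$, forcing $V^\dagger\ket\psi\propto\ket{+}^{\otimes t}$ and hence $\ket\psi\propto\ket V$. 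Consequently the group average $\Pi_\calG:=|\calG|^{-1}\sum_{g\in\calG} g$ equals the rank-one projector $\ket V\!\bra V$.

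Now pick any stabilizer state $\ket{\phi^*}$ achieving $|\langle\phi^*|V\rangle|^2=F(\ket V)$, with its global phase fixed so that $\langle V|\phi^*\rangle=\sqrt{F(\ket V)}$. Applying $\Pi_\calG$ to $\ket{\phi^*}$ gives $|\calG|^{-1}\sum_g g\ket{\phi^*}=\sqrt{F(\ket V)}\,\ket V$. Each $g\ket{\phi^*}$ is a stabilizer state because $g$ is a Clifford unitary, so rearranging produces the explicit stabilizer decomposition $\ket V=\sum_{g\in\calG}(|\calG|\sqrt{F(\ket V)})^{-1}\,g\ket{\phi^*}$ with all coefficients equal, real, and positive. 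The $\ell_1$ norm of the coefficient vector is exactly $F(\ket V)^{-1/2}$, so $\xi(\ket V)\leq F(\ket V)^{-1}$.

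The main technical point that needs care is verifying the common $+1$ eigenspace of $\calG$ is one-dimensional -- this is what makes $\Pi_\calG$ rank one and drives the whole construction -- but the short reduction to the stabilizer group of $\ket{+}^{\otimes t}$ outlined above handles it cleanly. Combining the two inequalities then yields $\xi(\ket V)=F(\ket V)^{-1}$ as claimed.
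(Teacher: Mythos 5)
Your proposal is correct and follows essentially the same route as the paper: the lower bound $\xi(\psi)\ge F(\psi)^{-1}$ by taking the inner product with $\psi$ and applying the triangle inequality, and the upper bound by averaging over the abelian Clifford group generated by $VX_jV^\dagger$, whose group average is the rank-one projector $\kb{V}{V}$, applied to a maximally overlapping stabilizer state. The only difference is cosmetic: you explicitly verify that the common $+1$ eigenspace is one-dimensional, a fact the paper's proof uses implicitly when writing $\kb{\psi}{\psi}=|\calQ|^{-1}\sum_{q\in\calQ}q$.
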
 
The proof of Proposition \ref{thm:clifmagic} is provided in Section \ref{Sec_Clifford_Magic_States} where it is extended to a slightly broader class of $\psi$. 

We note that $|T^{\otimes m}\rangle$ is  a Clifford magic state \textit{and} a product state and so either Proposition \ref{thm:prod} or Proposition \ref{thm:clifmagic} could be used along with Eq.~(\ref{cstar}) to upper bound its approximate stabilizer rank. In this way one can easily reproduce the upper bound obtained in Ref.~\cite{bravyi2016improved},
namely, 
\begin{equation}
\chi_\delta(T^{\otimes m}) \le O\left(\delta^{-2} \cos{(\pi/8)}^{-2m}\right).
\label{eq:chiT}
\end{equation}
This stands in sharp contrast with the best known lower bound $\chi(T^{\otimes m})=\Omega(m^{1/2})$ established in Ref.~\cite{Bravyi16stabRank}. It should be expected that the stabilizer rank (either exact or approximate)
of the magic states $T^{\otimes m}$ grows exponentially with $m$ 
in the limit $m\to \infty$. Indeed, the polynomial scaling of $\chi_{\delta}(T^{\otimes m})$ with $m$ for a suitably small constant $\delta$, or $\chi(T^{\otimes m})$, would entail complexity theoretic heresies such as BQP=BPP,  or P=NP~\footnote{By simulating a postselective quantum circuit one could solve 3-SAT using a polynomial number of T-gates, see e.g., Ref. \cite{alibaba}.}.  
Remarkably, we have no techniques for proving unconditional super-polynomial
lower bounds.
Here we made partial progress by solving a simplified problem
where stabilizer decompositions of $T^{\otimes m}$ are restricted to
certain product states. For this simplified setting
we prove a tight lower bound on the approximate stabilizer rank of $T^{\otimes m}$
matching the upper bound of Ref.~\cite{bravyi2016improved}.
 To state our result it is more
convenient to work with the magic state $|H\ra=\cos{(\pi/8)}|0\ra + \sin{(\pi/8)}|1\ra$
which is equivalent to $|T\ra$ modulo Clifford gates.
Ref.~\cite{bravyi2016improved} showed that $|H^{\otimes m}\ra$ admits
an approximate stabilizer decomposition $|H^{\otimes m}\ra \approx \sum_{\alpha=1}^k c_\alpha |\phi_\alpha\ra$
where $k\sim \cos{(\pi/8)}^{-2m}$ and $\phi_\alpha$ are product
stabilizer states of the form 
\begin{equation}
|\tilde{x}\rangle =|\tilde{x}_1\rangle \otimes \cdots \otimes |\tilde{x}_m\rangle
\quad \mbox{where} \quad
|\tilde{0}\rangle=|0\rangle \quad \mbox{and} \quad
|\tilde{1}\rangle =|+\rangle. \label{eqn:Hbasis}
\end{equation}
Here $x_i\in \{0,1\}$.
These are the stabilizer states that achieve the maximum overlap with $|H^{\otimes m}\ra$, see Ref.~\cite{bravyi2016improved}.  Here we prove the following lower bound.
\begin{prop}
\label{prop:lower_bound}
Suppose $S\subseteq \{0,1\}^m$ is an arbitrary subset
and $\psi$ is an arbitrary linear combination of states
$|\vec{\tilde{x}}\rangle$ as in \eqref{eqn:Hbasis} with $\vec{x}\in S$ such that $\|\psi\|=1$.  Then 
\begin{equation}
\label{eq1ultra}
|S|\ge |\langle H^{\otimes m}|\psi\rangle |^2 \cdot \cos{(\pi/8)}^{-2m}.
\end{equation}
\end{prop}
The proof of this result which is given in Section \ref{Sec_ultra} makes use of the machinery of ultra-metric  matrices~\cite{MMM,NabenVarga}. We hope that these techniques may lead to further progress   on lower bounding the stabilizer rank.

We conclude this section by summarizing our results pertaining to the exact stabilizer rank.
Prior work focused exclusively on finding the stabilizer rank of $m$-fold tensor products
of magic state $|T\ra$.
A surprising and counter-intuitive result of Ref.~\cite{Bravyi16stabRank}
is that  for small number of magic states ($m\le 6$) the stabilizer rank $\chi(T^{\otimes m})$
scales linearly with $m$. Meanwhile,   $\chi(T^{\otimes m})$ is expected to 
scale exponentially with $m$ in the  limit $m\to \infty$.
Using a numerical search we observed a sharp jump from $\chi(T^{\otimes 6})=7$
to $\chi(T^{\otimes 7})=12$ indicating a transition from the linear to the exponential scaling
at $m=7$. This poses the question of whether other magic states have a linearly scaling stabilizer rank 
(until some critical $m$ is reached) or if $|T\ra$ is an exceptional state due to its special symmetries.
Here we show that the linear scaling for small $m$ is a generic feature.  
\begin{theorem}[\bf Upper bound on $\mathbf{\chi}$]
	\label{Thm_many_copies}
	Let $\psi$ be an $n$-qubit state and then for all $m \leq 5$ we have
	\begin{equation}
	\chi( \psi^{\otimes m} ) \leq \binom{2^n + m -1}{m}
	\end{equation}
	where the round brackets denote the binomial coefficient.
\end{theorem}
For example, this result shows that for any diagonal single-qubit unitary $V$ the associated magic state $\ket{V}$
obeys $\chi(\ket{V}^{\otimes m})\le m+1$ for $m \leq 5$.  For larger $m$, an exponential scaling is expected.  
The proof of Theorem~\ref{Thm_many_copies} (given in Section \ref{Sec_Symmetric_states}) exploits well-known properties of the symmetric subspace and a recently established fact that $n$-qubit stabilizer states form a 3-design~\cite{Webb16,kueng15}. 

\subsection{Subroutines for manipulating low-rank stabilizer decompositions}
\label{sec:subroutines}
Suppose $U$ is a quantum circuit acting on $n$ qubits.
We consider a classical simulation task where the goal is to
sample a bit string $x\in \{0,1\}^n$ from the probability
distribution $P_U(x)=|\langle x|U|0^n\rangle|^2$ with a small statistical error. 

Suppose we are given an approximate stabilizer decomposition of a state $U\ket{0^n}$:
\be
\label{approxU}
\|U|0^n\ra - |\psi\ra\|\le \delta, \qquad |\psi\ra = \sum_{\alpha=1}^k b_\alpha U_\alpha |0^n\ra
\ee
for some coefficients $b_\alpha$ and some Clifford circuits $U_\alpha$. In Section \ref{algorithms} we give algorithms for the following tasks. These algorithms are the main subroutines used in our quantum circuit simulators.
\begin{enumerate}
\item[\bf (a)] Sample $\vec{x}\in \{0,1\}^n$ from the probability distribution
\be
\label{P(x)normalized}
P(\vec{x})=\frac{|\la \vec{x}|\psi\ra|^2}{\|\psi\|^2}.
\ee
\item[\bf (b)] Estimate the norm $\|\psi\|^2$ with a small multiplicative error.
\end{enumerate}
Note that if $\delta$ is small then $P(\vec{x})$ approximates the true  output distribution $P_U(\vec{x})=|\la \vec{x}|U|0^n\ra|^2$ with a small error. Indeed, Eq.~(\ref{approxU}) gives $\| P - P_U\|_1 \le O(\delta)$.

The tasks (a,b)  are closely related. Using the chain rule for conditional probabilities
one can  reduce the sampling task 
to estimation of marginal probabilities of $P(\vec{x})$.
Any marginal probability can be expressed 
as  $\| \Pi \psi \|^2/\|\psi\|^2$, where $\Pi$
is a tensor product of projectors $|0\ra\la 0|$, $|1\ra\la 1|$, and the identity operators. Note that such projectors map stabilizer states to stabilizer states.
Thus $\Pi |\psi\ra$ admits a stabilizer decomposition with $k$ terms 
that can be easily obtained from Eq.~(\ref{approxU}).
Accordingly, 
task~(a) reduces to a sequence of norm estimations for low-rank stabilizer superpositions,
see Section~\ref{Sec_fast_norm} for details.

Section~\ref{clifford_sim} describes a fast Clifford simulator that transforms a
stabilizer state $U_\alpha|0^n\ra$ 
into a certain canonical form 
which we call a CH-form. 
It is analogous to the stabilizer
tableaux~\cite{aaronson04improved} but includes information about the global phase of a state. 
This allows us to simulate each circuit $U_\alpha$ 
in the superposition Eq.~(\ref{approxU})
independently without destroying information about 
the relative phases.  Our C++ implementation of the simulator
performs approximately $5\times 10^6$ Clifford gates per second
for $n=64$ qubits on a laptop computer.

Section~\ref{heuristic} describes a heuristic algorithm for the task~(a).
We construct a Metropolis-type Markov chain such that $P(x)$ is the unique steady distribution of the chain
(under mild additional assumptions).
We show how to implement each Metropolis step in time $O(k n)$.
Unfortunately, the mixing time of the chain is generally unknown.

Section~\ref{Sec_fast_norm} gives an algorithm for the task~(b).
It exploits the fact that the inner product $\la \phi|\phi'\ra$  between two $n$-qubit stabilizer
states $\phi,\phi'$ can be computed exactly in time $O(n^3)$,
see Ref.~\cite{garcia2012efficient,bravyi2016improved}.
We adapt this inner product algorithm to the CH-form 
of stabilizer states in Section~\ref{Sec_fast_norm}.
The naive method of computing the norm relies on the identity
$\|\psi\|^2=\sum_{\alpha,\beta=1}^k b_\alpha^* b_\beta \la \phi_\alpha|\phi_\beta\ra$, where
$|\phi_\alpha\ra=U_\alpha|0^n\ra$. Evaluating all cross terms using the inner 
product algorithm would take time $O(k^2 n^3)$ which
is impractical for large $k$.
Instead, Ref.~\cite{bravyi2016improved} proposed 
a method of estimating, rather than evaluating, the norm.  
It works by computing inner products between $\psi$ and random stabilizer states
drawn from the uniform distribution. 
This method has runtime $O(k n^3)$
offering a significant speedup in the relevant regime of large rank decompositions. 
Here we propose an improved version of this norm estimation 
method combining both conceptual  and implementation improvements.
The new version of the norm estimation subroutine achieves approximately 
50X speedup compared with Ref.~\cite{bravyi2016improved}.

Section~\ref{Sec_fast_norm}  also
describes a rigorous  algorithm  for the task~(a)
based on the norm estimation and the chain rule for conditional probabilities.
It has runtime $O(k n^6)$ which quickly becomes impractical.
However,  if our goal is to sample only $w$ bits from $P(x)$, 
the runtime is only  $O(k n^3 w^3)$.
Thus the sampling method based on the norm estimation may be practical for small values of $w$.

\subsection{Simulation algorithms}
\label{sec:simulations}
Here we describe how to combine ingredients from previous sections to obtain classical simulation algorithms for quantum circuits.  We consider a circuit
\begin{equation}
U=D_m V_mD_{m-1}V_{m-1}\ldots D_1 V_1 D_0
\label{eq:circuit}
\end{equation}
acting on input state $|0^n\rangle$, where $\{D_j\}$ are Clifford circuits and $\{V_j\}$ are non-Clifford gates.  We discuss three different methods:  gadget-based simulation (using either a fixed-sample or random-sample method as described below) and sum-over-Cliffords simulation.  

Let us first summarize the simulation cost of different methods. The gadget-based methods from Refs.~\cite{Bravyi16stabRank, bravyi2016improved} can be used to simulate quantum circuits Eq.~\eqref{eq:circuit} where $\{V_j\}$ are single-qubit T gates. Using the (random-sample) gadget-based method, the asymptotic cost of sampling from a distribution $\delta$-close in total variation distance to the output distribution $P_U(x)=|\langle x|U|0^n\rangle|^2$ is
\begin{equation}
\tilde{O}\left(\chi_{\delta}\left(|T^{\otimes m}\rangle\right)\right)\leq \tilde{O}\left(\delta^{-2} \xi\left(|T^{\otimes m}\rangle\right)\right)=\tilde{O}\left(\delta^{-2}\left(\cos(\pi/8)\right)^{-2m}\right),
\label{eq:CTscaling}
\end{equation}
where we used Theorem \ref{thm:randomCvec} and Proposition~\ref{thm:prod}, and the $\tilde{O}$-notation suppresses a factor polynomial in $m$, $n$, and $\log(\delta^{-1})$, see Ref.~\cite{bravyi2016improved} for details.

We will see how the gadget-based approach can be applied in a slightly more general setting where the circuit contains diagonal gates from the third level of the Clifford hierarchy. Then we introduce the sum-over-Cliffords simulation method which can be applied much more generally.  The cost of $\delta$-approximately sampling from the output distribution $P_U$ for the circuit Eq.~\eqref{eq:circuit} using the sum-over-Cliffords method can be upper bounded as
\begin{equation}
\tilde{O}\bigg(\delta^{-2} \prod_{j=1}^{m} \xi(V_j)\bigg)
\label{eq:zrot}
\end{equation}
where the definition of $\xi$ is extended to unitary matrices in a natural way (see below for a formal definition). For example, if each non-Clifford gate is a single-qubit diagonal rotation of the form $V_j=R(\theta_j)=e^{-i(\theta_j/2) Z}$ with $\theta_j\in [0, \pi/2)$ then we will see that $\xi(V_j)=\xi(V_j|+\rangle)$ and the simulation cost is
\[
\tilde{O}\bigg(\delta^{-2} \prod_{j=1}^{m} \xi(V_j|+\rangle)\bigg)=\tilde{O}\bigg(\delta^{-2} \prod_{j=1}^{m} \left(\cos(\theta_j/2)+\tan(\pi/8)\sin(\theta_j/2)\right)^2\bigg).
\]
In the case $\theta_j=\pi/4$ where all non-Cliffords are $T$ gates, we see that the sum-over-Cliffords method achieves the same asymptotic cost Eq.~\eqref{eq:CTscaling} as the gadget-based method from Ref.~\cite{bravyi2016improved}. However the sum-over-Cliffords method is generally preferred because it is simpler to implement and may be slightly faster, as it manipulates stabilizer states of fewer qubits.

\subsubsection{Gadget-based methods}
\label{sec:gadgetbased}
We begin by reviewing the gadget-based methods for simulating circuits expressed over the Clifford+T gate set. A gadget-based simulation directly emulates the operation of a quantum computer that can implement Clifford operations and has access to a supply of magic states.

It is well known that one can perform such a gate on a quantum computer using a state-injection gadget with classical feedforward dependent on measurement outcomes.  In particular, a $t$-qubit gate $V$ can be implemented by a gadget consuming a magic state $\ket{V}=V \ket{+^ t}$, see  Fig.~\ref{fig_injection} for an example. Let $x\in \{0,1\}^t$ be the measurement outcome.  The gadget implements the desired gate $V$ whenever $x=0^t$.  Otherwise, if  ${x} \neq {0^t}$, the gadget implements a gate $V_x=C^\dagger_x V$ where $C_x$ is the required correction.  If $V$ is in the third level of the Clifford hierarchy, the correction $C_x$ is always a Clifford operator and $\ket{V}$ is a Clifford magic state (recall Definition~\ref{Dfn_CMS}).  Formally, postselecting on outcome $x=0^t$ gives
\begin{equation}
V|\psi\rangle= 2^{t/2} (\id \otimes \bra{0}^{\otimes t}) C^{\prime} |\psi\rangle  \ket{V},
\label{eq:singlegate}
\end{equation}
where $C^{\prime}=\left(\prod_{a=1}^{t} \mathrm{CNOT}_{a,a+t}\right)$ is a Clifford unitary.

Now let $U$ from Eq.~\eqref{eq:circuit} be the full circuit to be simulated and suppose $V_j$ is a diagonal $t_j$-qubit gate. Write $\tau=t_1+t_2\ldots+t_m$. If we replace each non-Clifford gate with the corresponding state-injection gadget we obtain a ``gadgetized'' circuit with $n+\tau$ qubits acting on input state $|0^n\rangle|V_1\rangle|V_2\rangle\ldots |V_m\rangle$.  The gadgetized circuit contains $\tau$ extra single-qubit measurements and Clifford gates. If we postselect the measurement outcomes on $0^{\tau}$ we obtain an identity (cf. Eq.~\eqref{eq:singlegate})
\begin{equation}
U|0^n\rangle=2^{\tau/2} (\id \otimes \bra{0}^{\otimes \tau}) C|0^n\rangle|\Psi\rangle\qquad \quad |\Psi\rangle=|V_1\rangle|V_2\rangle\ldots |V_m\rangle
\label{eq:Uidentity}
\end{equation}
where $C$ is an $n+\tau$-qubit Clifford unitary and we have collected together all of the required magic states into the $\tau$-qubit state $\Psi$. We see a renormalisation factor $2^{\tau/2}$ is required to account for post-selection. 

Eq.~\eqref{eq:Uidentity} shows that the output state $U|0^n\rangle$ of interest has exact stabilizer rank equal to that of the magic state $\Psi$, i.e., $\chi(U|0^n\rangle)=\chi(\Psi)$. Indeed, starting from an exact stabilizer decomposition of $\ket{\Psi}$, we can apply $(\id \otimes \bra{0}^{\otimes \tau}) C$ to each stabilizer state in the decomposition and renormalize to obtain an exact stabilizer decomposition of the output state $U\ket{0^n }$.  Once we have computed an exact stabilizer decomposition of $U|0^n\rangle$ we may use the subroutines from Section~\ref{sec:subroutines} to simulate the quantum computation. For example we may sample from the output distribution $P_U$ or compute a given output probability $P_U(x)$. This was the approach taken in Ref.~\cite{Bravyi16stabRank} and here we call this a fixed-sample gadget-based simulator since it postselects on a fixed single measurement outcome.

Note that in the fixed-sample method one must use an exact (rather than approximate) stabilizer decomposition of the resource state $\Psi$. Indeed, in a fixed-sample simulation if $\ket{\Psi_{\mathrm{\delta}}}$ approximates $\ket{\Psi}$ up to an error $\delta$ then the simulation error could be amplified to $2^{\tau/2} \delta$ when substituting in Eq.~\eqref{eq:Uidentity}.  

The random-sample gadget-based simulation method is a different approach that allows us to use approximate stabilizer decompositions within this framework. Here one selects the post-selected measurement outcome $x\in \{0,1\}^\tau$ uniformly at random.  However, now  we have some measurement outcomes other than $x = 0^\tau$ and so have to account for corrections $C_x$.  Clifford corrections are straightforwardly simulated and this is ensured provided each non-Clifford gate $V_j$ in the circuit is diagonal in the computational basis and contained in the third level of the Clifford hierarchy (e.g., the T gate and CCZ gate). This guarantees that the simulation consuming an approximate magic state $|\Psi_\delta\ra$ achieves an average-case simulation error $O(\delta)$, see Ref.~\cite{bravyi2016improved} for details. 

An important distinction between the two gadget-based methods is that the random-sample method allows one to sample from a probability distribution which approximates $P_U$ but--unlike the fixed-sample method-- in general cannot be used to obtain an accurate estimate of an individual output probability $P_U(x)$.

\begin{figure}
	\centering
	\includegraphics{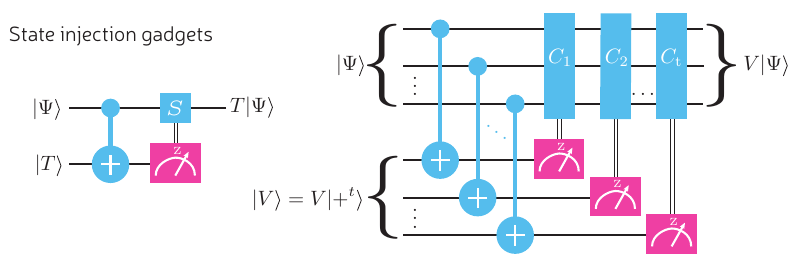}
	\caption{State injection gadgets for single-qubit $T$ gate and general multi-qubit phase gate $V$.  A correction unitary $V X_j V^\dagger$ is required whenever measurement $j$ registers a ``1" outcome.  If all corrections are Clifford then gadgets can be deployed with no additional resource requirements.}
	\label{fig_injection}
\end{figure}

\subsubsection{Sum-over-Cliffords method}
\label{sec:sum_cliffords}
Let $U$ be the quantum circuit Eq.~\eqref{eq:circuit} to be simulated.  We shall construct a sum-over-Cliffords decomposition
\begin{equation}
U=\sum_j c_j K_j
\label{eq:Udecomp}
\end{equation}
where each $K_j$ is a unitary Clifford operator and $c_j$ are some coefficients. This gives
\begin{equation}
U|0^n\rangle=\sum_j c_j K_j|0^n\rangle.
\label{eq:Vphi}
\end{equation}
Applying  Theorem~\ref{thm:randomCvec} one can approximate $U|0^n\ra$ within any desired error $\delta$ by a 
superposition of stabilizer states $\psi$ that contains  
\begin{equation}
k \approx \delta^{-2}  \| \vec{c} \|_1^2
\label{eq:chisim}
\end{equation}
terms.  In this way we can compute an approximate stabilizer decomposition $\psi$ satisfying
\be
\label{approxU1}
\|U|0^n\ra - |\psi\ra\|\le \delta, \qquad |\psi\ra = \sum_{\alpha=1}^k b_\alpha U_\alpha |0^n\ra ,
\ee
for some coefficients $b_\alpha$ and some Clifford circuits $U_\alpha$. Using the methods summarized in the previous section we can then sample from the distribution $P(x)=|\langle x|\psi\rangle|^2$ which $\delta$-approximates the output distribution $P_U$. In particular, one can use either the heuristic Metropolis sampling technique or the rigorous algorithm using norm estimation, which has runtime upper bounded as $O(kn^6)$.

The sum-over-Cliffords decomposition Eq.~\eqref{eq:Udecomp} of $U$ can be obtained by combining decompositions of the constituent non-Clifford gates. If $V_p=\sum_{j} c_j^{(p)} K_j^{(p)}$ for $p=1,2,\ldots, m$, then substituting in Eq.~\eqref{eq:circuit} gives
\[
U=\sum_{j_1,\ldots,j_m} \left(\prod_{p=1}^{m} c_{j_p}^{(p)}\right) D_m K_{j_m}^{(m)}D_{m-1}\ldots D_1 K^{(1)}_{j_1} D_0
\]
which is of the form Eq.~\eqref{eq:Udecomp} with $\|c\|_1^2=\prod_{p=1}^{m} \|c^{(p)}\|_1^2$.  This motivates the following generalization of $\xi$ to unitary operators.

\begin{dfn}[\bf Stabilizer Extent for unitaries, cf. Eq. \ref{eq:zrot}]
	Suppose $W$ is a unitary operator. Define $\xi(W)$ as the minimum of $\| c \|^2_1$ over all
	 decompositions  $W=\sum_j c_j K_j$ where $K_j$ are Clifford unitaries.
\end{dfn}
\noindent
This implies 
\begin{equation}
\xi(U|0^n\rangle)\leq \xi(U) \leq \prod_j \xi(V_j).
\label{eq:cstarU}
\end{equation}

Thus, given $\xi$-optimal decompositions of each non-Clifford gate in the circuit, the asymptotic cost of $\delta$-approximately sampling from $P_U(x)$ using the norm estimation algorithm and the sum-over-Cliffords method is  $\tilde{O}(k)$, and substituting Eq.~\eqref{eq:cstarU} in Eq.~\eqref{eq:chisim} we recover Eq.~\eqref{eq:zrot}.

Note that for any gate $V_j$ which acts on $O(1)$ qubits we may compute a $\xi$-optimal sum-over-Cliffords decomposition in constant time by an exhaustive search. Below we describe decompositions for commonly used non-Clifford gates. We use the following lemma which ``lifts''  a stabilizer decomposition of the resource state $|V\rangle=V|+^t\rangle$ to a sum-over-Cliffords decomposition of $V$.

\begin{lemma}[\bf Lifting lemma]
Suppose $V$ is a diagonal $t$-qubit unitary and
	\begin{equation}
		\label{Eq_Udecomp}
			V|+^t\rangle=\ket{V} = \sum_j c_j \ket{ \phi_j }.
	\end{equation}	
Suppose further that $ \ket{ \phi_j }$ are equatorial stabilizer states so that $ \ket{ \phi_j }= K_j \ket{+^t}$ where $K_j$ is a diagonal Clifford for all $j$. Then
	\begin{equation}
		\label{Eq_UdecompUnitary}
	V = \sum_j c_j K_j ,
\end{equation}	
and therefore $\xi(V) \leq || c ||_1^2$.  Furthermore, if the equatorial stabilizer decomposition Eq.~(\ref{Eq_Udecomp}) achieves the optimal value $\| c \|_1^2 = \xi(\ket{V})$ then  $\xi(\ket{V})=\xi(V)$.
\end{lemma}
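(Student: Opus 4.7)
The plan is to exploit the fact that both $V$ and the proposed Clifford operators $K_j$ are diagonal in the computational basis, so that their action is fully determined by what they do to $|+^t\rangle$. First I would write $V = \sum_{x\in\{0,1\}^t} v_x |x\rangle\langle x|$ with $|v_x|=1$, and, since each $K_j$ is diagonal, $K_j = \sum_x e^{i\theta_{j,x}} |x\rangle\langle x|$. Applying both sides of $|V\rangle = \sum_j c_j |\phi_j\rangle$ in the computational basis gives
\begin{equation}
2^{-t/2} \sum_x v_x |x\rangle = 2^{-t/2} \sum_x \Bigl(\sum_j c_j e^{i\theta_{j,x}}\Bigr) |x\rangle,
\end{equation}
so $v_x = \sum_j c_j e^{i\theta_{j,x}}$ for every $x$. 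This is exactly the statement that the diagonal operator $V - \sum_j c_j K_j$ has all computational-basis entries equal to zero, i.e.\ $V = \sum_j c_j K_j$. From the definition of $\xi$ for unitaries this immediately yields $\xi(V)\le \|c\|_1^2$, proving the first assertion of the lemma.

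For the second assertion I would prove the reverse inequality $\xi(|V\rangle) \le \xi(V)$. Starting from any optimal sum-over-Cliffords decomposition $V = \sum_j c'_j K'_j$ achieving $\|c'\|_1^2 = \xi(V)$ (with arbitrary, not necessarily diagonal, Clifford $K'_j$), I apply it to $|+^t\rangle$ to obtain
\begin{equation}
|V\rangle = V|+^t\rangle = \sum_j c'_j \bigl(K'_j|+^t\rangle\bigr).
\end{equation}
Since $|+^t\rangle$ is a stabilizer state and Cliffords preserve stabilizer states, each $K'_j|+^t\rangle$ is a normalized stabilizer state, giving a stabilizer decomposition of $|V\rangle$ with coefficient 1-norm squared equal to $\xi(V)$. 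Hence $\xi(|V\rangle) \le \xi(V)$. Combined with the inequality $\xi(V) \le \|c\|_1^2 = \xi(|V\rangle)$ supplied by the optimal equatorial decomposition via the first part, we obtain $\xi(V) = \xi(|V\rangle)$.

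No step looks especially delicate — the only thing to be a bit careful about is that the equatorial hypothesis really is what makes the first direction work: without diagonality of the $K_j$, the identity $v_x = \sum_j c_j e^{i\theta_{j,x}}$ only constrains the diagonal entries and one cannot conclude $V = \sum_j c_j K_j$ as operators. I would include a one-line remark making this explicit so the reader sees where the ``equatorial'' assumption is used. The normalization factor $2^{-t/2}$ cancels on both sides and plays no role, which is also worth noting to avoid confusion.
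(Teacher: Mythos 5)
Your proof is correct and follows essentially the same route as the paper's: write $V$ and the $K_j$ as diagonal operators, compare computational-basis amplitudes of $V|+^t\rangle$ and $\sum_j c_j K_j|+^t\rangle$ to conclude the operator identity, and then deduce $\xi(V)\le\|c\|_1^2$. The paper dismisses the ``furthermore'' part as an immediate corollary; your explicit argument (applying an optimal Clifford decomposition of $V$ to $|+^t\rangle$ to get $\xi(\ket{V})\le\xi(V)$) is exactly the intended one, so there is nothing further to add.
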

\begin{proof}
Since $U$ and $\{K_j\}$ are diagonal in the computational basis we may write
\begin{equation}
V= \sum_{x} e^{i \theta(x) } \kb{x}{x} \qquad K_j = \sum_{x}  e^{i \theta_{j}(x)} \kb{x}{x}
\label{eq:lift1}
\end{equation}
where $\theta, \theta_j$ are functions $\mathbb{F}_2^t \rightarrow \mathbb{R}$.  For all $x\in \{0,1\}^t$ we have
\begin{equation}
\frac{1}{2^{t/2}} e^{i\theta(x)}=\langle x|V|+^t\rangle=\langle x|\sum_{j} c_j K_j|+^t\rangle=\frac{1}{2^{t/2}}  \sum_{j} c_j  e^{i \theta_j(x)}
\label{eq:lift2}
\end{equation}
Combining Eqs.~(\ref{eq:lift1},\ref{eq:lift2}) and cancelling the factors of $2^{-t/2}$ gives Eq.~\eqref{Eq_UdecompUnitary} and the remaining statements of the lemma are immediate corollaries.
\end{proof}

For single-qubit diagonal rotations $R(\theta)=e^{-i(\theta/2) Z}$,  we have
\begin{equation}
	R(\theta) \ket{+} =  \left(\cos(\theta/2)-\sin(\theta/2)\right) \ket{+} +  \sqrt{2} \sin(\theta/2) e^{- i \pi / 4} S \ket{+},
\end{equation}	
which is an optimal decomposition with respect to $\xi$ and is similar to Eq.~\eqref{Eq_SingleQubitsDecomp}.  Therefore, we can use the lifting lemma to obtain an optimal decomposition
\begin{equation}
	R(\theta)  =  \left(\cos(\theta/2)-\sin(\theta/2)\right) \id +  \sqrt{2} e^{-  i \pi / 4} \sin(\theta/2) S \label{eqn:Rtheta}
\end{equation}
and conclude 
\begin{equation}
	\xi( R(\theta)  ) = \xi( R(\theta) \ket{+}   )  = \left(\cos(\theta/2)+\tan(\pi/8)\sin(\theta /2)\right)^2.
\end{equation}

The doubly controlled $Z$ gate (CCZ) is another useful example.  In Section~\ref{Sec_Clifford_Magic_States} we show that
\begin{align}
 \ket{CCZ}   =   \frac{2}{9} &( \id+ CZ_{1,2}X_3 )(\id + CZ_{1,3} X_2)(\id + CZ_{2,3}X_1)\ket{+^3}, \\ \nonumber
=	    \frac{2}{9} & \big(  \id + CZ_{1,2} + CZ_{1,3} + CZ_{2,3} + CZ_{1,2}CZ_{1,3}Z_1 + CZ_{1,2} CZ_{2,3} Z_2   \\ \nonumber 
	&  + CZ_{1,3} CZ_{2,3} Z_3  - CZ_{1,2}CZ_{1,3}CZ_{2,3} Z_1 Z_2 Z_3 \big)  \ket{+^3} ,
\end{align}	
is an optimal decomposition with respect to $\xi$.  Deploying the lifting lemma we have
\begin{align}
	CCZ   = \frac{2}{9} & \big(  \id + CZ_{1,2} + CZ_{1,3} + CZ_{2,3} + CZ_{1,2}CZ_{1,3}Z_1 + CZ_{1,2} CZ_{2,3} Z_2  \label{eqn:CCZ_decomposition}   \\ \nonumber 
	& + CZ_{1,3} CZ_{2,3} Z_3  - CZ_{1,2}CZ_{1,3}CZ_{2,3} Z_1 Z_2 Z_3 \big) ,
\end{align}	
and conclude
\begin{equation}
	\xi( CCZ  ) = \xi( \ket{CCZ}   )  = 16/9 .
\end{equation}
Recall that since this is a Clifford magic state we have  $ \xi( \ket{CCZ}   ) = 1 / F( \ket{CCZ})$ and notice that the stabilizer fidelity is achieved by the equatorial stabilizer state $\ket{+^3}$.  We remark that the above recipe for an optimal sum-over-Cliffords decomposition can be generalised to any Clifford magic state for which the stabilizer fidelity is achieved by some equatorial stabilizer state.

These optimal sum-over-Cliffords decompositions will be used in the numerics of the following Section. 

\subsection{Implementation and simulation results}
\label{sec:numericresults}
In this section we  report numerical results obtained  by simulating two quantum algorithms.
First, we use the sum-over-Cliffords method to simulate 
the Quantum Approximate Optimization  (QAOA) algorithm due to Farhi et al~\cite{farhi2014quantum}. This algorithm allows us to explore the performance of our simulator for circuits containing Cliffords and diagonal rotations. 
This simulation involves $n=50$ qubits, about $60$ non-Clifford gates, and a few hundred Clifford gates. We note that QAOA circuits have been previously used to benchmark classical simulators in Ref.~\cite{fried2017qtorch}. Secondly, we simulate the Hidden Shift algorithm for bent functions due to  Roetteler~\cite{Roetteler09}.
This algorithm  was also used to benchmark the Clifford+$T$ simulator of
Ref.~\cite{bravyi2016improved} which, in the terminology of the previous section, is a gadget-based simulator where sparsification is achieved via suitable choice of a random linear code. We extend this methodology to a Clifford+$CCZ$ simulator of the same circuits. We also simulate the Hidden Shift circuits using the new Sum-over-Cliffords method wherein sparsification is achieved by appealing to the $\xi$ quantity.

\subsubsection{Quantum approximate optimization algorithm}
\begin{figure}[ht]
\centering
\includegraphics[height=7.5cm]{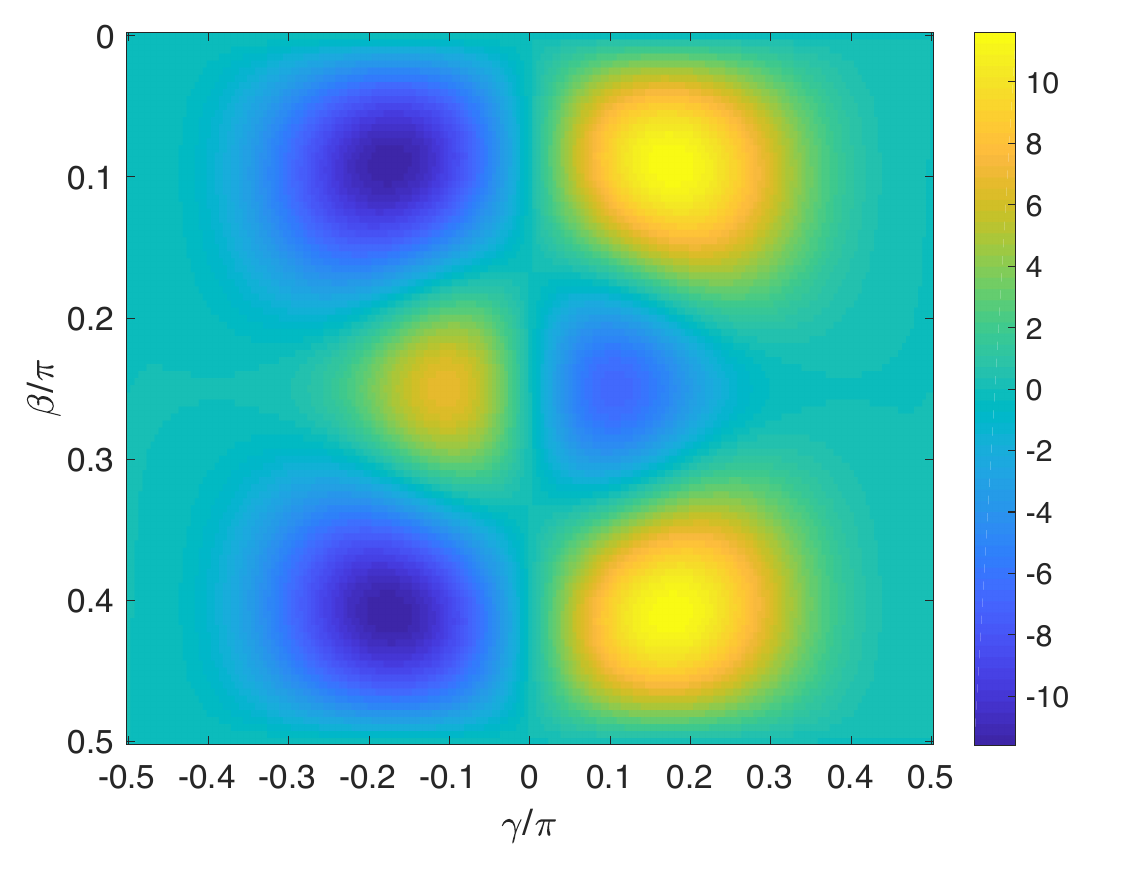}
\caption{The expected value of the cost function $E(\beta,\gamma)$
computed using the Monte Carlo method 
by Van~den~Nest~\cite{nest2009simulating}.
We consider a randomly generated instance of the Max E3LIN2 problem
with $n=50$ qubits and degree $D=4$.
}
\label{fig:QAOA3}
\end{figure}

Here we consider 
the Quantum Approximate Optimization Algorithm applied to the Max E3LIN2 problem~\cite{farhi2014quantum}.
The problem is to maximize  an objective function 
\[
C=\frac12 \; \sum_{1\le u<v<w\le n}\;  d_{uvw} z_u z_v z_w
\]
that depends on $n$ binary variables $z_1,\ldots,z_n\in\{-1,1\}$.
Here $d_{uvw}\in \{0,\pm 1\}$ are some coefficients. 
Let
\[
m=\sum_{u<v<w} |d_{uvw}| 
\]
be the number of non-zero terms in $C$.
Let us say that an instance of the E3LIN2 problem
has degree $D$ if each variable $z_u$
appears in exactly $D$ terms $\pm z_u z_v z_w$
(depending on the values of $n$ and $D$ there could be
one variable that appears in less than $D$ terms).

Following Ref.~\cite{farhi2014quantum} we
consider a family of variational states
\[
|\psi_{\beta,\gamma}\ra = U|0^n\rangle  \qquad \quad U=e^{-i\beta  B} e^{-i\gamma \hat{C}} H^{\otimes n}
\]
where  $\beta,\gamma\in \RR$ are variational parameters,
$B=X_1+\ldots+X_n$ is the transverse field operator,
and $\hat{C}$ is a diagonal operator obtained from $C$
by replacing the variables $z_u$ with the Pauli operators $Z_u$. 
The QAOA algorithm attempts to choose $\beta$ and $\gamma$  maximizing
the expected value of the objective function,
\[
E(\beta,\gamma)=\la \psi_{\beta,\gamma} |\hat{C}|\psi_{\beta,\gamma}\ra.
\]
Once a good choice of $\beta,\gamma$ is made, the QAOA algorithm
samples  $z\in \{-1,1\}^n$  from a probability distribution
$P(z)=|\la z|\psi_{\beta,\gamma}\ra|^2$ by 
preparing the state $|\psi_{\beta,\gamma}\ra$ on a quantum computer 
and measuring each qubit of 
$|\psi_{\beta,\gamma}\ra$.
(In this section we assume that output bits take values $\pm 1$ rather than $0,1$.)
By definition, the expected value of $C(z)$  coincides with $E(\beta,\gamma)$.
By generating sufficiently many samples one can produce a string $z$ such that
$C(z)\ge E(\beta,\gamma)$, see Ref.~\cite{farhi2014quantum} for details.

Our numerical results described  below were obtained for 
a single randomly generated instance of the problem with $n=50$ qubits and degree $D=4$.
We empirically observed that the expected value   $E(\beta,\gamma)$ does not depend significantly
on the choice of the problem instance for fixed $n$ and $D$.
Since the cost function has a symmetry $C(-z)=-C(z)$, finding the maximum and the minimum
values of $C$ are equivalent problems. 

A special feature of the QAOA circuits making them suitable
for benchmarking classical simulators is the ability to verify  that the simulator is working properly.
This is achieved by  computing the expected value $E(\beta,\gamma)$ using two independent
methods and cross checking the final answers. Our first method of computing $E(\beta,\gamma)$ is 
 a classical Monte Carlo algorithm due to Van~den~Nest~\cite{nest2009simulating}. 
It allows one to compute expected values
$\la \omega |F|\omega\ra$, where $F$ is an arbitrary sparse Hamiltonian
and $|\omega\ra$ is a so-called computationally tractable state.
Let us choose  $F=e^{i\beta B} \hat{C} e^{-i\beta B}$
and $|\omega\ra=e^{-i\gamma \hat{C}}|+^{\otimes n}\ra$
so that $\la \omega |F|\omega\ra=E(\beta,\gamma)$.
The algorithm of Ref.~\cite{nest2009simulating}
allows one to estimate $\la \omega |F|\omega\ra$
with an additive error $\epsilon$ in time $O(m^4 \epsilon^{-2})$.
The plot of $E(\beta,\gamma)$  is shown on Fig.~\ref{fig:QAOA3}.

Our second method of computing $E(\beta,\gamma)$
is the sum-over-Cliffords/Metropolis simulator
described in Section~\ref{sec:sum_cliffords}. We used this method to simulate
the QAOA circuit $U$ defined above.  For our choice $n=50$ and $D=4$ 
the unitary  $e^{-i\gamma \hat{C}}$ can be implemented by
a circuit that contains $m=66$ $Z$-rotations $e^{i(\gamma/2)Z}$ 
 and a few hundred Clifford gates. To keep the number of non-Clifford gates sufficiently small we restricted the simulations to the line $\beta=\pi/4$.
As can be seen from Fig.~\ref{fig:QAOA3}, this line contains
a local maximum and a local minimum of $E(\beta,\gamma)$ (we note that $\beta=\pi/4$ is also the choice made by Farhi et al.~\cite{farhi2014quantum}). With this choice the cost function is a function of a single parameter $\gamma$ and we may write
\[
E(\gamma)=\la 0^n|U^\dag \hat{C} U|0^n\ra=\sum_{\vec{z} \in \{0,1\}^n} P_U(\vec{z}) C(\vec{z}).
\]

between the ``exact" value $E(\gamma)$  computed by the Monte Carlo method and
its estimate  $E_{sim}(\gamma)$ obtained using the sum-over-Cliffords/Metropolis  simulator (while the Monte Carlo method is not perfect, we expect the errors to be negligible for our purposes). While the plot only shows $\gamma\geq 0$, note that due to the symmetry of the cost function $C(z)=-C(-z)$ we have $E(\gamma)=-E(-\gamma)$. 
The estimate $E_{sim}(\gamma)$ is defined as  
\[
E_{sim}(\gamma)=\frac1s \sum_{j=1}^s C(\vec{z}^j), \qquad s=4\cdot 10^4
\]
where  $\vec{z}^1,\ldots,\vec{z}^s$ are samples from the distribution $P(\vec{z})$
describing the output of the simulator, see Eq.~(\ref{P(x)normalized}).
Generating all of the data used to produce Fig.~\ref{fig:QAOAfull}a took less than 3 days on a laptop computer, with the most costly data points taking several hours. The number of stabilizer states $k$ used to approximate $U|0^n\ra$ is shown in Fig.~\ref{fig:QAOAfull}b; it was chosen as in Eq.~\eqref{eq:chisim} with $\delta\leq 0.15$ for all values of $\gamma$. This toy example demonstrates that our algorithm is capable of processing superpositions of $k\sim 10^6$ stabilizer states for $n=50$ qubits.

\begin{figure}
	\centering
	\includegraphics[width=\columnwidth]{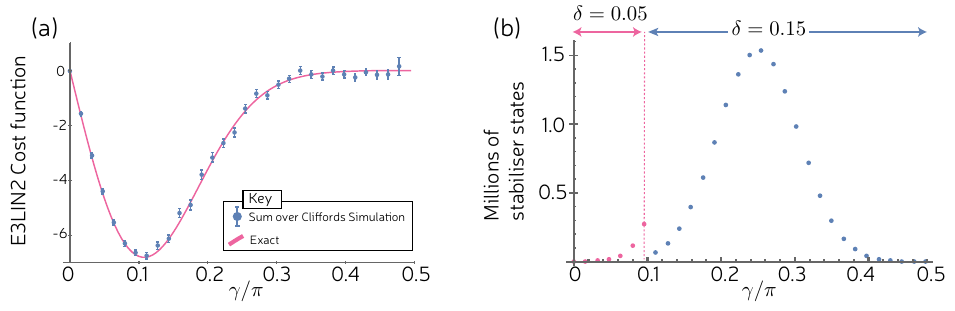}
	\caption{{\em Classical simulation of the QAOA algorithm:}
		(\textit{a}) Comparison between $E(\gamma)$  and its
		estimate $E_{sim}(\gamma)$ obtained using the sum-over-Cliffords/Metropolis simulator.
		We consider a randomly generated instance of the problem with $n=50$ qubits
		and degree $D=4$.
		For each data point  $10^4$ Metropolis steps 
		were performed to approach the steady distribution $P(\vec{z})$.
		The estimate $E_{sim}(\gamma)$ was obtained by
		averaging the cost function $C(\vec{z})$ over a subsequent $s=4\cdot 10^4$ samples 
		$\vec{x}$  from the output distribution of the simulator.
		Error bars represent the statistical error estimated using
		the MATLAB code due to Wolff~\cite{wolff2004monte} (for estimating errors in Markov chain Monte Carlo data) 
		(\textit{b}) The number of stabilizer states $k$ 
		used by the sum-over-Cliffords simulator was chosen as in Eq.~\eqref{eq:chisim} with $\delta=0.05$ for pink data points and $\delta=0.15$ for blue data points.
	}
	\label{fig:QAOAfull}
\end{figure}

\subsubsection{The hidden shift algorithm}
In this section, we describe the results of simulations applied to a family of quantum circuits that solve the Hidden Shift Problem \cite{van_dam_quantum_2006} for non-linear Boolean functions \cite{Roetteler09}. These circuits are identical to those simulated in \cite{bravyi2016improved} and further details of this quantum algorithm and its circuit instantiation can be found in Section F of the Supplemental Material of \cite{bravyi2016improved}. Briefly, the goal is to learn a hidden shift string $s\in \mathbb{F}_2^{n}$ by measuring the output state $|s\ra$ of the circuit $U$ applied to computational basis input $|0^{\otimes n}\ra$. The number of non-Clifford gates in $U$ can easily be controlled (we may choose any even number of Toffoli gates) and so the exponentially growing overhead in simulation time can be observed.

We will use both the gadget-based method of Section \ref{sec:gadgetbased} and the Sum-over-Cliffords method of \ref{sec:sum_cliffords}. Due to the high number of non-Clifford gates the exact stabilizer rank, $\chi$, is prohibitively high and so some sort of sparsification/approximation must be used, leading to $\chi_\delta$ instead. In principle we could apply the sparsification Lemma \ref{lem:randomCvec} in the gadget-based setting, but we prefer to use the random code method of \cite{bravyi2016improved} to enable a comparison with that work. The simulation timings in Fig.~\ref{fig:HiddenShiftTimes} consist of four trend lines which can be broken down as
\begin{itemize}
\item $T_{GB}$: The gadget-based random code method of \cite{bravyi2016improved}, wherein each Toffoli gate in $U$ is decomposed in terms of a stabilizer circuit using 4 $T$ gadgets. When a gadgetized version of $U$ uses a total of $t$  $|T\ra$-type magic states, then $|T^{\otimes t}\ra$ is approximated by a state $|\mathcal{L}\ra$ where $\mathcal{L} \subseteq \mathbb{F}_2^t$ is a linear subspace i.e., random code (Compare with Eq.~\eqref{eq:cmag}). We then have that $\chi_\delta(|T^{\otimes t}\ra)$ is the number of vectors in $\mathcal{L}$.
\item $CCZ_{GB}$: The gadget-based random code method of \cite{bravyi2016improved}, wherein each Toffoli gate in $U$ is implemented via a $CCZ$ gadget (as discussed e.g., in \cite{Howard17robustness}). When gadgetized $U$ uses a total of $u$  $|CCZ\ra$-type magic states, then $|CCZ^{\otimes u}\ra$ is approximated by a state $|\mathcal{L}\ra$ (see Eq.~\eqref{eq:cmag}) where $\mathcal{L} \subseteq \mathbb{F}_2^{3u}$ is a linear subspace/random code and $\chi_\delta(|CCZ^{\otimes u}\ra)=|\mathcal{L}|$.
\item $T_{SoC}$: The Sum-over-Cliffords method outlined in Sec.~\ref{algorithms}, wherein each Toffoli gate in $U$ is decomposed in terms of a stabilizer circuit using 4 $T$ gates. Each $T$ gate is subsequently decomposed into Clifford gates, $T=c_0I+c_1S$, with weightings as in Eq.~\eqref{eqn:Rtheta}.
\item $CCZ_{SoC}$: The Sum-over-Cliffords method outlined in Sec.~\ref{algorithms}, wherein each Toffoli gate in $U$ written as $CCZ$ which is subsequently decomposed (optimally in terms of $\xi$) into Cliffords as in Eq.~\eqref{eqn:CCZ_decomposition}.
\end{itemize}

The quantity that eventually determines the simulation overhead for both the $T$-based and $CCZ$-based schemes is $F$, the overlap with the closest stabilizer state. Recall $\xi(T)=\xi(|T\rangle)=1/{F(|T\rangle)}$ and likewise for $CCZ$. We have
\begin{align}
F(T) &=|\la +|T\ra|^2=\cos(\pi/8)^2=\frac{1}{2}+\frac{1}{2\sqrt{2}}\approx 0.853, \label{eqn:Tovlap}\\
F(CCZ) &=|\la +^{\otimes 3}|CCZ\ra|^2=\left(\frac{3}{4}\right)^2=\frac{9}{16}. \label{eqn:CCZovlap}
\end{align}

Note that we are using the variable $u$ to denote the number of Toffoli (equivalently $CCZ$) gates in our Hidden Shift circuit. Using the Random Code method, for a target infidelity $\Delta$ we chose a corresponding stabilizer rank $2^k$ where \cite{bravyi2016improved} stipulates
\begin{align}
\log_2 k_T &=\lfloor \log_2\left(4 \cos(\pi/8)^{-8u}/\Delta\right)\rfloor,\\
\log_2 k_{CCZ} &= \lfloor \log_2\left(4 \left(\tfrac{3}{4}\right)^{-2u}/\Delta\right)\rfloor.
\end{align}
Using the Sum-over-Cliffords method, for a target error $\delta$ we chose $k$ as in Lemma~\ref{lem:randomCvec} so that
\begin{align}
k_T &=\left\lfloor \left({\cos(\pi/8)}^{-4u}/\delta\right)^2\right\rfloor,\\
k_{CCZ} &=\lfloor \left(({3}/{4})^{-u}/\delta\right)^2\rfloor.
\end{align}
In either case, we see that there are significant savings to be had by using CCZ gates/states directly versus breaking them down into 4 $T$ gates/states each. For a fixed precision the scaling with $u$ (number of $CCZ$ gates) goes as
\begin{align}
T:&\quad \left(\frac{1}{\cos \pi/8}\right)^{8u} \approx   2^{0.914u}, 
\\ \text{vs.}\quad  
CCZ:&\qquad \left(\frac{16}{9}\right)^u\approx 2^{0.83u}.
\end{align}
This is apparent from the different slopes of the $T$- and $CCZ$- based versions of the simulations in Fig.~\ref{fig:HiddenShiftTimes}.

Absolute comparisons between the gadget-based and Sum-over-Cliffords method are complicated by various implementation details and the amount of optimization applied to each (i.e., more in the latter case). Broadly speaking, however, we observe that the Sum-over-Cliffords method is as fast, if not faster, than the gadget-based method. This is true \emph{despite the fact that Sum-over-Cliffords is completely general in its applicability} whereas the gadget-based technique is only applicable for non-Clifford gates from the third level of the Clifford hierarchy (i.e. those with state-injection gadgets having Clifford corrections). Not only can Sum-over-Cliffords handle gates outside the third level, its performance often \emph{improves} in such situations. For example, a circuit with many small-angle rotation gates requires a number, $k$, of samples that is smaller as the rotation angle moves away from $\pi/4$ i.e., the $T$ case (recall  Eq.~\eqref{eqn:Rtheta}).

    \begin{figure*}[t!h]
        \centering
        \begin{subfigure}[b]{0.47\textwidth}
            \centering
            \includegraphics[width=\textwidth]{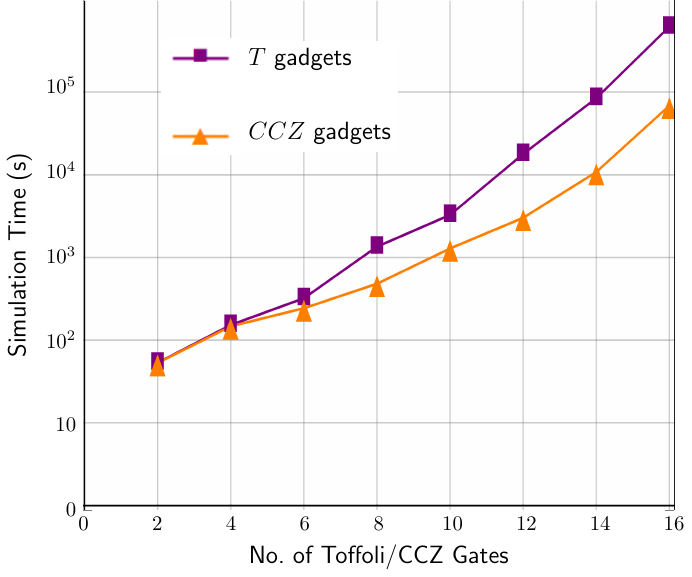}
            \caption[HiddenShift]%
            {{\small Simulation time for Hidden Shift circuits using the gadget-based random code method from Ref.~\cite{bravyi2016improved}.}}    
            \label{fig:HiddenShift_a}
        \end{subfigure}
        \hfill
        \begin{subfigure}[b]{0.47\textwidth}  
            \centering 
            \includegraphics[width=\textwidth]{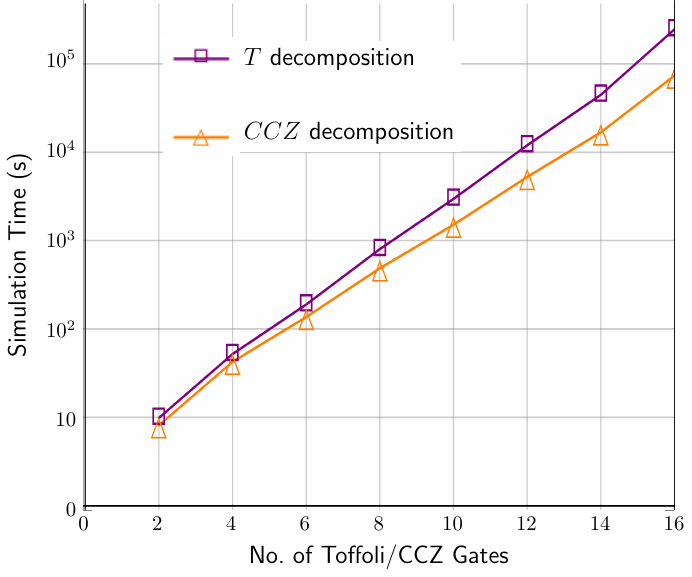}
            \caption[]%
            {{\small Simulation time for Hidden Shift circuits using the Sum-over-Cliffords method from \ref{clifford_sim}.}}    
            \label{fig:HiddenShift_b}
        \end{subfigure}
        \vskip\baselineskip
        \begin{subfigure}[b]{0.5\textwidth}   
            \centering 
            \includegraphics[width=\textwidth]{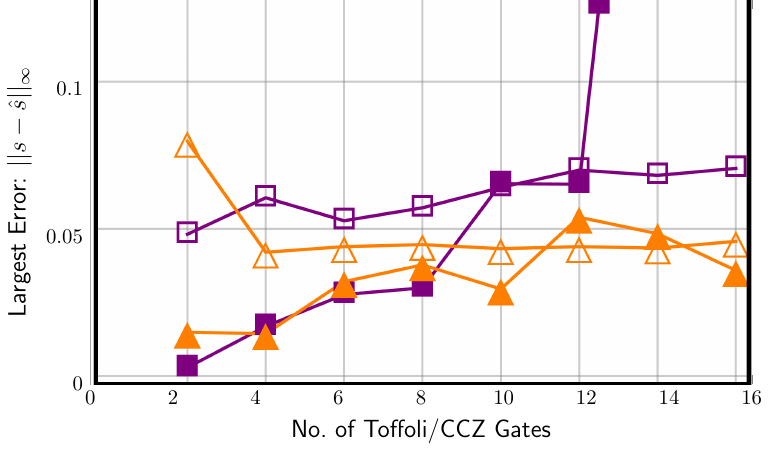}
            \caption[]%
            {{\small Approximation error between the true hidden shift bitstring, $\vec{s}$, and the simulated vector of marginal probabilities, $\vec{\hat{s}}$, for the simulations in Fig.~\ref{fig:HiddenShift_a} and \ref{fig:HiddenShift_b}. The infinity norm gives the largest discrepancy between any individual bit $s_i$ and the corresponding estimate $\hat{s}_i$. Two outlier data points (filled rectangles) whose coordinates are at $(14, 0.304)$ and $(16, 0.512)$, are omitted from this plot for clarity}}    
            \label{fig:HiddenShift_c}
        \end{subfigure}
        \quad
        \begin{subfigure}[b]{0.45\textwidth}   
            \centering 
            \includegraphics[width=\textwidth]{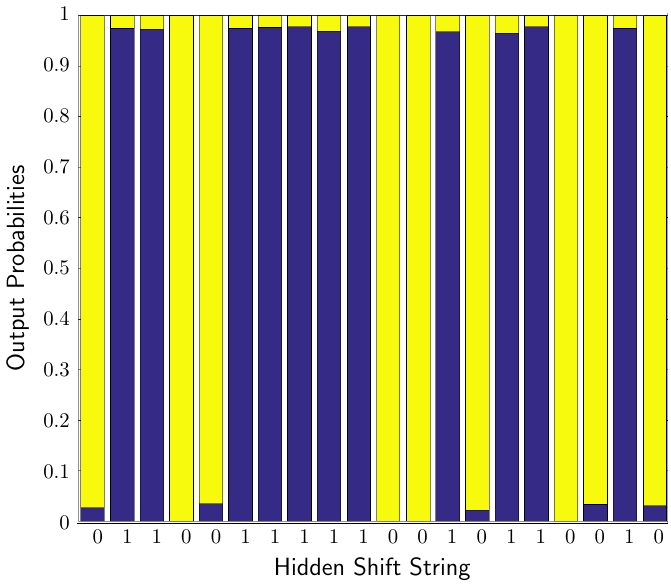}
            \caption[]%
            {{\small Simulated output, $\vec{\hat{s}}$, versus the true shift string, $\vec{s}$, for the case $T_{SoC}$ with 16 Toffoli gates (i.e corresponding to the open rectangle on the right of \ref{fig:HiddenShift_c}).}}    
            \label{fig:HiddenShift_d}
        \end{subfigure}
        \caption[ The average and standard deviation of critical parameters ]
        {\small Timings and errors for simulations of 40-qubit Hidden Shift circuits with varying numbers of non-Clifford gates. Every Toffoli gate is either recast as a $CCZ$ gate (via Hadamards on the target) or as a circuit comprising 4 $T$ gates and additional Stabilizer operations (\cite{bravyi2016improved}). We fixed precision parameters $\delta=0.3$ and $\Delta=0.3$ for the sum-over-Clifford simulations and gadget-based simulations respectively. Simulations were run on Dual Intel Xeon 1.90GHz processors using Matlab. } 
        \label{fig:HiddenShiftTimes}
    \end{figure*}

\section{Discussion}
\label{discussion}
To put our results in a broader context, let us briefly discuss alternative methods for 
classical simulation of quantum circuits. Vector-based simulators~\cite{de2007massively,smelyanskiy2016qhipster,haner20170}
represent $n$-qubit quantum states by complex vectors of size $2^n$
stored in a classical memory.
The state vector is updated
upon application of each gate by performing sparse
matrix-vector multiplication. The memory footprint limits 
the method to small number of qubits. For example, 
H{\"a}ner and Steiger~\cite{haner20170}
reported a simulation of
quantum circuits with $n=45$ qubits and a few hundred gates 
using a  supercomputer with $0.5$ petabytes of memory.
In certain special cases 
the memory footprint can be reduced 
by recasting the simulation problem as a 
tensor network contraction~\cite{markov2008simulating,boixo2017simulation,aaronson2016complexity}.
Several tensor-based simulators have been developed~\cite{pednault2017breaking,li2018quantum,chen2018classical} 
for geometrically local  shallow quantum  circuits that include only nearest-neighbor
gates on a 2D grid of qubits~\cite{boixo2018characterizing}.
These methods enabled simulations of systems with more than $100$ qubits~\cite{chen2018classical}.
However, it is expected~\cite{alibaba} that for general (geometrically non-local) circuits 
of size $poly(n)$  the runtime of tensor-based simulators scales as $2^{n-o(n)}$.

In contrast, Clifford simulators described in the present paper are applicable to large-scale circuits
without any locality properties as long as the circuit is dominated by Clifford gates. 
This regime may be important for verification of first fault-tolerant quantum circuits
where  logical non-Clifford gates are expected to be scarce due to their high implementation
cost~\cite{fowler2013surface,jones2013low}.
Another advantage of Clifford simulators is their ability to sample the output
distribution of the circuit (as opposed to computing individual output amplitudes).
This is more close to what one would expect from the actual quantum computer. 
For example, a single run of the heuristic sum-over-Cliffords simulator
described in Section~\ref{heuristic} produces thousands of samples from the (approximate) output distribution. 
In contrast, a single run of a tensor-based simulator typically computes a single amplitude of the
output state.  Thus we believe that our techniques 
extend the reach of classical simulation algorithms complementing
the existing vector- or tensor-based simulators.

A version of the sum-over-Cliffords simulator using the Metropolis sampling method is also publicly available
as part of \texttt{Qiskit-Aer}, the classical simulation framework of IBM's quantum
programming suite \texttt{Qiskit}~\cite{Qiskit}. This enables classical simulation and verification of quantum
circuits built in Qiskit on system sizes above $30$ qubits, which quickly become inaccessible with the
default vector-based method. This version also supports parallel processing over the stabilizer state decomposition,
which improves the performance of the Metropolis step.

Let us briefly comment on how simulators based on the stabilizer rank compare
with quasi-probability  methods~\cite{pashayan15,Delfosse15rebits,kocia2017discrete}.
The latter use a discrete Wigner function representation of quantum states
and Monte Carlo sampling
to approximate a given output probability of the target circuit with a small
additive error. Negativity of the Wigner function is an important parameter
that quantifies severity of the ``sign problem" associated with the Monte Carlo sampling.
The negativity also controls the runtime of quasi-probability methods. 
For example, the simulator proposed in~\cite{pashayan15}
has runtime $\epsilon^{-2} M^2$, where $M$ is the negativity and $\epsilon$
is the desired approximation error. In contrast to stabilizer rank simulators,
quasi-probability methods do not directly apply to stabilizer
operations on qubits since the latter are not known to have a non-negative Wigner function 
representation~\cite{Delfosse15rebits,karanjai2018contextuality}.
Furthermore, such methods are not well-suited for sampling the output
distribution since this task requires a small {\em multiplicative} error in 
approximating individual output probabilities. 

Our work leaves several  open questions. 
Since the efficiency of Clifford simulators hinges on the ability to find low-rank
stabilizer decompositions of multi-qubit magic states, 
improved techniques for finding such decompositions are of great interest. 
For example, consider a magic state $|\psi\ra=U|+\ra^{\otimes n}$, where 
$U$ is a diagonal circuit composed of $Z,CZ$, and $CCZ$ gates.
We anticipate that a low-rank exact stabilizer decomposition of $\psi$ can be 
found by computing the {\em transversal number}~\cite{alon1990transversal} of a suitable hypergraph describing
the placement of CCZ gates. Such low-rank decompositions may lead to more efficient
simulation algorithms for Clifford+CCZ circuits. We leave as an open question whether
the stabilizer extent $\xi(\psi)$ is multiplicative under tensor products for general states $\psi$.
Finally, it is of great interest to derive lower bounds on the stabilizer rank
of $n$-qubit magic states scaling exponentially with $n$.

\section{Subroutines}
\label{algorithms}

Throughout this section we use the following notations.
Suppose $x\in  \{0,1\}^n$ is a bit string.
We shall consider $x$  as a row vector  and write $x^T$ for the transposed column vector. 
The Hamming weight of $x$ denoted $|x|$ is the number of ones in $x$.
The support of $x$  is the subset of indices $j\in [n]$ such that $x_j=1$.
Given a single-qubit operator
$P$  let $P(x)$ be an $n$-qubit product operator
that applies $P$ to each qubit in the support of $x$, that is, 
$P(x)=P^{x_1}\otimes \cdots \otimes P^{x_n}$.
We shall use the notation $\oplus$ for the addition of binary vectors modulo two.
Let  $x\cdot y\equiv \sum_{j=1}^n x_jy_j$.

\subsection{Phase-sensitive Clifford simulator}
\label{clifford_sim}
\newcommand{\modtwo}{\; (\mathrm{mod}\, 2)}
In this section we describe a  Clifford simulator based on stabilizer tableau~\cite{aaronson04improved}  that keeps track of the global phase of stabilizer states. 
We shall consider Clifford circuits expressed using a gate set
\be
\label{CLgates}
S, \quad CZ, \quad CX,
\quad H.
\ee
Here $CZ$ and $CX$ are controlled-$Z$ and -$X$ gates, 
$H$ is the Hadamard gate, and $S=|0\ra\la 0|+i|1\ra\la 1|$.

First let us define a data format to describe stabilizer states. 
Suppose $U$ is a unitary Clifford operator.
We say that $U$ is a control-type or
{\em C-type} operator
if
\be
\label{Ctype} 
U|0^n\ra=|0^n\ra.
\ee  
For example, the gates $S,CZ,CX$ and any product of such gates are C-type operators.
We say that $U$ is a Hadamard-type or {\em H-type} operator if
$U$ is a tensor product of the Hadamard and the identity gates. 
Previously known results on canonical decompositions of Clifford circuits~\cite{nest2008classical,garcia2012efficient,maslov2017shorter}
imply that any $n$-qubit stabilizer state $\phi$ can be expressed as 
\be
\label{phi}
|\phi\ra = \omega U_C U_H  |s\ra,
\ee
where $U_C$ and $U_H$ are C-type and H-type Clifford operators,
$s\in \{0,1\}^n$ is a basis vector, and $\omega$ is a complex number. 
We shall refer to the decomposition Eq.~(\ref{phi}) as a CH-form of $\phi$.
Note that this form may be non-unique.

We shall describe the unitary $U_C$ by its stabilizer tableaux,
that is, a list of Pauli operators $U_C^{-1} Z_p U_C$ and $U_C^{-1} X_p U_C$.
The global phase of $U_C$ is fixed by Eq.~(\ref{Ctype}).
Using Eq.~(\ref{Ctype}) one can check that 
$U^{-1}_C Z_p U_C$ is a tensor product of Pauli $Z$ and the identity operators $I$.
Thus the stabilizer tableaux of $U_C$ can be described by binary matrices
$F,G,M$ of size $n\times n$ and a phase vector $\gamma\in \ZZ_4^n$ such that 
\be
\label{tableaux}
U^{-1}_C Z_p U_C = \prod_{j=1}^n Z_j^{G_{p,j}} \quad \mbox{and} \quad 
U^{-1}_C X_p U_C = i^{\gamma_p}  \prod_{j=1}^n X_j^{F_{p,j}} Z_j^{M_{p,j}}
\ee
for all $p=1,\ldots,n$. Here  $X^0\equiv Z^0\equiv I$. 
We shall describe the unitary $U_H$ by a string $v\in \{0,1\}^n$ such that 
\be
\label{Htype}
U_H=H(v)\equiv H_1^{v_1} \otimes H_2^{v_2} \otimes \cdots \otimes H_n^{v_n}.
\ee
To summarize,  the CH-form is fully specified by the data $(F,G,M,\gamma,v,s,\omega)$.
Let us agree that  $\omega=1$ whenever it is omitted.

Below we  describe an algorithm that takes as input a
sequence of Clifford gates $U_1,\ldots,U_m$ from the gate set Eq.~(\ref{CLgates})
and outputs the CH-form of a stabilizer state
\be
\label{phi1}
|\phi\ra=U_m\cdots U_2 U_1 |0^n\ra.
\ee
The runtime is $O(n)$ per each gate $S,CZ,CX$ and $O(n^2)$ per
each Hadamard gate. 
We also show how to compute an amplitude $\la x|\phi\ra$
and sample $x$ from the distribution $|\la x|\phi\ra|^2$
assuming that $\phi$ is specified by its CH-form. These tasks 
take time $O(n^2)$.
Finally, we consider projective gates $(I+P)/2$, where $P$ is a Pauli
operator. We show how to simulate projective gates in time $O(n^2)$.

{\bf Simulation of unitary gates.}
The initial state $|0^n\ra$ has a trivial CH-form with $s=0^n$ and $U_C=U_H=I$.
Thus we initialize the CH data as   $G=F=I$,  $M$ is the zero matrix,
and  $\gamma,v,s$ are zero vectors.
Suppose $\phi$ is a stabilizer state with the CH form  
\[
|\phi\ra=U_C U_H|s\ra
\]
described by the data $(F,G,M,\gamma,v,s)$.
Consider a gate 
$\Gamma \in \{ S,  CZ, CX, H\}$
applied to some subset of qubits. 
The state $\Gamma |\phi\ra$ has a CH-form
\be
\label{C'H'}
\Gamma |\phi\ra=\Gamma U_C U_H|s\ra = \omega' U_C' U_H'|s'\ra
\ee
with the corresponding  data $(F',G',M',\gamma',v',s',\omega')$.
Let us show how to compute this data. 

The case $\Gamma \in \{S, CZ, CX\}$ is trivial: one can absorb 
$\Gamma$ into the C-layer obtaining $U_C'=\Gamma U_C$. The  stabilizer tableaux of $U_C$ is updated using
the standard Aaronson-Gottesman algorithm~\cite{aaronson04improved} 
(explicit update rules are provided at the 
end of this section). This update takes time $O(n)$.

Let $\Gamma=H_p$ be the Hadamard gate applied to a qubit $p\in [n]$.
Commuting  $H_p$  through the C- and H-layer using the identity $H_p=2^{-1/2}(X_p+Z_p)$
and Eq.~(\ref{tableaux})
one gets 
\be
\label{H1}
H_p|\phi\ra = 2^{-1/2}U_C U_H [ (-1)^\alpha |t\ra + i^{\gamma_p} (-1)^\beta |u\ra),
\ee
where $t,u\in \{0,1\}^n$ are defined by
\be
\label{H2}
t_j = s_j \oplus G_{p,j} v_j 
\quad \mbox{and} \quad 
u_j = s_j  \oplus F_{p,j}\bar{v}_j  \oplus M_{p,j} v_j
\ee
for $j=1,\ldots,n$. Here and below $\bar{v}_j\equiv 1-v_j$. Furthermore,
\be
\label{H4}
\alpha = \sum_{j=1}^n G_{p,j} \bar{v}_j s_j 
\quad
\mbox{and} 
\quad
\beta=
\sum_{j=1}^n M_{p,j} \bar{v}_j s_j   + F_{p,j}v_j (M_{p,j}+s_j).
\ee
The case $t=u$ is trivial: Eq.~(\ref{H1}) gives the desired CH-form
of $H_p|\phi\ra$ with  $s'=t=u$
and $\omega'=2^{-1/2}[ (-1)^\alpha + i^{\gamma_p} (-1)^\beta]$.
From now on assume that $t\ne u$. 
\begin{prop}
\label{prop:H}
Suppose $t,u\in \{0,1\}^n$ are distinct strings
and $\delta \in \ZZ_4$.
Then  the state
$U_H(|t\ra +  i^\delta  |u\ra)$ has a CH-form 
\be
\label{H5}
U_H( |t\ra + i^\delta |u\ra) = \omega W_C W_H |s'\ra,
\ee
where the C-layer $W_C$ consists of $O(n)$ gates
from the set $\{S,CZ,CX\}$.  The decomposition Eq.~(\ref{H5})
can be computed  in time $O(n)$.
\end{prop}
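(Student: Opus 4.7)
The plan is to first reduce the two-term superposition $|t\ra + i^\delta|u\ra$ to a product state with a single qubit in superposition using a carefully chosen CNOT circuit $C$, then commute $U_H$ past $C^{-1}$, and finally apply $U_H$ to the remaining product state. Let $y = t \oplus u \neq 0$, and partition the qubits into $A = \{j : v_j = 0\}$ and $B = \{j : v_j = 1\}$. Call a CNOT $CX_{c\to t}$ \emph{H-friendly} if it is not the case that $v_c = 1$ and $v_t = 0$. Using the identities $H_t CX_{c\to t} H_t = CZ_{c,t}$ (valid when $v_c=0,\,v_t=1$) and $H_c H_t CX_{c\to t}H_c H_t = CX_{t\to c}$ (valid when $v_c=v_t=1$), one checks that each H-friendly CNOT satisfies $U_H \cdot CX_{c\to t} = G \cdot U_H$ for some C-type $G \in \{CX, CZ\}$. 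In the forbidden case $v_c=1,\,v_t=0$ the commutation produces $(H \otimes I) CX (H\otimes I)$, which fails to fix $|00\ra$ and hence is not C-type, so H-friendliness is essential.

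To build the reduction $C$, note that each $CX_{c \to t}$ acts on the difference vector as $y_t \mapsto y_t \oplus y_c$. Choose the pivot $q$ as follows. If $y$ has a nonzero entry in $A$, pick $q \in A$ with $y_q = 1$ and observe that $CX_{q \to j}$ is H-friendly for every $j$ (since $v_q = 0$), so one can clear each remaining $y_j = 1$ by a single such CNOT. Otherwise $y$ is supported entirely in $B$, and $CX_{q \to j}$ within $B$ is H-friendly (since $v_q = v_j = 1$). In either case $O(n)$ H-friendly CNOTs bring $C(|t\ra + i^\delta|u\ra)$ to the form $\omega'' |r_{\bar q}\ra \otimes (|0\ra + i^{\delta'}|1\ra)_q$ for some phase $\omega''$, basis vector $r$ with $r_q = 0$, and $\delta' \in \ZZ_4$, all computable in $O(n)$ arithmetic operations.

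By the commutation rules above, $U_H C^{-1} = W'_C\, U_H$ where $W'_C$ is a product of $O(n)$ C-type gates. Since $U_H$ applied to the product state acts on each qubit independently, the result on qubit $j \ne q$ is $H^{v_j}|r_j\ra$, while for qubit $q$ a short case analysis over the two values of $v_q$ and four values of $\delta'$ shows the state is $\omega_q \sqrt{2}\, S_q^{a_q} H_q^{b_q}|\sigma_q\ra$ for some $a_q \in \ZZ_4$ and $b_q,\sigma_q \in \{0,1\}$; e.g.\ if $v_q=1$ and $\delta'=1$ the state is $|+\ra + i|-\ra = e^{i\pi/4}\sqrt{2}\, S^3 H |0\ra$. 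Assembling single-qubit factors yields $U_H\bigl[|r_{\bar q}\ra \otimes (|0\ra + i^{\delta'}|1\ra)_q\bigr] = \omega^* \sqrt{2}\, S_q^{a_q} H(v'')|\sigma\ra$ for appropriate $v''$ (agreeing with $v$ off $q$) and $\sigma$. Combining with the reduction of the previous paragraph gives $U_H(|t\ra + i^\delta|u\ra) = \omega \cdot (W'_C S_q^{a_q}) \cdot H(v'') \cdot |\sigma\ra$, the desired CH-form with $W_C := W'_C S_q^{a_q}$ a product of $O(n)$ gates from $\{S, CZ, CX\}$.

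The main conceptual obstacle is the H-friendliness constraint in the reduction step: a naive CNOT reduction of the superposition would, after commutation with $U_H$, yield non-C-type operators, and it is a priori unclear that the constrained row-operations still suffice to reduce any nonzero $y$ to a single basis vector. The pivot-selection argument in the second paragraph is precisely what resolves this, and once it is in place the remaining commutation relations and single-qubit identifications at qubit $q$ are routine.
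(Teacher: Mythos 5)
Your proof is correct and follows essentially the same route as the paper: your pivot-selection rule (prefer $q$ with $v_q=0$, else a pivot inside the $v=1$ block) is exactly the paper's $\calV_0$/$\calV_1$ case split, and your ``H-friendly'' commutation identities $U_H\, CX = G\, U_H$ with $G\in\{CX,CZ,\text{reversed }CX\}$ are precisely what make the paper's circuit $V_C$ (satisfying $U_H V_C U_H=\prod_i CX_{q,i}$) a C-type operator. The only difference is presentational—you commute the reducing CNOTs through $U_H$ rather than defining $V_C$ directly and verifying the conjugation relation—and the single-qubit treatment at the pivot ($\omega S^a H^b\ket{c}$) and the final assembly $W_C=W_C' S_q^a$ coincide with the paper's.
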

Choosing $\delta=\gamma_p+2(\alpha+\beta) {\pmod 4}$ and
substituting
Eq.~(\ref{H5}) into Eq.~(\ref{H1}) one gets
\be
\label{H6}
H_p|\phi\ra = 2^{-1/2} (-1)^\alpha \omega \cdot U_C W_C  \cdot W_H |s'\ra.
\ee
This gives the desired CH-form of $H_p|\phi\ra$ with
\be
\label{H7}
\omega'=2^{-1/2}(-1)^\alpha \omega, \quad U_C'=U_C W_C,
\quad U_H'=W_H.
\ee
Finally, one needs to compute the  stabilizer tableaux of $U_C'$.
Since $W_C$ consists of $O(n)$ gates $S,CZ,CX$
it suffices to give update rules for the stabilizer tableaux of $U_C$
under the right multiplications $U_C \gets U_C \Gamma$
with $\Gamma \in \{S,CZ,CX\}$.
Explicit update rules are provided at the end of this section
(this  is a straightforward application of the stabilizer formalism).
Each update rule takes time $O(n)$. Since
$W_C$ contains  $O(n)$ gates,  the full simulation cost of
the Hadamard gate is $O(n^2)$.

\begin{proof}[Proof of Prosposition~\ref{prop:H}]
We shall construct a C-type circuit  $V_C$  and bit strings
$y,z\in \{0,1\}^n$  such that
\begin{itemize}
\item $y$ and $z$ differ on a single bit $q\in [n]$,
\item $U_H |t\ra =V_C U_H |y\ra$,
\item $U_H |u\ra =V_C U_H |z\ra$.
\end{itemize}
Then 
\be
U_H(|t\ra + i^\delta |u\ra) =V_C U_H(|y\ra + i^\delta |z\ra).
\ee
Since $y_i=z_i$ for $i\ne q$  and $y_q\ne z_q$,
 the state  
$U_H(|y\ra + i^\delta |z\ra)$
is a tensor product of single-qubit states
$H^{v_i}|y_i\ra$ on qubits $i\ne q$ and
a stabilizer state $H^{v_q}(|y_q\ra + i^\delta |z_q\ra)$ on qubit $q$.
Let us write 
\[
H^{v_q}(|y_q\ra + i^\delta |z_q\ra)=\omega S^a H^b |c\ra
\]
for some  $a,b,c\in \{0,1\}$ and some complex number $\omega$.
We arrive at 
\[
U_H(|t\ra + i^\delta |u\ra) =  
\omega(V_C S_q^a) (U_H H_q^{b\oplus v_q}) |s'\ra,
\]
where $s'_q=c$ and $s'_i=y_i=z_i$ for $i\ne q$.
This is the desired form Eq.~(\ref{H5})
with $W_C=V_C S_q^a$
and $W_H=U_H H_q^{b\oplus v_q}$.

It remains to construct $V_C,y,z$ as above. 
We shall choose $V_C$ such that 
\be
\label{HCCH}
U_H V_C U_H = \prod_{i\in [n]\setminus q\, : \, t_i\ne u_i}\; \; CX_{q,i}
\ee
for some qubit $q\in [n]$ such that $t_q\ne u_q$. 
The circuit in the righthand side of Eq.~(\ref{HCCH}) 
 maps $t,u$ to strings $y,z$ that differ only on the $q$-th bit.
Accordingly, $V_C U_H|t\ra=U_H|y\ra$ and $V_C U_H|u\ra=U_H|z\ra$,
as desired.  

For each $b\in \{0,1\}$ define a subset
\[
\calV_b=\{ i\in [n] \, : \, v_i=b \quad \mbox{and} \quad t_i \ne u_i\}.
\]
Here $v\in \{0,1\}^n$ defines the H-layer $U_H$, see Eq.~(\ref{Htype}).
By assumption,   at least one of the subsets $\calV_b$ is non-empty.

Suppose first that $\calV_0\ne \emptyset$. Let $q$ be the first qubit of $\calV_0$. Define
\[
V_C = \prod_{i\in \calV_0\setminus q } CX_{q,i} \cdot \prod_{i\in \calV_1 } CZ_{q,i}.
\]
Here $CX_{q,i}$ has control $q$ and target $i$. 
If $\calV_0=\{q\}$ then gates $CX_{q,i}$ are skipped.
Likewise, if $\calV_1=\emptyset$ then the gates $CZ_{q,i}$ are skipped.
Simple algebra shows that $V_C$  obeys Eq.~(\ref{HCCH}).

Suppose now that $\calV_0= \emptyset$. Then $\calV_1\ne \emptyset$
since $t\ne u$.
Let $q$ be the first qubit of $\calV_1$. Define
\[
V_C=\prod_{i\in \calV_1\setminus q } CX_{i,q}.
\]
Let us agree that $V_C=I$ if $\calV_1=\{q\}$.
Simple algebra shows that $V_C$ obeys Eq.~(\ref{HCCH}).

In both cases  the strings $y,z$ have the form 
\[
\mbox{if $t_q=1$ then $y=u\oplus e_q$ and $z=u$},
\]
\[
\mbox{if $t_q=0$ then $y=t$ and $z=t\oplus e_q$}.
\]
Here $e_q\in \{0,1\}^n$ is a string with a single non-zero at the $q$-th bit.
\end{proof}

In the rest of this section we provide  rules for updating the stabilizer
tableaux of $U_C$ under the left and the right multiplications
$U_C\gets \Gamma U_C$ and $U_C \gets U_C \Gamma$, where
$\Gamma$ is one of the gates $S,CZ,CX$. 
We shall write $\calL[\Gamma]$ and $\calR[\Gamma]$ for the left and the right
multiplication by $\Gamma$. 
Below $p=1,\ldots,n$.
All  phase vector updates are performed modulo four.
\[
\calR[S_q] \, : \, \left\{
\ba{rcl}
M_{p,q} &\gets& M_{p,q} \oplus F_{p,q}  \\
\gamma_p &\gets&  \gamma_p - F_{p,q} \\
\ea \right.
\quad  \quad
 \calL[S_q] \, : \, \left\{
\ba{rcl}
M_{q,p} &\gets& M_{q,p} \oplus G_{q,p}\\
\gamma_q &\gets&  \gamma_q - 1  \\
\ea \right.
\]
\[
\calR[CZ_{q,r}] \, : \, \left\{
\ba{rcl}
M_{p,q} &\gets& M_{p,q} \oplus F_{p,r}  \\
M_{p,r} &\gets& M_{p,r} \oplus F_{p,q}  \\
\gamma_p &\gets&  \gamma_p + 2F_{p,q}F_{p,r}  \\
\ea \right.
\quad \quad
\calL[CZ_{q,r}] \, : \, \left\{
\ba{rcl}
M_{q,p} &\gets& M_{q,p} \oplus G_{r,p}  \\
M_{r,p} &\gets& M_{r,p} \oplus G_{q,p}  \\
\ea \right.
\]
\[
 \calR[CX_{q,r}] \, : \, \left\{
\ba{rcl}
G_{p,q} &\gets& G_{p,q} \oplus G_{p,r}  \\
F_{p,r} &\gets& F_{p,r} \oplus F_{p,q}  \\
M_{p,q} &\gets& M_{p,q} \oplus M_{p,r}\\
\ea \right.
\quad
 \quad
\calL[CX_{q,r}] \, : \, \left\{
\ba{rcl}
G_{r,p} &\gets& G_{r,p} \oplus G_{q,p}  \\
F_{q,p} &\gets& F_{q,p} \oplus F_{r,p}  \\
M_{q,p} &\gets& M_{q,p} \oplus M_{r,p}\\
\gamma_q &\gets& \gamma_q+ \gamma_r + 2(MF^T)_{q,r} \\
\ea \right.
\]

\noindent
{\bf Simulating measurements. }
Let $x\in \{0,1\}^n$ be a basis vector. Using Eqs.~(\ref{Ctype},\ref{tableaux}) one gets
\be
\label{amplitude}
\la x|U_C U_H|s\ra =\la 0^n| \left( \prod_{p=1}^n U_C^{-1} X_p^{x_p} U_C \right) U_H|s\ra
\equiv \la 0^n |Q U_H|s\ra.
\ee
Note that $Q$ is a product of $|x|$ Pauli operators that appear in Eq.~(\ref{tableaux}).
It can be computed inductively in time $O(n^2)$ by setting 
$Q=I$ and performing updates $Q\gets Q \cdot U_C^{-1} X_p^{x_p} U_C$ 
for each $p$ with $x_p=1$.  
Write $Q=i^\mu Z(t)X(u)$ for some $\mu\in \ZZ_4$ and
$t,u\in \{0,1\}^n$. Note that   $u=xF\modtwo$. Then
\be
\label{amplitude1}
\la x|U_C U_H|s\ra = 
\la 0^n|QU_H|s\ra=
2^{-|v|/2} i^\mu \prod_{j\, : \, v_j=1} (-1)^{u_j s_j}  \prod_{j\, : \, v_j=0} \la u_j|s_j\ra.
\ee
Thus computing the amplitude $\la x|U_C U_H|s\ra$ takes time $O(n^2)$.

Consider a probability distribution $P(x)=|\la x|U_CU_H|s\ra|^2$.
From Eq.~(\ref{amplitude1}) one infers that 
$P(x)=2^{-|v|}$ if $u_j=s_j$ for all bits $j$ with $v_j=0$
and $P(x)=0$ otherwise. 
Since $U_C$ preserves the Pauli commutation rules, one has 
$FG^T=I \modtwo$.  Thus
$x=wG^T\modtwo$, where $w\in \{0,1\}^n$ is a row vector satisfying $w_j=s_j$ if $v_j=0$.
The remaining bits of $w$ are picked uniformly at random.  
Thus one can sample $x$ from $P(x)$ as follows:
\begin{itemize}
\item Set $w=s$.
\item For each $j$ such that $v_j=1$ flip the $j$-th bit of $w$ with probability $1/2$.
\item Output $x=wG^T \modtwo$.
\end{itemize}
This takes time $O(n^2)$.
Finally, consider a projective gate $\Gamma=(I+P)/2$, where $P=P^\dag$ is a Pauli operator.
We have 
\[
\Gamma |\phi\ra=\Gamma U_C U_H |s\ra = (1/2)U_C U_H(I+ Q)|s\ra,
\]
where $Q$ is a Pauli operator that can be computed in time  $O(n^2)$ using the
stabilizer tableaux of $U_C$. 
Write $(I+Q)|s\ra=|s\ra + i^\delta |t\ra$ for some $t\in \{0,1\}^n$ and $\delta\in\ZZ_4$.
We can now compute the CH-form of $\Gamma|\phi\ra$ using Proposition~\ref{prop:H}
in the same fashion as was done above for the Hadamard gate.

\subsection{Heuristic Metropolis simulator}
\label{heuristic}Consider a state
$|\psi\ra=\sum_{\alpha=1}^k b_\alpha |\phi_\alpha\ra$,
where $\phi_1,\ldots,\phi_k$ are $n$-qubit stabilizer states.
We assume that all states $\phi_\alpha$ are specified by their CH-form.
This form can be efficiently computed using 
the Clifford simulator of Section~\ref{clifford_sim}.
Our goal is to sample $x\in \{0,1\}^n$ from the probability distribution
\[
P(x)=\frac{|\la x|\psi\ra|^2}{\|\psi\|^2}.
\]
To this end define a Metropolis-type Markov chain $\calM$ with a state space
$\Omega=\{ x\in \{0,1\}^n\, : \, P(x)>0\}$.
 Suppose the current state of the chain $x\in \Omega$.
Then the next state $x'$ is generated as follows.\\
\begin{itemize}
\item Pick an integer $j\in [n]$ uniformly at random and let $y=x\oplus e_j$.
\item If $P(y)\ge P(x)$ then set $x'=y$.
\item Otherwise generate a random bit $b\in \{0,1\}$ such that $\mathrm{Pr}(b=1)=P(y)/P(x)$.
\item If $b=1$ then set $x'=y$. Otherwise set $x'=x$.
\end{itemize}
We shall refer to the mapping $x\to x'$ as a Metropolis step.
Let us make a simplifying assumption that the chain $\calM$ is irreducible, that is,
for any pair of strings $x,y\in \Omega$  there exist a path
$x^0=x,x^1,\ldots,x^L=y\in \Omega$ such that $x^i$ and $x^{i+1}$ differ on 
a single bit for all $i$. Then 
$P(x)$ is the unique steady distribution of $\calM$.
One can (approximately) sample $x$ from $P(x)$ by 
implementing $T\gg 1$ Metropolis steps 
starting from some (random) initial state $x_{in}\in \Omega$
and using the final state as the output string. 

We claim that one can implement $T$ Metropolis  steps  in time 
\[
O(k n T) + O(k n^2). 
\]
Here the term $O(kn^2)$ is the cost of computing
the initial probability $P(x_{in})$ using the algorithm of Section~\ref{clifford_sim}.
Indeed, suppose we have already implemented
several steps  reaching some state $x\in \Omega$.
Let $y=x\oplus e_j$ be a proposed next state. 
Consider some fixed stabilizer state $\phi\equiv \phi_\alpha$ 
that contributes to $\psi$
and let $|\phi\ra=U_C U_H |s\ra$ be its CH-form.
Then
\be
\la y|\phi\ra=
\la x\oplus e_j|U_CU_H|s\ra = \la 0^n| U_C^{-1} X_j U_C \cdot Q_x U_H|s\ra,
\ee
where 
\[
Q_x\equiv \prod_{p=1}^n U_C^{-1} X_p^{x_p} U_C.
\]
Note that computing $Q_x$ for the initial state 
$x=x_{in}$ and $\alpha=1,\ldots,k$ takes time
$O(kn^2)$. 
Suppose $Q_x$ has been already computed.
 Since $U_C^{-1} X_j U_C$
is determined by the stabilizer tableaux of $U_C$, see Eq.~(\ref{tableaux}),
one can compute the product $Q_y= U_C^{-1} X_j U_C \cdot Q_x$ in time $O(n)$. 
Then the amplitude $\la y|\phi\ra=\la 0^n|Q_yU_H|s\ra$ can be
computed in time $O(n)$. This shows that the ratio
\[
\frac{P(y)}{P(x)}=\left| \frac{\sum_{\alpha=1}^k b_\alpha \la y|\phi_\alpha\ra}
{\sum_{\alpha=1}^k b_\alpha \la x|\phi_\alpha\ra} \right|^2.
\]
can be computed in time $O(k n)$
provided that one saves the Pauli $Q_x$
for each stabilizer term $\phi_\alpha$ after each Metropolis step.
This  achieves the runtime scaling quoted above.

In general there is no reason to expect that the Metropolis chain defined above is
irreducible. Furthermore, its mixing time is generally unknown. 
Thus the proposed algorithm should be considered as a heuristic. However, the numeric
results shown in Fig.~\ref{fig:QAOA1} were obtained using the Metropolis
method to sample from the output distribution of the QAOA circuit.

We expect that the Metropolis chain  may be rapidly mixing 
in the case when $\psi$ approximates the output state of some small-depth quantum circuit.
In particular, if $P(x)$ is the exact output distribution of a constant-depth circuit
and each Metropolis step flips $O(1)$ bits, one can use isoperimetric inequalities
derived in Refs.~\cite{eldar2017local,crosson2017quantum} to show that 
$P(x)$ is the uniqiue steady state of $\calM$
and its mixing time is at most $poly(n)$.

\subsection{Fast norm estimation}
\label{Sec_fast_norm}
As before, consider a state
$|\psi\ra=\sum_{\alpha=1}^k b_\alpha |\phi_\alpha\ra$,
where $\phi_1,\ldots,\phi_k$ are $n$-qubit stabilizer states
specified by their CH-form.
Recall that our goal is to 
estimate the norm $\|\psi\|^2$ and to sample the probability
distribution $P(x)\sim |\la x|\psi\ra|^2$.
In this section 
we describe an algorithm that takes 
as input the target state $\psi$,
error tolerance parameters $\epsilon,\delta>0$, and
outputs a random number $\eta$ such that 
\be
\label{approx}
(1-\epsilon)\| \psi\|^2 \le \eta \le (1+\epsilon)\|\psi\|^2 
\ee
with probability at least $1-\delta$. The algorithm has runtime  
\be
\label{runtime}
O(k n^3 \epsilon^{-2} \log{\delta^{-1}}).
\ee
The key  idea  proposed in Ref.~\cite{bravyi2016improved}
is to estimate $\|\psi\|^2$ by computing inner products
between $\psi$ and randomly chosen stabilizer states $\phi$.
It can be shown~\cite{bravyi2016improved} that
the quantity $\eta \equiv 2^n |\la \phi|\psi\ra|^2$ is an unbiased estimator of 
$\|\psi\|^2$ with the standard deviation  $\approx \|\psi\|^2$,
provided that $\phi$ is drawn from the uniform distribution on the set of
stabilizer states. 
Thus the empirical mean of $\eta$ provides an estimate of $\|\psi\|^2$ with a small multiplicative error. 
The quantity $\eta$ can be computed in time $O(k n^3)$ since
$\la \phi|\psi\ra=\sum_{\alpha=1}^k b_\alpha \la \phi|\phi_\alpha\ra$
and the inner product between stabilizer states can be computed in time $O(n^3)$.

Here we improve upon the algorithm of Ref.~\cite{bravyi2016improved} 
in two respects. First, we show that 
the random stabilizer state $\phi$ used in the norm estimation method
can be drawn from a certain subset of stabilizer states that we call equatorial states. 
By definition, a stabilizer state $\phi$ is called equatorial iff it has equal amplitude
on each basis vector.  Sampling an equatorial state from the uniform distribution 
is particularly simple: all it takes is  tossing an unbiased coin $O(n^2)$ times. 
Secondly, we greatly simplify computation of the inner products $\la \phi |\phi_\alpha\ra$.
This is achieved by using the CH-form to describe stabilizer states
and by introducing a more efficient (and simpler) algorithm for computing
certain exponential sums (see Lemma~\ref{lemma:expo_sum} below).
 
We shall now formally describe the norm estimation algorithm.
Let $\calM_n$ be the set of symmetric $n\times n$ matrices $M$ 
with off-diagonal entries $\in \{0,1\}$ and diagonal 
entries $\in \{0,1,2,3\}$.
For any matrix $A\in \calM_n$ define a stabilizer state 
\begin{equation}
\label{eqa1}
|\phi_A\rangle =2^{-n/2} \sum_{x\in \{0,1\}^n}\; i^{x A x^T} |x\rangle.
\end{equation}
We shall refer to $\phi_A$ 
as an {\em equatorial state} (note that  $\phi_A$  lies
on the equator of the Bloch sphere for $n=1$). 
\begin{lemma}[\bf Norm Estimation]
\label{lemma:norm}
Let $\psi$ be an arbitrary $n$-qubit state.
Define a random variable
\begin{equation}
\label{xi}
\eta_A = 2^n |\langle \phi_A|\psi\rangle|^2,
\end{equation}
where $A\in \calM_n$ is chosen uniformly at random.
Then $\eta_A$ has mean  $\|\psi\|^2$
and its variance is at most $\|\psi\|^4$.
\end{lemma}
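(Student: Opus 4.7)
The plan is to compute the first and second moments of $\eta_A$ by viewing $\phi_A$ as a Gaussian-type sum over the computational basis and exploiting the independence of the matrix entries $A_{jk}$.

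\textbf{First moment.} I would write
\[
\mathbb{E}_A[\eta_A] = 2^n\,\langle\psi|\,\mathbb{E}_A[|\phi_A\rangle\langle\phi_A|]\,|\psi\rangle
\]
and compute $\mathbb{E}_A[|\phi_A\rangle\langle\phi_A|] = 2^{-n}\sum_{x,y}\mathbb{E}_A[i^{xAx^T-yAy^T}]|x\rangle\langle y|$. Using symmetry and $x_j^2=x_j$, decompose $xAx^T = \sum_j A_{jj}x_j + 2\sum_{j<k}A_{jk}x_jx_k$. Since the $A_{jk}$ are mutually independent, the expectation factorizes into diagonal factors $\mathbb{E}_{A_{jj}}[i^{A_{jj}(x_j-y_j)}]$ (uniform over $\mathbb{Z}_4$, vanishes unless $x_j=y_j$) and off-diagonal factors $\mathbb{E}_{A_{jk}}[(-1)^{A_{jk}(x_jx_k-y_jy_k)}]$ (uniform over $\mathbb{Z}_2$). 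The diagonal factors already force $x=y$, so the off-diagonal factors are automatically $1$. Thus $\mathbb{E}_A[|\phi_A\rangle\langle\phi_A|] = 2^{-n}I$ and $\mathbb{E}_A[\eta_A]=\|\psi\|^2$.

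\textbf{Second moment.} The same strategy applied to two copies gives
\[
\mathbb{E}_A[\eta_A^2] = 2^{2n}\,\langle\psi^{\otimes 2}|\,\mathbb{E}_A[(|\phi_A\rangle\langle\phi_A|)^{\otimes 2}]\,|\psi^{\otimes 2}\rangle.
\]
The nontrivial step is to characterize the quadruples $(x,x',y,y')$ for which $\mathbb{E}_A[i^{(xAx^T+x'Ax'^T)-(yAy^T+y'Ay'^T)}]$ is nonzero. The diagonal entries force $x_j+x'_j=y_j+y'_j$ for every $j$, which means that on the set $T=\{j:x_j\neq x'_j\}$ the pair $(y_j,y'_j)$ is either $(x_j,x'_j)$ or the swap $(x'_j,x_j)$, and off $T$ it is fixed. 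Imposing the off-diagonal constraints for pairs $j<k$ then yields the additional condition $z_j=z_k$ on $T$, where $z_j\in\{0,1\}$ indicates ``swap at $j$.'' So only the uniform choices $z\equiv 0$ and $z\equiv 1$ survive, corresponding to $(y,y')=(x,x')$ and $(y,y')=(x',x)$ respectively, the latter being a genuinely new solution only when $x\neq x'$. Each surviving configuration contributes $1$ to the expectation.

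\textbf{Assembling the bound.} This gives
\[
\mathbb{E}_A[(|\phi_A\rangle\langle\phi_A|)^{\otimes 2}] = 2^{-2n}\bigl(I^{\otimes 2} + \mathbb{F} - D\bigr),
\]
where $\mathbb{F}$ is the SWAP and $D=\sum_x |x,x\rangle\langle x,x|\succeq 0$ removes the double-counting at $x=x'$. Using the standard identity $\langle\psi^{\otimes 2}|\mathbb{F}|\psi^{\otimes 2}\rangle=\|\psi\|^4$ and $\langle\psi^{\otimes 2}|D|\psi^{\otimes 2}\rangle=\sum_x|\langle x|\psi\rangle|^4\ge 0$, I obtain $\mathbb{E}_A[\eta_A^2] = 2\|\psi\|^4-\sum_x|\langle x|\psi\rangle|^4 \le 2\|\psi\|^4$, whence $\mathrm{Var}(\eta_A)\le \|\psi\|^4$. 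The main obstacle is the combinatorial bookkeeping in the second-moment step, specifically the case analysis on $T$ that isolates the SWAP contribution and ensures the $z$-variable collapse; everything else is routine factorization and standard SWAP-trick manipulations.
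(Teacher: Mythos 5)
Your proof is correct and takes essentially the same route as the paper: compute $\mathbb{E}_A[\kb{\phi_A}{\phi_A}]=2^{-n}I$ and $\mathbb{E}_A[\kb{\phi_A}{\phi_A}^{\otimes 2}]=4^{-n}\bigl(I+\mathrm{SWAP}-\sum_x \kb{x,x}{x,x}\bigr)$, then apply the SWAP identity to bound the second moment by $2\|\psi\|^4$. The only difference is that where the paper imports the vanishing condition on $E(w,x,y,z)$ from Lemma~11 of Ref.~\cite{bremner2016average}, you derive it directly (per-coordinate multiset equality from the diagonal entries, plus the swap-indicator collapse from the off-diagonal entries), which is a correct self-contained substitute.
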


\begin{lemma}[\bf Inner Product]
\label{lemma:inner}
Suppose $|\phi\ra=U_CU_H|s\ra$ is a stabilizer state 
in the CH-form, where $U_H=H(v)$
and $U_C$ has a stabilizer tableaux $(F,G,M,\gamma)$.
 Suppose $\phi_A$ is an equatorial state
specified by a matrix  $A\in \calM_n$.  Define a matrix
$J\in \calM_n$ such that 
$\mathrm{diag}(J)=\gamma$ and $J_{a,b}=(MF^T)_{a,b} \modtwo$
for $a\ne b$. Define a matrix
\[
K=G^T(A+J)G.
\]
Then 
\be
\label{inner_explicit}
\la \phi | \phi_A\ra=2^{-(n+|v|)/2}  \cdot i^{sKs^T}\cdot (-1)^{s\cdot v} 
\sum_{x\le v} i^{xKx^T + 2x(s+sK)^T}.
\ee
Here the sum is over  $n$-bit strings $x$ satisfying
$x_j\le v_j$ for all $j$.
\end{lemma}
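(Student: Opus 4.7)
The plan is to evaluate $\langle \phi | \phi_A \rangle = \langle s | U_H U_C^\dagger | \phi_A \rangle$ in two stages. First I would compute $U_C^\dagger | \phi_A \rangle$ and show it is again an equatorial state, but with a new matrix $K$. Then I would take the inner product with $U_H | s \rangle$ factor-by-factor.

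For the first stage, write $|\phi_A\rangle = 2^{-n/2} \sum_x i^{xAx^T} X(x)|0^n\rangle$ and use the defining property $U_C^\dagger |0^n\rangle = |0^n\rangle$ of C-type operators to reduce the task to evaluating $U_C^\dagger X(x) U_C$. From the tableau \eqref{tableaux}, this is $\prod_p (i^{\gamma_p} X(F_p) Z(M_p))^{x_p}$, and multiplying these factors via $X(a)Z(b)\cdot X(c)Z(d) = (-1)^{b\cdot c} X(a+c) Z(b+d)$ produces a Pauli $X(xF) Z(xM)$ with a scalar phase. Packaging the diagonal contributions $\gamma_p$ (via $x_p^2 = x_p$) and the accumulated cross-term signs $(MF^T)_{pq}$ (symmetric mod 2 thanks to $[X_p,X_q]=0$, so they define a legitimate symmetric quadratic form) into the matrix $J$, one obtains $U_C^\dagger X(x) U_C = i^{xJx^T} X(xF) Z(xM)$. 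Acting on $|0^n\rangle$ kills the $Z$ factor, and the Clifford relation $FG^T = I \pmod 2$ makes $x \mapsto y = xF$ a bijection on $\FF_2^n$ with inverse $x = yG^T$. Substituting gives
$$U_C^\dagger |\phi_A\rangle = 2^{-n/2} \sum_y i^{yKy^T} |y\rangle, \qquad K = G^T(A+J)G,$$
an equatorial state with matrix $K$.

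For the second stage, I would compute $\langle s | U_H | y \rangle$ qubit-by-qubit: coordinates with $v_j = 0$ contribute $\delta_{s_j, y_j}$, while coordinates with $v_j = 1$ contribute $2^{-1/2} (-1)^{s_j y_j}$. Parametrising the surviving $y$'s as $y = s \oplus x$ with $x \le v$ and simplifying $s_j y_j = s_j - s_j x_j$ (integer arithmetic, using $s_j^2 = s_j$), the Hadamard sign reduces to $(-1)^{s\cdot v}(-1)^{s\cdot x}$. The final algebraic step is to show $yKy^T \equiv sKs^T + 2 xKs^T + xKx^T \pmod 4$ despite $y = s \oplus x$ being XOR and not integer sum. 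This reduces to $yKy^T \equiv (s + x) K (s + x)^T \pmod 4$, which holds because shifting any $y_i$ by $2$ alters $yKy^T$ by a multiple of $4$ (the diagonal contributes $4 K_{ii}(y_i + 1)$ and each off-diagonal coupling contributes $4 K_{ij} y_j$). Expanding the quadratic by symmetry of $K$ and absorbing $2 s \cdot x$ into the cross term via $2 xKs^T + 2 s \cdot x = 2 x (s + sK)^T$ produces the claimed formula.

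The main obstacle I anticipate is the first stage: correctly identifying the phase of $U_C^\dagger X(x) U_C$ as a single quadratic form $i^{xJx^T}$. One must track both the per-qubit phases $\gamma_p$ landing on the diagonal of $J$ and the accumulated sign $(-1)^{\sum_{p<q} x_p x_q (MF^T)_{pq}}$ from commuting $Z$-layers past $X$-layers landing on the off-diagonal. The supporting sanity check --- that $MF^T$ is symmetric mod $2$, so that the off-diagonal entries legitimately assemble into a symmetric matrix --- rests on the symplectic structure of the Clifford tableau, namely that the images $U_C^\dagger X_p U_C$ commute pairwise. Once the matrix $J$ is pinned down, the change of variables $y = xF$ and the subsequent mod-$4$ bookkeeping are essentially routine.
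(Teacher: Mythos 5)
Your proposal is correct and follows essentially the same route as the paper's proof: conjugate $X(x)$ through $U_C$ using the tableau to identify the quadratic phase $i^{xJx^T}$, change variables via $FG^T=I$ to exhibit $U_C^{-1}|\phi_A\rangle$ as an equatorial state with matrix $K=G^T(A+J)G$, expand $U_H|s\rangle$ over strings $x\le v$, and finish with the mod-4 bookkeeping that replaces $s\oplus x$ by $s+x$ using the symmetry of $K$. The phase-tracking details you flag (diagonal $\gamma_p$, off-diagonal $(MF^T)_{pq}$ symmetric mod 2 from commutativity) are exactly how the paper defines $J$.
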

Since the Pauli operators $U_C^{-1} X_p U_C$
pairwise commute, $MF^T \modtwo$ is a symmetric matrix,
see Eq.~(\ref{tableaux}).
 Therefore $K$ is a symmetric
matrix and thus 
$i^{xKx^T}$  depends only
on off-diagonal elements of $K$ modulo two and diagonal elements of $K$
modulo four. Thus
the sum that appears in Eq.~(\ref{inner_explicit}) can be expressed as 
\[
\calZ(B)=\sum_{x\in \{0,1\}^{|v|}} i^{x Bx^T}
\]
for a suitable matrix $B\in \calM_{|v|}$,
namely, a restriction of the matrix $K+2\mathrm{diag}(s+sK)$
onto the subset of rows and columns $j$ with $v_j=1$.
We shall refer to $\calZ(B)$ as an exponential sum
associated with $B$. 
\begin{lemma}[\bf Exponential Sum]
\label{lemma:expo_sum}
There is a deterministic algorithm with a runtime 
$O(n^3)$ that takes as input a matrix $B\in \calM_n$ and
outputs integers $p,q\ge 0$ and $\alpha,\beta\in \{0,1\}$
such that $\calZ(B)=\alpha 2^p + i\beta2^q$.
\end{lemma}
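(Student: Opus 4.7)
The plan is to evaluate $\calZ(B)=\sum_{x\in\{0,1\}^n} i^{xBx^T}$ by a two-phase algorithm. Phase~1 uses odd diagonal entries of $B$ to eliminate variables one at a time; Phase~2, which begins once all remaining diagonal entries are even, evaluates the resulting $\mathbb{F}_2$-valued Gauss sum by standard Gaussian elimination.

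Given $B\in\calM_n$, I would decompose $xBx^T=B_{11}x_1+2x_1 v+W$ with $v=\sum_{k>1}B_{1k}x_k$ and $W=\sum_{j,k>1}B_{jk}x_j x_k$. Summing over $x_1\in\{0,1\}$ produces the partial factor $1+i^{B_{11}}(-1)^v$. If $B_{11}$ is odd, set $\epsilon=(-1)^{(B_{11}-1)/2}\in\{\pm 1\}$; a short calculation gives
\[
1+i^{B_{11}}(-1)^v \;=\; (1+\epsilon i)\,(-\epsilon i)^{v},
\]
and since $(-\epsilon i)^{v}=i^{c\sum_{k>1}B_{1k}x_k}$ for the appropriate $c\in\{1,3\}$, the $x$-dependent phase can be absorbed by the diagonal update $B'_{kk}\leftarrow B_{kk}+cB_{1k}\pmod 4$ for $k>1$, leaving the off-diagonals of $W$ unchanged. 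This yields $\calZ(B)=(1+\epsilon i)\,\calZ(B')$ with $B'\in\calM_{n-1}$, and one iterates as long as the current matrix has some odd diagonal entry. Each iteration costs $O(n)$ work and peels off one variable, so Phase~1 costs $O(n^2)$ in total.

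Once all diagonal entries of the current matrix (say of size $m\le n$) are even, the exponent $xBx^T$ is identically even and $i^{xBx^T}=(-1)^{L(x)+Q(x)}$, where $L$ is the $\mathbb{F}_2$-linear function with coefficients $B_{jj}/2$ and $Q$ is the $\mathbb{F}_2$-quadratic form with coefficients $B_{jk}$ for $j<k$. The residual sum $\sum_x(-1)^{L(x)+Q(x)}$ is a standard $\mathbb{F}_2$-valued Gauss sum that I would evaluate in $O(m^3)$ time by bringing $Q$ to canonical symplectic form via $\mathbb{F}_2$-Gaussian elimination: the sum factorizes across symplectic $2\times 2$ blocks and singletons, each contributing either $0$ or a prescribed power of $2$. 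Combining this with the Phase~1 prefactor, which is a product of terms drawn from $\{1+i,1-i\}$, and using the identities $(1+i)(1-i)=2$ and $(1\pm i)^2=\pm 2i$, the final value collapses to $i^{a}\cdot 2^{p}\cdot(1+i)^{b}$ with $b\in\{0,1\}$, matching the claimed normal form $\alpha 2^p+i\beta 2^q$ after absorbing signs and the overall power of $i$ into $\alpha,\beta$.

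The total runtime is $O(n^3)$, dominated by the Phase~2 Gaussian elimination. I expect the main care to be needed in two places: (i) verifying that each Phase~1 diagonal update indeed keeps $B'$ inside $\calM_{n-1}$, which is immediate since the update is performed modulo $4$ and preserves symmetry; and (ii) tracking the accumulated factors of $(1+i)$ versus $(1-i)$ throughout Phase~1 so that the final simplification using $(1+i)(1-i)=2$ produces the claimed output format without accumulating spurious complex phases.
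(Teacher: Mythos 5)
Your Phase~2 and the overall two-phase plan are reasonable, but Phase~1 as written contains a genuine error. The identity $1+i^{B_{11}}(-1)^v=(1+\epsilon i)(-\epsilon i)^v$ holds only when $v\in\{0,1\}$: the left side has period $2$ in $v$ while the right side has period $4$, and here $v=\sum_{k>1}B_{1k}x_k$ is a genuine integer that can exceed $1$. Consequently the purely diagonal update $B'_{kk}\leftarrow B_{kk}+cB_{1k}$, which reproduces $i^{cv}=(-\epsilon i)^{v}$ rather than $(-\epsilon i)^{v\bmod 2}$, introduces a spurious sign $(-1)^{\lfloor v/2\rfloor}=(-1)^{\sum_{j<k\in S}x_jx_k}$ (with $S$ the support of the first row off-diagonal) on every configuration with $v\ge 2$. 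A concrete counterexample: take $n=3$, $B_{11}=1$, $B_{12}=B_{13}=1$ and all other entries $0$. Direct evaluation gives $\calZ(B)=4$, whereas your recipe yields the prefactor $(1+i)$ times $\calZ(B')$ with $B'=\mathrm{diag}(3,3)$, i.e.\ $(1+i)(1-i)^2=2-2i\neq 4$. So the step "leaving the off-diagonals of $W$ unchanged" is exactly where the argument fails.

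The repair is forced on you and is precisely the ingredient the paper uses: to convert a parity $v\bmod 2$ of several bits into a $\ZZ_4$-valued form you must add the pairwise cross terms, via $i^{\,c(v\bmod 2)}=i^{\,c\sum_{k\in S}x_k}\,(-1)^{\sum_{j<k\in S}x_jx_k}$ (the paper's identity $i^{Kx^T}=(-1)^{\sum_{\alpha<\beta}K_\alpha K_\beta x_\alpha x_\beta}\,i^{Kx^T\bmod 2}$, applied there once globally with one extra variable). So your elimination step must also flip the off-diagonal entries $B_{jk}$ for $j<k$ in $S$ (in the counterexample this sets $B'_{23}=1$ and indeed restores $(1+i)(2-2i)=4$). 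This makes each Phase~1 step cost $O(n^2)$ rather than $O(n)$, which still gives $O(n^3)$ overall, so the runtime claim survives; but without this quadratic correction the algorithm computes the wrong value. With that fix, your route (peel off odd-diagonal variables one at a time, then evaluate the residual $\FF_2$ Gauss sum by reduction to canonical form) is a legitimate alternative to the paper's proof, which instead splits $\calZ$ into real and imaginary parts as two $\ZZ_2$-valued sums on $n+1$ variables and eliminates two variables per step using an off-diagonal asymmetry.
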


The desired estimate of $\|\psi\|^2$ can now be obtained by sampling
i.i.d. random matrices $A_1,\ldots,A_L\in \calM_n$ and computing the empirical mean
$\eta= L^{-1}(\eta_{A_1}+\ldots +\eta_{A_L})$. Indeed, 
Lemma~\ref{lemma:norm} and the Chebyshev inequality imply that
$\eta$ achieves the desired approximation Eq.~(\ref{approx})
with probability at least $3/4$ if $L=4\epsilon^{-2}$. 
The error probability can be reduced to any desired level $\delta$
by generating $K=O(\log{\delta^{-1}})$ independent estimates $\eta^1,\ldots,\eta^K$
as above such that each estimate $\eta^a$ 
satisfies  Eq.~(\ref{approx}) with probability at least $3/4$.
Let $\eta_{med}$ be the median of $\eta^1,\ldots,\eta^K$.
Then standard arguments show that $\eta_{med}$ satisfies Eq.~(\ref{approx}) with probability at least $1-\delta$.
Computing each sample $\eta_{A_i}$ using Lemmas~\ref{lemma:inner},\ref{lemma:expo_sum} takes time $O(kn^3)$.
Since the total number of samples 
is $KL=O(\epsilon^{-2}\log{\delta^{-1}})$, we arrive at Eq.~(\ref{runtime}).

Finally, let us sketch how to use the norm estimation for sampling 
$x\in \{0,1\}^n$ from  a distribution 
\[
P(x)=\frac{|\la x|\psi\ra|^2}{\|\psi\|^2}.
\] 
Let $P_w(x_1,\ldots,x_w)$ be the marginal distribution  describing the first $w$
bits. We have $P_w(x)=\| \Pi \psi\|^2/\|\psi\|^2$, where
$\Pi$ projects the $j$-th qubit onto the state $x_j$ for $1\le j\le w$.
It can be written as 
\[
\Pi=2^{-w} \prod_{j=1}^w (I+(-1)^{x_j}Z_j)
\]
One can compute a rank-$k$ stabilizer decomposition of 
the state $\Pi |\psi\ra$ in time $O(kw n^2)$ using the Clifford simulator
of Section~\ref{clifford_sim}. 
By estimating the norms $\|\psi\|^2$ and $\|\Pi \psi\|^2$
one can approximate any
marginal probability $P_w(x)$ with a small multiplicative error.
In the same fashion one can approximate conditional probabilities
\[
P_w(x_w|x_1,\ldots,x_{w-1})=\frac{P_w(x_1,\ldots,x_w)}{P_{w-1}(x_1,\ldots,x_{w-1})}.
\]
Now one can sample the bits of $x$ one by one using the chain rule
\[
P(x)=P_1(x_1)P_2(x_2|x_1)\cdots P_n(x_n|x_1,\ldots,x_{n-1}).
\]
Clearly, the same method can be used to sample any marginal distribution of $P(x)$.

To avoid accumulation of errors, each of $O(n)$ steps in the chain rule requires an estimate
of the marginal probabilities  $P_w(x)$ with a multiplicative error $O(n^{-1})$.
(This guarantees that the full probability $P(x)$ is estimated using the chain rule
within a small multiplicative error.) 
This would require setting the precision $\epsilon$ in the norm estimation method as
$\epsilon=O(n^{-1})$. Thus the cost of each norm estimation would be $O(k n^3 \epsilon^{-2}) =O(k n^5)$.
Since the total number of norm estimations is $\Omega(n)$, the overall
runtime for generating a single sample from $P(x)$ with a small error would scale as
$O(k n^6)$. This quickly becomes impractical. 
However,  if our goal is to sample only $w$ bits from $P(x)$, a similar analysis
shows that the overall runtime scales as $O(k n^3 w^3)$.
Thus the sampling method based on the norm estimation is practical only for small values of $w$.
In contrast, Metropolis simulator allows one to sample all $n$ output bits
and has runtime  $O(k nT)$, where $T$ is the mixing time (which is generally unknown).

In the rest of this section we prove Lemmas~\ref{lemma:norm},\ref{lemma:inner},\ref{lemma:expo_sum}.

\begin{proof}[Proof of Lemma~\ref{lemma:norm}]
Let 
\[
Q_1=\EX_A |\phi_A\ra \la \phi_A|
\quad \mbox{and} \quad 
Q_2=\EX_A |\phi_A\ra \la \phi_A|^{\otimes 2}.
\]
Since the distribution of $A$ is invariant under shifts 
$A_{j,j}\gets A_{j,j}+2$, one concludes that $Q_1$ commutes with single-qubit Pauli-$Z$ operators.
Thus $Q_1$ is diagonal in the $Z$-basis. Furthermore, all diagonal matrix elements  of 
$|\phi_A\ra\la \phi_A|$ are equal to $2^{-n}$. This proves
$Q_1=2^{-n} I$ and thus  $\eta_A$ has expected value $2^n \la \psi|Q_1|\psi\ra=\|\psi\|^2$.

By definition,
\[
Q_2=4^{-n} \sum_{w,x,y,z} E(w,x,y,z) \cdot |w,x\rangle\langle y,z|\quad \mbox{where} \quad
E(w,x,y,z)=\EX_A \, i^{wAw^T + xAx^T - yAy^T -zAz^T}.
\]
Here the sum runs over all $n$-bit strings. 
We shall use the following fact.
\begin{prop}[\bf Ref.~\cite{bremner2016average}]
$E(w,x,y,z)=0$ unless 
$w+x=y+z {\pmod 4}$ and at least two of the strings $w,x,y$ coincide. 
\end{prop}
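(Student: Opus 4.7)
The plan is to evaluate $E(w,x,y,z)$ by factoring over the independent entries of the random matrix $A\in\calM_n$. Since $A$ is uniform, its diagonal entries $A_{jj}$ are i.i.d. uniform on $\{0,1,2,3\}$ and its off-diagonal entries $A_{jk}$ (for $j<k$) are independent uniform bits, all mutually independent. Because $w_j\in\{0,1\}$ satisfies $w_j^2=w_j$, we can expand
\[
wAw^T = \sum_{j} A_{jj}\, w_j \;+\; 2\sum_{j<k} A_{jk}\, w_j w_k
\]
and likewise for $x,y,z$. Plugging into the defining expression for $E$ and using independence, the expectation factorises into a product of single-coordinate diagonal factors times pairwise off-diagonal factors.

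Each diagonal factor for position $j$ is $\EX_{A_{jj}}\bigl[i^{A_{jj}(w_j+x_j-y_j-z_j)}\bigr]$, which vanishes unless $w_j+x_j-y_j-z_j\equiv 0\pmod 4$. Since $w_j+x_j,\,y_j+z_j\in\{0,1,2\}$, this congruence is actually the integer equality $w_j+x_j=y_j+z_j$. Each off-diagonal factor for a pair $j<k$ is $\EX_{A_{jk}}\bigl[(-1)^{A_{jk}(w_jw_k+x_jx_k+y_jy_k+z_jz_k)}\bigr]$, which vanishes unless $w_jw_k+x_jx_k+y_jy_k+z_jz_k\equiv 0\pmod 2$. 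If either of these conditions fails for any index or pair, the whole product vanishes, establishing the first part of the claim.

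To obtain the coincidence condition, assume both of the above hold and analyse the bit-level implications. The diagonal equality restricted to $\{0,1\}$ values leaves only two possibilities per coordinate: either $w_j=y_j$ and $x_j=z_j$, or $w_j=z_j\neq y_j=x_j$. Partition the indices as $S_1=\{j:w_j=y_j\}$ and $S_2=\{j:w_j\neq y_j\}$. A direct computation shows the off-diagonal parity condition is automatically satisfied on pairs inside $S_1$ and on pairs inside $S_2$; however, for a mixed pair $j\in S_1,\,k\in S_2$ it reduces to $(w_j+x_j)(w_k+y_k)\equiv 0\pmod 2$, and since $w_k+y_k=1$ on $S_2$, this forces $w_j=x_j$ for every $j\in S_1$ (provided $S_2\neq\emptyset$). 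Two cases then remain: if $S_2=\emptyset$ then $w=y$ globally, while if $S_2\neq\emptyset$ then on $S_1$ we get $w_j=x_j=y_j=z_j$ and on $S_2$ we already have $x_j=y_j$, hence $x=y$ globally. In either case, two of $w,x,y$ coincide, as claimed.

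The routine step is the factorisation and character-sum evaluation; the main obstacle I anticipate is the bit-level case analysis in the final paragraph, specifically keeping careful track of which pairs $(j,k)$ across the $S_1,S_2$ partition yield nontrivial constraints, and verifying that the mixed-pair constraint really is $(w_j+x_j)(w_k+y_k)\pmod 2$ rather than some accidentally weaker relation.
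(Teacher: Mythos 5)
Your proof is correct. Note first that the paper itself does not prove this proposition: it is imported verbatim as Lemma 11 of Ref.~\cite{bremner2016average}, so there is no in-paper argument to compare against. Your self-contained derivation is the natural one and it checks out in detail: writing $wAw^T=\sum_j A_{jj}w_j+2\sum_{j<k}A_{jk}w_jw_k$ (valid since $w_j^2=w_j$ and $A$ is symmetric) and using the independence and uniformity of the entries of $A\in\calM_n$, the expectation factorizes; the diagonal factors force $w_j+x_j-y_j-z_j\equiv 0\pmod 4$, which for bits is the integer equality $w_j+x_j=y_j+z_j$, and the off-diagonal factors force $w_jw_k+x_jx_k+y_jy_k+z_jz_k\equiv 0\pmod 2$ for every pair $j<k$. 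Your case analysis over $S_1=\{j:w_j=y_j\}$ and $S_2=\{j:w_j\neq y_j\}$ is also sound: pairs within $S_1$ or within $S_2$ satisfy the parity condition automatically, and for a mixed pair the condition reduces to $(w_j+x_j)(w_k+x_k)\equiv 0\pmod 2$ (your factor $w_k+y_k$ equals $w_k+x_k$ since $y_k=x_k$ on $S_2$, and it equals $1$ there), forcing $w_j=x_j$ on $S_1$ whenever $S_2\neq\emptyset$; hence $x=y$ in that case and $w=y$ when $S_2=\emptyset$, which gives the coincidence claim. The only thing I would tighten editorially is to state explicitly at the outset that each factor is either $0$ or $1$, so $E$ vanishes as soon as any single condition fails; but this is cosmetic. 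In short: correct, and a useful elementary replacement for the external citation.
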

\begin{proof}
By definition, diagonal entries $A_{p,p}\in \ZZ_4$ 
and off-diagonal entries $A_{p,q}=A_{q,p}\in \ZZ_2$
are i.i.d. uniform random variables.
The entry $A_{p,p}$ contributes a factor $i^{A_{p,p}(w_p+x_p-y_p-z_p)}$ to $E(w,x,y,z)$.
Thus  $E(w,x,y,z)=0$ unless
\be
\label{E0part1}
w_p+x_p=y_p+z_p {\pmod 4}
\ee
for all $p$.  This proves the first claim.
The entry $A_{p,q}=A_{q,p}$
contributes a factor 
\begin{equation}
	(-1)^{A_{p,q}(w_p w_q + x_p x_q - y_p y_q - z_p z_q)} \nonumber
\end{equation}	
to $E(w,x,y,z)$. Thus  $E(w,x,y,z)=0$ unless
\be
\label{E0part2}
w_p w_q + x_p x_q - y_p y_q - z_p z_q = 0 {\pmod 2}.
\ee
From Eq.~(\ref{E0part1}) one gets $z_p = w_p+x_p+y_p {\pmod 2}$.
Substituting this expression for $z_p$ into Eq.~(\ref{E0part2}) one
 concludes
that $E(w,x,y,z)=0$ unless
\be
\label{E0part3}
(w_p x_q + w_q x_p) + (x_p y_q + x_q y_p) + (y_p w_q + y_q w_p) = 0 {\pmod 2}
\ee 
for all $p<q$. If $w=x=y$ then there remains  nothing to prove. 
Otherwise, there exists an index $p\in [n]$ such that 
exactly two of the variables $w_p,x_p,y_p$ coincide.
Since Eq.~(\ref{E0part3}) 
is symmetric under permutations of $w,x,y$, 
assume wlog that $x_p=y_p\ne w_p$.
Consider two cases.\\
{\em Case~1:} $x_p=y_p=0$ and $w_p=1$. Substituting this into Eq.~(\ref{E0part3}) one gets
$y_q=x_q$ for all $q\ne p$. Thus $x=y$.  \\
{\em Case~2:} $x_p=y_p=1$ and $w_p=0$. Substituting this into Eq.~(\ref{E0part3}) one gets
$y_q+x_q+w_q+w_q=0{\pmod 2}$ for all $q\ne p$, that is, $x=y$.\\
We conclude that at least two of the strings $w,x,y$ coincide.
\end{proof}
Let us consider the cases when $E(w,x,y,z)\ne 0$.
 {\em Case~1:} $w=x$.
Then $y+z=2x {\pmod 4}$ which is possible only if $y=z$
and thus $w=x=y=z$.
{\em Case~2:} $w=y$. Then $x=z$ and $E(y,x,y,x)=1$.
{\em Case~3:} $w=z$. Then $x=y$ and $E(z,x,x,z)=1$.
The above shows that 
non-zero contributions to $Q_2$ come only from the terms
$E(w,x,w,x)=E(w,x,x,w)=1$.  Thus 
\[
Q_2=4^{-n} (I+\mathrm{SWAP}) - 4^n \sum_x |x,x\rangle\langle x,x|,
\]
Here the last term is introduced to avoid overcounting since the terms with
$w=x=y=z$ appear in all three cases. We arrive at
\[
\EX_A (\eta_A^2) = 4^n \langle \psi^{\otimes 2} |Q_2|\psi^{\otimes 2}\rangle 
\le  \langle \psi^{\otimes 2} |I+\mathrm{SWAP}|\psi^{\otimes 2}\rangle
=2\|\psi\|^4.
\]
It follows that $\eta_A$ has variance at most $\|\psi\|^4$.
\end{proof}
\begin{proof}[Proof of Lemma~\ref{lemma:inner}]
Let $a\in \{0,1\}^n$ be an arbitrary string. From Eq.~(\ref{tableaux}) one easily gets
\[
U_C^{-1} X(a) U_C = \prod_{p=1}^n U_C^{-1} X_p^{a_p} U_C 
=i^{aJa^T} \cdot X(aF \modtwo )Z(aM \modtwo).
\]
Here $J\in \calM_n$ is defined in the statement of the lemma.
It follows that 
\[
U_C^{-1} |a\ra =U_C^{-1} X(a) U_C |0^n\ra = i^{aJa^T} |aF \modtwo\ra.
\] 
Therefore 
\[
U_C^{-1}|\phi_A\ra = 2^{-n/2} \sum_{x\in \{0,1\}^n} i^{x(A+J)x^T} |xF \modtwo\ra.
\]
Recall that $F G^T \modtwo=I$. Perform a change of variable $x=yG^T \modtwo$.
Then $x=yG^T + 2u$ for some integer vector $u$.
Using the fact that $A$ and $J$ are symmetric matrices one gets
\[
x(A+J)x^T=yG^T (A+J)G y^T + 4u(A+J)Gy^T + 4u(A+J)u^T.
\]
Denoting $K=G^T (A+J)G$ one gets
\be
\label{half1}
U_C^{-1}|\phi_A\ra 
=2^{-n/2} \sum_{y\in \{0,1\}^n} i^{y K y^T} |y\ra.
\ee
We have
\be
\label{half2}
U_H|s\ra = 2^{-|v|/2} \sum_{x\le v} (-1)^{s\cdot v + s\cdot x} |s\oplus x\ra,
\ee
Taking the inner product
of the states Eqs.~(\ref{half1},\ref{half2}) gives
\be
\label{aux_inner}
\la \phi |\phi_A\ra=\la s| U_H U_C^{-1} |\phi_A\ra = 2^{-(n+|v|)/2} (-1)^{s\cdot v} 
\sum_{x\le v} (-1)^{s\cdot x}\cdot  i^{ (s\oplus x)K (s\oplus x)^T}.
\ee
Writing $s\oplus x=s+x+2u$ for some integer vector $u$
and using the fact that $K$ is symmetric one gets
\[
(s\oplus x)K (s\oplus x)^T = (s+x)K(s+x)^T + 4uK(s+x)^T + 4uKu^T.
\]
It follows that 
\[
i^{(s\oplus x)K (s\oplus x)^T}=i^{sKs^T + xKx^T + 2xKs^T}.
\]
Combining this and Eq.~(\ref{aux_inner}) proves Eq.~(\ref{inner_explicit}).
\end{proof}

\begin{proof}[Proof of Lemma~\ref{lemma:expo_sum}]
Define a binary upper-triangular matrix $M$ of size $n\times n$
such that 
$M_{\alpha,\beta}=B_{\alpha,\beta}$  for $\alpha<\beta$.
Define binary vectors $L,K\in \{0,1\}^n$ such that 
$B_{\alpha,\alpha}=2L_\alpha + K_\alpha$ for all $\alpha$.
Then  $i^{xBx^T}=i^{q(x)}$, where $q\, : \, \{0,1\}^n \to \ZZ_4$
is a binary quadratic form  defined as 
\begin{equation}
\label{q}
q(x)=2\sum_{1\le \alpha< \beta \le n} M_{\alpha,\beta} x_\alpha x_\beta  + \sum_{1\le \alpha\le n} (2L_\alpha + K_\alpha) x_\alpha
{\pmod 4}.
\end{equation}
Our goal is to compute the exponential sum
\begin{equation}
\label{Z}
\calZ \equiv \sum_{x\in \{0,1\}^n} i^{q(x)}.
\end{equation}
The first observation is that exponential sums associated with $\ZZ_2$-valued quadratic forms
can be computed recursively.  Indeed, assume that $K_\alpha=0$ for all $\alpha$. 
Then 
\begin{equation}
\label{Zreal}
\calZ=\sum_{x\in \{0,1\}^n}  (-1)^{Q(x)} \quad  \mbox{where} \quad Q(x)=xM x^T + L x^T   {\pmod 2}.
\end{equation}
It will be convenient to consider more general quadratic forms $Q(x)$ as in Eq.~(\ref{Zreal})
where $M$ is an arbitrary binary matrix. We allow $M$ to be  non-symmetric
and  have non-zero diagonal. 

Consider first the trivial case when $M$ is a symmetric matrix. 
In this case all quadratic terms in $Q(x)$ cancel each other, that is, $Q(x)$ is linear.
Thus  $\calZ=2^n$ if $L=\mathrm{diag}(M)$ and $\calZ=0$ otherwise.

Suppose now that  $M$ is non-symmetric. 
We can assume wlog that $M_{1,2}\ne M_{2,1}$ (otherwise permute the variables). 
Then $M_{1,2}+M_{2,1}=1 {\pmod 2}$.
Write $x=(x_1,x_2,y)$ with $y\in \{0,1\}^{n-2}$. Define a partial sum 
\begin{equation}
\label{eq1partialsum}
\calZ(y) = \sum_{x_1,x_2\in \{0,1\}}\; (-1)^{Q(x_1,x_2,y)}=
\sum_{x_1,x_2\in \{0,1\}}\; (-1)^{x_1 x_2 + \mu_1(y)  x_1 + \mu_2(y) x_2 + Q_{else}(y)},
\end{equation}
where $Q_{else}(y)$ includes all terms in $Q(x)$ that do not depend on $x_1,x_2$, 
\[
\mu_1(y) = L_1  + M_{1,1} +  \sum_{3\le \alpha \le n} (M_{1, \alpha} + M_{\alpha,1}) y_\alpha
\equiv L_1 + M_{1,1} + m_1  y^T,
\]
\[
\mu_2(y)= L_2 +M_{2,2} +  \sum_{3\le \alpha \le n} (M_{2, \alpha}  + M_{\alpha,2}) y_\alpha 
\equiv L_2 + M_{2,2} +  m_2 y^T. 
\]
Here $m_1,m_2$ are row vectors of length $n-2$. 
A simple algebra shows that 
\begin{equation}
\label{identity}
\sum_{x_1,x_2\in \{0,1\}}\;  (-1)^{x_1 x_2 + \mu_1  x_1 + \mu_2 x_2}= 2(-1)^{\mu_1 \mu_2}
\qquad \mbox{for all $\mu_1,\mu_2\in \{0,1\}$}.
\end{equation}
Substituting this identity into Eq.~(\ref{eq1partialsum}) gives
\begin{equation}
\calZ=\sum_{y\in \{0,1\}^{n-2}}\;  \calZ(y)=2 (-1)^{(L_1+M_{1,1})( L_2 + M_{2,2}) } \sum_{y\in \{0,1\}^{n-2}}\; (-1)^{Q'(y)},
\end{equation}
where $Q'(y)$ is a quadratic form that depends on  $n-2$ variables:
\begin{equation}
\label{eq5}
Q'(y)=y( M_{else} + m_1^T m_2 ) y^T + (L_{else} +  [L_1 + M_{1,1}] m_2 + [L_2 + M_{2,2}] m_1 )y^T
\end{equation}
The matrix $M_{else}$ and the vector $L_{else}$ are determined by
$Q_{else}(y)=yM_{else} y^T + L_{else} y^T$.
We have reduced the exponential sum problem with $n$ variables
to the one with $n-2$ variables. 
Clearly, the coefficients of $Q'(y)$ can be  computed 
in time $O(n^2)$. The overall runtime is $\sum_{k=1}^n O(k^2)=O(n^3)$.
This gives an algorithm for computing the exponential sum for a $\ZZ_2$-valued
quadratic form. 

{\em Remark:} The most time-consuming step is getting the matrix
$M_{else} + m_1^T m_2$. Since the arithmetics is mod-2, this amounts to flipping all
bits of $M_{else}$ in a submatrix formed by rows $i\in m_1$ and by columns $j\in m_2$.

Consider now a $\ZZ_4$-valued form $q(x)$ defined in Eq.~(\ref{q}).
Define a $\ZZ_2$-valued form 
\begin{equation}
\label{Q}
Q(x)=\sum_{1\le \alpha<\beta \le n} (M_{\alpha,\beta} + K_\alpha K_\beta) x_\alpha x_\beta
+ \sum_{1\le \alpha \le n} K_\alpha x_\alpha x_{n+1} + \sum_{1\le \alpha\le n} L_\alpha x_\alpha {\pmod 2}.
\end{equation}
\begin{prop}
Let $\calZ$ be the exponential sum defined by Eqs.~(\ref{q},\ref{Z}). Then 
\begin{equation}
\label{real}
\mathrm{Re}(\calZ)=\frac12 \sum_{x\in \{0,1\}^{n+1}} \; (-1)^{Q(x)}
\quad \mbox{and} \quad
\mathrm{Im}(\calZ)=\frac12 \sum_{x\in \{0,1\}^{n+1}} \; (-1)^{Q(x)+x_{n+1}}.
\end{equation}
\end{prop}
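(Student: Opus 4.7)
The plan is to compute $i^{q(x)}$ explicitly by splitting $q(x)$ into its ``even part'' (the $2M$ and $2L$ terms) and its ``$\ZZ_4$ part'' $s(x)\equiv \sum_\alpha K_\alpha x_\alpha$, and then handle the indicator for the parity of $s(x)$ by introducing the auxiliary variable $x_{n+1}$. Concretely, from Eq.~\eqref{q} one has
\[
i^{q(x)} = (-1)^{\sum_{\alpha<\beta} M_{\alpha,\beta} x_\alpha x_\beta + \sum_\alpha L_\alpha x_\alpha}\cdot i^{s(x)},
\]
so the only non-$\ZZ_2$ ingredient is the scalar $i^{s(x)}$, which takes values in $\{\pm 1,\pm i\}$ depending on $s(x)\pmod 4$. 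I expect this reduction to be straightforward from Eq.~\eqref{q}; the arithmetic content of the proposition therefore lives entirely in $i^{s(x)}$.

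The core step is a parity identity relating $\lfloor s(x)/2\rfloor \pmod 2$ to a quadratic form in $x$. Since $K_\alpha,x_\alpha\in\{0,1\}$ we have $(K_\alpha x_\alpha)^2 = K_\alpha x_\alpha$, hence $s(x)^2 - s(x) = 2\sum_{\alpha<\beta} K_\alpha K_\beta x_\alpha x_\beta$, which gives $\binom{s(x)}{2}\equiv \sum_{\alpha<\beta} K_\alpha K_\beta x_\alpha x_\beta \pmod 2$. A direct check for both parities of $s$ shows that $\binom{s}{2}\equiv \lfloor s/2\rfloor \pmod 2$, so
\[
\mathrm{Re}(i^{s(x)}) = (-1)^{\sum_{\alpha<\beta} K_\alpha K_\beta x_\alpha x_\beta}\,[s(x)\text{ even}], \qquad
\mathrm{Im}(i^{s(x)}) = (-1)^{\sum_{\alpha<\beta} K_\alpha K_\beta x_\alpha x_\beta}\,[s(x)\text{ odd}].
\]
Combining this with the first display absorbs the ``carry'' term $K_\alpha K_\beta$ into the $M$-coefficients of the $\ZZ_2$-quadratic form, reproducing exactly the off-diagonal part of $Q(x)$ in Eq.~\eqref{Q}.

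Finally, I would introduce $x_{n+1}\in\{0,1\}$ to realize the parity indicators as a sum, using
\[
\sum_{x_{n+1}\in\{0,1\}}(-1)^{x_{n+1}\,s(x)} = 2\,[s(x)\text{ even}], \qquad
\sum_{x_{n+1}\in\{0,1\}}(-1)^{x_{n+1}(s(x)+1)} = 2\,[s(x)\text{ odd}].
\]
The term $x_{n+1}\,s(x) = x_{n+1}\sum_\alpha K_\alpha x_\alpha$ is precisely the bilinear term $\sum_\alpha K_\alpha x_\alpha x_{n+1}$ appearing in $Q(x)$, and the extra $+x_{n+1}$ in the imaginary case matches the explicit $+x_{n+1}$ in Eq.~\eqref{real}. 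Substituting these identities into $\mathrm{Re}(\calZ)=\sum_x \mathrm{Re}(i^{q(x)})$ and $\mathrm{Im}(\calZ)=\sum_x \mathrm{Im}(i^{q(x)})$, and interchanging the sums over $x$ and $x_{n+1}$, yields the two claimed formulas with the overall prefactor $\tfrac12$.

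The only delicate point is the carry identity $\binom{s}{2}\equiv \lfloor s/2\rfloor\pmod 2$ and its translation into the quadratic form $K_\alpha K_\beta x_\alpha x_\beta$; once this is in hand, the introduction of $x_{n+1}$ is a routine parity trick. I therefore expect the main obstacle to be bookkeeping of signs and making sure the quadratic contribution from the carry matches the $K_\alpha K_\beta$ coefficient in $Q(x)$ exactly, rather than any genuine technical difficulty.
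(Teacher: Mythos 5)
Your proposal is correct and follows essentially the same route as the paper: split $q(x)=2r(x)+Kx^T$, use the carry identity $i^{Kx^T}=(-1)^{\sum_{\alpha<\beta}K_\alpha K_\beta x_\alpha x_\beta}\,i^{Kx^T \bmod 2}$ (which you derive explicitly via $s^2-s$, where the paper simply asserts it), and then realize the parity indicators by summing over the auxiliary bit $x_{n+1}$, which is exactly the paper's step of identifying the two terms $\frac12\bigl(1\pm(-1)^{Kx^T}\bigr)$ with $x_{n+1}=0,1$. No gaps; the bookkeeping of the $K_\alpha K_\beta$ term and the $+x_{n+1}$ in the imaginary part matches Eq.~(\ref{Q}) and Eq.~(\ref{real}) precisely.
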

\begin{proof}
Write $q(x)=2r(x) + Kx^T  {\pmod 4}$, where $r(x)$ is a $\ZZ_2$-valued quadratic form.
Consider some $x\in \{0,1\}^n$ and let $\omega \equiv K x^T {\pmod 2}$.
One can easily check that
\[
i^{Kx^T } = (-1)^{\sum_{1\le \alpha<\beta\le n}\; K_\alpha K_\beta x_\alpha x_\beta} \cdot i^\omega.
\]
By definition $\omega\in \{0,1\}$ so that
\[
\mathrm{Re}(i^\omega)=\frac12 (1+(-1)^\omega) \quad \mbox{and} \quad
\mathrm{Im}(i^\omega)=\frac12 (1-(-1)^\omega).
\]
Define a $\ZZ_2$-valued form $Q'(x)=r(x) + \sum_{1\le \alpha<\beta\le n}\; K_\alpha K_\beta x_\alpha x_\beta$.
Then 
\[
\mathrm{Re}(i^{q(x)})=\frac12\left[  (-1)^{Q'(x)} + (-1)^{Q'(x) + K x^T}\right] \quad
\mbox{and} \quad
\mathrm{Im}(i^{q(x)})=\frac12\left[  (-1)^{Q'(x)} - (-1)^{Q'(x) + K x^T}\right].
\]
Finally, add an extra variable $x_{n+1}$ such that the two terms in the square brackets 
correspond to $x_{n+1}=0$ and $x_{n+1}=1$ respectively. We arrive at Eq.~(\ref{real})
with $Q(x,x_{n+1})=Q'(x) +x_{n+1}(K x^T)$.
\end{proof}
\end{proof}

{\em Remark:} Computing   exponential sums
associated with the real and imaginary parts
of $\calZ$ takes about the same time as computing a single exponential sum
Eq.~(\ref{Zreal}) because the forms $Q(x)$ and $Q(x)+x_{n+1}$
in Lemma~2 have the same quadratic parts. 

Numerics shows that the new algorithm 
for computing exponential sums
achieves a significant speedup as is shown in Table~\ref{table:1}. Altogether, the use of the phase-sensitive Clifford simulator, sampling with equatorial states, and the improved Exponential Sum routine lead to a significant performance increase in simulations. In Table~\ref{table:HStimings}, we compare the performance of the simulator in Ref.~\cite{bravyi2016improved} and this paper, when estimating the output probabilities of the Hidden Shift problem on $40$-qubits with the Sum-over-Cliffords method (see also Sections~\ref{sec:simulations} and \ref{sec:numericresults}).

\begin{table}[!h]
\centerline{
\begin{tabular}{r|c|c|c|c|c|c}
Number of variables $n$ & $\bf 10$ & $\bf 20$ & $\bf 30$ & $\bf 40$  & $\bf 50$ & $\bf 60$  \\
\hline
New runtime & $0.016$ & $0.017$ & $0.021$ & $0.023$ & $0.030$ & $0.036$  \\
\hline
BG16 runtime & $0.42$ & $0.50$ & $0.77$ & $1.10$ & $1.40$ & $1.72$  \\
\end{tabular}}
\caption{Average runtime in milliseconds  of
the new algorithm for computing exponential sums
and comparison with the algorithm of Ref.~\cite{bravyi2016improved}.
Both simulations were performed on a Linux PC with a 3.2$\mathrm{GHz}$ Intel~i5-6500 CPU.
}
\label{table:1}
\end{table}

\begin{table}[h]
\centerline{
	\begin{tabular}{c|c|c|c}
	Number of CCZ Gates & 2 & 4 & 6 \\
	\hline
	Number of states $\chi_{\Delta}$ & 39 & 149 & 497 \\
	\hline
	New Runtime $\left(\mathrm{s}\right)$ & 0.30 & 1.02 & 3.82 \\
	\hline
	BG16 Runtime $\left(\mathrm{s}\right)$ & 5.22 & 27.94 & 100.11 \\
	\end{tabular}
}
\caption{Average runtime of the Norm Estimation step in seconds, for the new implementation compared with that of Ref.~\cite{bravyi2016improved}. Norm Estimation is used to compute single qubit marginals on a $40$-qubit state, with precision $\Delta=0.3$. Both simulations were single-threaded, and run on a Linux PC with a 3.2$\mathrm{GHz}$ Intel i5-6500 CPU.}
\label{table:HStimings}
\end{table}

\section{Stabilizer rank}
In this Section, we describe bounds on the exact and approximate stabilizer rank. In subsection \ref{Sec_Symmetric_states}, we give the proof of Theorem~\ref{Thm_many_copies}, which proceeds by establishing an upper bound on the exact stabilizer rank of states symmetric under permutations of certain subsystems. As a consequence we will see that $\chi(\psi^{\otimes t}) \ll \chi(\psi)^t$ for modest $t$. In subsection \ref{Sec_Lemma_Proof} we prove Theorem~\ref{thm:randomCvec} using a Sparsification lemma that allows us to convert exact stabilizer decompositions into approximate stabilizer decompositions (with possibly fewer terms). In subsection \ref{Sec_Clifford_Magic_States}, we study the approximate stabilizer rank of Clifford magic states and establish Proposition \ref{thm:clifmagic}. Finally, in subsection \ref{Sec_ultra} we turn our attention to lower bounds and prove Proposition \ref{prop:lower_bound}.

\subsection{Exact stabilizer rank}
\label{Sec_Symmetric_states}
 Let us denote $\mathrm{Sym}_{n,t}$ as the subspace that is symmetric with respect to swaps between $t$ partitions with each partition holding $n$ qubits.  For instance, any $n$-qubit state $\psi$ satisfies $\psi^{\otimes t} \in \mathrm{Sym}_{n,t}$ for any $t$. Although the symmetric subspace also contains states entangled across these partitions.  Throughout this section we use $\mathrm{dim}( \ldots)$ to denote the dimension of a vector space and  $\mathrm{span}( \ldots)$ to denote the vector space spanned by a set of vectors.  Let us agree that when we write $\mathrm{dim}( \mathbb{S} )$ where $\mathbb{S}$ is a set of vectors (rather than a vector space) this means the dimension of the vector space spanned by $\mathbb{S}$.

This section provides a proof of Thm.~\ref{Thm_many_copies}, though we shall actually prove a more general result regarding the stabilizer rank of a subspace defined as follows
\begin{dfn}
	We define stabilizer rank $\chi(P)$ of a subspace $P$ to be the minimum $\chi$ such that there exists a set of $\chi$ stabilizer states $\mathbb{S}=\{ \phi_1,  \phi_2, \ldots , \phi_\chi \}$ satisfying $P \subset \mathrm{span}[ \mathbb{S} ]$. 
\end{dfn}  
Notice that given a set of stabilizer states $\mathbb{S}$ such that $\mathrm{Sym}_{n,t} \subseteq \mathrm{span}(\mathbb{S}) $, it follows that every element of the space $\mathrm{Sym}_{n,t}$ can be decomposed in terms of $|\mathbb{S}|$ stabilizer states.  Therefore, if $\Psi \in \mathrm{Sym}_{n,t}$ then $\chi(\Psi) \leq \chi(\mathrm{Sym}_{n,t})$.   As a special case, if $\Psi = \psi^{\otimes t}$ then  $\chi(\psi^{\otimes t}) \leq \chi(\mathrm{Sym}_{n,t})$.  Therefore,  Thm.~\ref{Thm_many_copies} follows as a corollary of the following result
\begin{lemma}
	\label{symmetric}
	Consider $\mathrm{Sym}_{n,t}$ for some nonzero $n$ and $t$.  It follows that for all $t \leq 5$ we have
	\begin{equation}
		\label{symmetric_inequality}
	\chi( \mathrm{Sym}_{n,t} ) = \mathrm{dim}[\mathrm{Sym}_{n,t}] = \binom{2^n + t -1}{t} ,
	\end{equation}
	where the round brackets denotes the binomial coefficient.
\end{lemma}
This has the direct and elegant consequence that for all single qubit states $\psi$ we have $\chi( \psi ^{\otimes t}  ) \leq t+1$ whenever $t \leq 5 $. 

\begin{proof}[Proof of Lemma \ref{symmetric}]  First we show that Eq.~\eqref{symmetric_inequality} holds for some $n$ and $t$ whenever there exists a set of stabilizer states $\mathbb{S}$ with the following properties:
	\begin{enumerate}
		 \item every $\Phi \in \mathbb{S}$ satisfies $\Phi \in \mathrm{Sym}_{n,t}$; and
\item $\mathrm{dim}(\mathrm{Sym}_{n,t}) = \mathrm{dim}( \mathbb{S} )$.
	\end{enumerate}	
 For any set of vectors $\mathbb{S}$, there exists a subset $\mathbb{S}' \subseteq \mathbb{S}$  that is a minimal spanning set, with $\mathrm{span}(\mathbb{S}')=\mathrm{span}(\mathbb{S})$ and $|\mathbb{S}'|=\mathrm{dim}(\mathbb{S})$.  Therefore, given a set that spans the symmetric space we can conclude that $\chi(\mathrm{Sym}_{n,t}  ) \leq \mathrm{dim}(\mathbb{S})$.  Furthermore, if $\mathbb{S}$ has the swap invariance property then $\mathrm{span}(\mathbb{S}) \subseteq \mathrm{Sym}_{n,t} $ and $\mathrm{dim}(\mathbb{S}) \leq \mathrm{dim}(\mathrm{Sym}_{n,t})$.   Combining these inequalities gives $\chi(\mathrm{Sym}_{n,t}  ) \leq\mathrm{dim}(\mathrm{Sym}_{n,t})$.  It is obvious that $\mathrm{dim}(\mathrm{Sym}_{n,t}) \leq \chi(\mathrm{Sym}_{n,t}  )$ and so $\chi(\mathrm{Sym}_{n,t}  ) = \mathrm{dim}(\mathrm{Sym}_{n,t})$.  Lastly, the dimension of the symmetric space is well-known and can for example be found in Ref.~\cite{zhu16}. 

Next, it remains to find a set $\mathbb{S}$ with the aforementioned properties for certain values of $n$ and $t$.  We consider sets of stabilizer states of the form $\mathbb{S}_{n,t}=\{ \ket{\phi_j}^{\otimes t} \}_j$ where $\{ \ket{\phi_j} \}_j =: \mathrm{STAB}_n$ is the set of all $n$-qubit stabilizer states.  This ensures property 1.  It remains to show when $\mathbb{S}_{n,t}$ has sufficiently large dimension (property 2).  We observe that the operator
\begin{equation}
	 \sigma_{n,t} :=\frac{1}{|\mathrm{STAB}_n|} \sum_{\psi_j \in \mathrm{STAB}_n} \kb{\psi_j}{\psi_j}^{\otimes t}
\end{equation}	
satisfies
\begin{equation}
	 \mathrm{rank}( \sigma_{n,t} ) = \mathrm{dim}( \mathbb{S}_{n,t}  ) .
\end{equation}
and so property 2 also holds whenever $\mathrm{rank}( \sigma_{n,t} )=\mathrm{dim}( \mathrm{Sym}_{n,t}  )$. 

Let us consider when $t \leq 3$ with no constraints on $n$.  We will use that the stabilizer states form a projective $3$-design~\cite{Webb16,kueng15,zhu16}.  The relevant property of such designs is that for $t \leq 3$ we know
\begin{equation}
	\sigma_{n,t} \propto \Pi_{n,t} ,
\end{equation}
where $\Pi_{n,t}$ is the projector onto $\mathrm{Sym}_{n,t}$.  Therefore, $\mathrm{rank}( \sigma_{n,t} )=\mathrm{rank}( \Pi_{n,t} )  =\mathrm{dim}( \mathrm{Sym}_{n,t}  )$ and the lemma is proven for  the case of $t \leq 3$.

For $t=4$, it is known that the stabilizer states are not a projective $4$-design and so $\sigma_{n,4}$ is not proportional to the symmetric projector~\cite{zhu16}.  However, the stabilizer states ``fail gracefully" to be a projective $4$-design~\cite{zhu16}, such that the deviation of $\sigma_{n,4}$ from $\Pi_{n,4}$ is sufficiently small that we still have $\mathrm{rank}(\sigma_{n,4})=\mathrm{rank}(\Pi_{n,4})$. Ref.~\cite{gross2017schur} extends this result such that we can also deduce the following
\begin{claim}
	\label{RankEq}
	For all $n$ and $t \leq 5$ we have  $\mathrm{rank}(\sigma_{n,t})=\mathrm{rank}(\Pi_{n,t})$.
\end{claim}	
This suffices to prove Lem.~\ref{symmetric}.  In contrast, this proof technique can not extend to $t>5$ due to the stabilizer testing algorithm of Ref~\cite{gross2017schur}. This algorithm shows that there exists a projector $W$ such that $\mathrm{Tr}[W\sigma_{n,6} ]=0$ but  $\mathrm{Tr}[W\Pi_{n,6}] \neq 0$, which entails $\mathrm{rank}(\sigma_{n,6})<\mathrm{rank}(\Pi_{n,6})$.    

Although Claim~\ref{RankEq} can be deduced from Ref.~\cite{gross2017schur}, it is not explicitly shown, so we provide the details here.  Examples 4.27 and 4.28 of Ref.~\cite{gross2017schur},  show that
\begin{align}
	\sigma_{n,4} & \propto   \Pi_{n,4}  + a_{n}\Pi_{n,4}  P_{[4]}^{\otimes n} \Pi_{n,4} , \\ \nonumber 
	\sigma_{n,5} & \propto   \Pi_{n,5}  + b_{n}\Pi_{n,5}  P_{[5]}^{\otimes n} \Pi_{n,5} .
\end{align}
where $a_{n}$ and $b_n$ are positive constants and $P_{[4]}$ and $P_{[5]}$ are projectors onto a stabilizer code 
\begin{align}
	P_{[4]} & = \frac{1}{4}( \id^{\otimes 4} + X^{\otimes 4} + Y^{\otimes 4} + Z^{\otimes 4})  \\ \nonumber
	P_{[5]} & = P_{[4]} \otimes \id 
\end{align}
Since $P_{[4]}$ and $P_{[5]}$ are positive operators, so too are $a_n\Pi_{n,4} P_{[4]}^{\otimes n} \Pi_{n,4}$ and $b_n\Pi_{n,5} P_{[5]}^{\otimes n} \Pi_{n,5}$.  In general, if $M$ and $N$ are positive operators we have $\mathrm{rank}(M+N)\geq \mathrm{rank}(M)$.  Therefore, for $t=4, 5$ we have $\mathrm{rank}(	\sigma_{n,t})\geq \mathrm{rank}( \Pi_{n,t})$, which implies the desired rank equivalence and completes the proof. \end{proof}

\begin{figure}
	\centering
	\includegraphics[width=250pt]{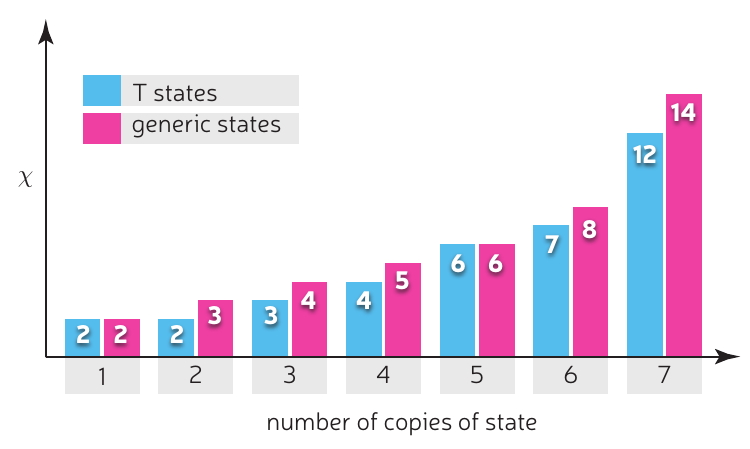}
	\caption{The exact stabilizer rank (numerically found) for $n$ copies of a single qubit state: for the $T$ state and for generic single qubit states.}
	\label{SingleQubitStates}
\end{figure}

\begin{table}

\centering

	\begin{tabular}{|c||ccccc|} 	  \hline
		&	$t=1$ &  $t=2$ &  $t=3$ & $t=4$ & $t=5$ \\  \hline  \hline
		$n=1$  &	2 &  1.73205 &  1.5874 & 1.49535 & 1.43097 \\ 
		$n=2$  &  4 &  3.16228 & 2.71442 & 2.4323 & 2.23685 \\
		$n=3$ &   8 &  6           & 4.93242 & 4.26215&  3.79966 \\   \hline
	\end{tabular}	
	\caption{Upper bounds on $\chi(\psi^{\otimes t})^{1/t}$ where $\psi$ is an $n$ qubit state.  Asymptotically we have $\chi(\psi^{\otimes N}) \leq (\chi(\psi^{\otimes t})^{1/t})^N$.  Since lower values lead to lower simulation overhead we see a significant advantage in using blocks of size up to 5.} 	\label{Tab_Numbers}
\end{table}

We reflect that we have proved Lem.~\ref{symmetric}, from which Thm.~\ref{Thm_many_copies} follows immediately.  For single qubit states ($n=1$) this entails that 
\begin{equation}
		\chi( \psi^{\otimes t}) \leq t + 1 , \forall t \leq 5 \label{EqOneQubit} .
\end{equation}	 
The rest of this subsection discusses numerical experiments into whether this inequality is tight.

Clearly the bound is loose for stabilizer states since then we have $\chi( \psi^{\otimes t})=1 < t+1$.  However, Clifford magic states are also exceptional for many $t$ values.  Bravyi, Smith and Smolin~\cite{Bravyi16stabRank} discuss the stabilizer rank of single qubit states that are an eigenstate of some Clifford unitary.  For instance, the $\ket{T}$  Clifford magic states are exceptional in that for $ 2 \leq t \leq 4$ we have that $\chi(T^{\otimes t})=t < t+1$, which we illustrate in Fig.~\ref{SingleQubitStates}.  We remark that $\ket{T}$ has the Clifford symmetry $C_T \ket{T}=\ket{T}$ for $C_T = TXT^\dagger$. In total there are 12 single qubit states in the Clifford orbit of $\ket{T}$. An additional class of Clifford symmetric states is the Clifford orbit of the face state $\ket{f}$ 
\begin{equation}
	\kb{f}{f} = \frac{1}{2} \left( \id + \frac{X + Y + Z}{\sqrt{3}} \right) ,
\end{equation}	
which comprises 8 different states.  The face state is an eigenstate of the Clifford $C_F=e^{-i \pi /12}SH$  that cyclically permutes Pauli $X$,$Y$ and $Z$.   Bravyi, Smith and Smolin reported (see conjecture 1 of Ref.~\cite{Bravyi16stabRank}) that $\chi(f^{\otimes t})$ appears to equal $\chi(T^{\otimes t})$, providing another class of states where Eq.~\eqref{EqOneQubit} is not tight.

Next, we ask if there are any other single qubit states for which Eq.~\eqref{EqOneQubit} is not tight.  We proceed by a heuristic, numerical search, extending the search method of Ref.~\cite{Bravyi16stabRank}. To find a decomposition of a state $\ket{\psi}$, we use an objective function $F_{\Psi}\left(\{\ket{\phi_{j}}\}\right)=||\Pi\ket{\Psi}||$ where $\Pi$ is a projector onto $\mathrm{span}\left(\{\ket{\phi_{j}}\}\right)$.   We start by choosing a set of $k$ random stabilizer states $\{\ket{\phi_{j}}\}$, with $k=2$ on the first run.  Random stabilizer states were obtained by generating a random binary matrix, using the algorithm of Garcia et al. to convert it to a canonical stabilizer tableau, and computing the corresponding state vector~\cite{garcia2012efficient}.  Let the value of the objective function at a given timestep be $F$. We update one stabilizer state in the set by applying a random Pauli projector, and evaluate the objective function on the new set $F_{\Psi}\left(\{\ket{\phi_{j}}\}'\right)=F'$. If $F'>F$ then we accept the move, otherwise the new decomposition is accepted with a probability $p=\text{exp}\left[-\beta\left(F-F'\right)\right]$, where $\beta$ is an inverse temperature parameter that decreases as the walk proceeds~\cite{Bravyi16stabRank}.  If $F$ equals 1 at any point in the walk, we halt and conclude $\chi(\Psi) \leq k$.  If $F$ does not converge to unity within a constant number of steps, we increment $k$ and start again.

Random typical states were generated as $\ket{\psi}=U\ket{0}$, where $U$ are Haar random unitaries.  We sampled 1000 Harr random states and numerically estimated the stabilizer rank of $\Psi=\psi^{\otimes t}$ using the above method. In every instance, the best decomposition we found saturated the inequalities of Eq.~\eqref{EqOneQubit}. We also examined conjecture 1 of~\cite{Bravyi16stabRank}, by searching for decompositions of single-qubit Clifford magic states. All decompositions found were below the bound of Eq.~\eqref{EqOneQubit}.

Although these numerical searches were not exhaustive, the results support the hypothesis that Eq.~\eqref{EqOneQubit} is an equality for typical single qubit states.  This supports the conjecture that Eq.~\eqref{EqOneQubit} is tight, if and only if the state has no Clifford symmetries. 

As a closing remark, we comment on consequences of these results for simulation overheads.  If a circuit contains many copies of the same multi-qubit phase gate, simulation overheads are reduced by working with blocks of magic states as shown in Table.~\ref{Tab_Numbers}.

\label{Sec_approx_stab_rank}

\subsection{Sparsification Lemma}
\label{Sec_Lemma_Proof}

Our new bounds  on the approximate stabilizer rank in Theorem~\ref{thm:randomCvec}
are obtained using the following   lemma.
It shows how to convert a stabilizer decomposition of some target state $\psi$ 
with a small $l_1$ norm to 
a sparse stabilizer decomposition of $\psi$.
\begin{lemma}[\bf Sparsification]
	\label{lem:randomCvec}
Let $\psi$ be a normalized $n$-qubit state with a decomposition $\ket{\psi} = \sum_j c_j \ket{\phi_j} $ where all $\phi_j $ are normalized stabilizer states and $c_j \in \mathbb{C}$.  For any integer $k$ there exists a distribution of random quantum states $|\Omega\rangle$ of the form 
$\ket{\Omega}=\frac{\|c\|_1}{k} \sum_{\alpha=1}^k \ket{\omega_\alpha}$ where each $\ket{\omega_\alpha}$ is (up to a global phase) one of the states $\{\ket{\phi_j} \}$ and
\begin{equation}
 \mathbb{E}\left( \, \| \psi -\Omega \|^2\right)  = \frac{\|c \|^2_1}{k},
\label{eq:meanomega}
\end{equation}
where $\|\vec{c} \|_1 := \sum_j |c_j|$ and $\| \psi \| = \sqrt{ \bk{\psi}{\psi} }$.
\end{lemma}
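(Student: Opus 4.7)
The plan is to build $\ket{\Omega}$ by independent sampling from a cleverly chosen distribution on the summands $\{\ket{\phi_j}\}$, so that $\Omega$ is an unbiased estimator of $\psi$ whose variance decays like $1/k$. Concretely, I would let $p_j = |c_j|/\|c\|_1$ and define a single-sample random state $\ket{\omega}$ by setting $\ket{\omega} = (c_j/|c_j|)\ket{\phi_j}$ with probability $p_j$ (this is how the global phase gets absorbed into the sample). A direct computation then yields the key single-sample identity
\begin{equation}
\mathbb{E}[\ket{\omega}] \;=\; \sum_j \frac{|c_j|}{\|c\|_1}\cdot \frac{c_j}{|c_j|}\ket{\phi_j} \;=\; \frac{1}{\|c\|_1}\ket{\psi}.
\end{equation}
Drawing $k$ i.i.d.\ copies $\ket{\omega_1},\ldots,\ket{\omega_k}$ and setting $\ket{\Omega}=(\|c\|_1/k)\sum_\alpha \ket{\omega_\alpha}$ gives a random state of the required form with $\mathbb{E}[\ket{\Omega}]=\ket{\psi}$.

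Next I would compute $\mathbb{E}\|\psi-\Omega\|^2$ by expanding the norm squared as $\|\Omega\|^2 - 2\operatorname{Re}\langle\Omega|\psi\rangle + 1$. The cross-term is immediate from unbiasedness: $\mathbb{E}\langle\Omega|\psi\rangle = \langle\psi|\psi\rangle = 1$. For $\mathbb{E}\|\Omega\|^2$ I would split the double sum $\sum_{\alpha,\beta}\langle\omega_\alpha|\omega_\beta\rangle$ into diagonal and off-diagonal pieces: diagonal terms give $1$ each since every $\ket{\omega_\alpha}$ is normalized, while independence of the $\omega_\alpha$'s together with the single-sample identity yields $\mathbb{E}\langle\omega_\alpha|\omega_\beta\rangle = 1/\|c\|_1^2$ for $\alpha\neq\beta$. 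Assembling these pieces gives
\begin{equation}
\mathbb{E}\|\Omega\|^2 \;=\; \frac{\|c\|_1^2}{k^2}\Bigl(k + k(k-1)/\|c\|_1^2\Bigr) \;=\; \frac{\|c\|_1^2}{k} + \frac{k-1}{k},
\end{equation}
so that $\mathbb{E}\|\psi-\Omega\|^2 = (\|c\|_1^2-1)/k \le \|c\|_1^2/k$, matching the stated bound (with the minor slack $1/k$ folded in using $\|c\|_1\ge\|\psi\|=1$).

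There is no genuine obstacle here: the only delicate points are (i) choosing the sampling weights proportional to $|c_j|$ rather than $|c_j|^2$, which is what makes $\ket{\omega}$ an unbiased estimator of $\psi/\|c\|_1$, and (ii) absorbing the phase $c_j/|c_j|$ into the sample, which is what the parenthetical ``up to a global phase'' in the lemma statement permits. Once these two choices are in place, the rest is a one-line second-moment computation exploiting independence of the $\omega_\alpha$. The main conceptual content of the lemma, rather than the proof itself, is to realize that this simple importance-sampling/Maurey-style argument lets one trade total $\ell_1$-weight $\|c\|_1^2$ for sparsity $k$ at a rate $\|c\|_1^2/k$, which will then be combined with the trivial bound $\|\psi-\Omega\|^2\ge \delta^2$ (whenever $\chi_\delta(\psi)>k$) and Markov's inequality to derive Theorem~\ref{thm:randomCvec}.
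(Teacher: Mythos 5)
Your proposal is correct and follows essentially the same route as the paper: sample each $\ket{\omega_\alpha}$ i.i.d.\ with probability $p_j=|c_j|/\|c\|_1$, absorbing the phase $c_j/|c_j|$ so that $\Omega$ is an unbiased estimator of $\psi$, then compute the second moment by splitting the double sum into diagonal and off-diagonal terms. Your exact computation actually gives $\mathbb{E}\|\psi-\Omega\|^2=(\|c\|_1^2-1)/k\le \|c\|_1^2/k$, which is slightly sharper than the equality asserted in Eq.~\eqref{eq:meanomega}; the paper's own proof shares this feature, passing from the bound $\mathbb{E}\langle\Omega|\Omega\rangle\le 1+\|c\|_1^2/k$ to the stated equality.
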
	

Theorem~\ref{thm:randomCvec} is a simple corollary of Lemma~\ref{lem:randomCvec}.
Indeed, assume that all $\phi_j$ are stabilizer states. Choosing $k=( \| c \|_1 / \delta )^2 $ we find that the right-hand side is upper-bounded by $\delta^2$.  Therefore there exists at least one $|\Omega\rangle$ (which is manifestly a sum of $k$ stabilizer states) that $\delta$-approximates $|\psi\rangle$.  This proves Theorem~\ref{thm:randomCvec}. 

Note that we can use Markov's inequality and Eq.~\eqref{eq:meanomega} to lower bound the probability that a randomly chosen $\Omega$ is a good approximation to $\psi$, e.g., 

\[
\mathrm{Pr}\left[\| \psi -\Omega \|^2\geq 2\delta^2\right]\leq 1/2  \quad \text{for } \quad k\geq \frac{\|c \|^2_1}{\delta^2}.
\]
Suppose that we randomly choose some $|\Omega\rangle$ as prescribed above. Can we estimate how well it approximates $\psi$? The following Lemma can be used for this purpose.
\begin{lemma}[\bf Sparsification tail bound]
Let $\psi,\Omega,k$ be as in Lemma \ref{lem:randomCvec}.  If we choose $k\geq \frac{\|c \|^2_1}{\delta^2}$ then 
\begin{equation}
\mathbb{E}\left[\langle \Omega|\Omega\rangle-1\right] \leq \delta^2 ,
\label{eq:expomega}
\end{equation}
and
\begin{equation}
\mathrm{Pr}\left[\|\psi-\Omega\|^2\leq \langle \Omega|\Omega\rangle-1+\delta^2\right] \geq 1-2\exp{\left(-\frac{\delta^2}{8F(\psi)}\right)}.
\label{eq:event}
\end{equation}
\label{lem:tailbound}
\end{lemma}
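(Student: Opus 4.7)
The plan is to reduce both claims to routine computations using the algebraic identity
\begin{equation*}
\|\psi-\Omega\|^2 - \bigl(\langle\Omega|\Omega\rangle - 1\bigr) = 2 - 2\,\mathrm{Re}\,\langle\psi|\Omega\rangle,
\end{equation*}
which is nothing more than the expansion of $\|\psi-\Omega\|^2$ combined with $\|\psi\|=1$. The sampling rule behind Lemma~\ref{lem:randomCvec} (pick $|\phi_j\rangle$ with probability $|c_j|/\|c\|_1$ and attach the phase $c_j/|c_j|$) gives $\mathbb{E}|\omega_\alpha\rangle = |\psi\rangle/\|c\|_1$, and therefore $\mathbb{E}\,\mathrm{Re}\,\langle\psi|\Omega\rangle = 1$. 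Taking expectation of the identity, \eqref{eq:expomega} follows immediately: $\mathbb{E}[\langle\Omega|\Omega\rangle - 1] = \mathbb{E}\|\psi-\Omega\|^2 \leq \|c\|_1^2/k \leq \delta^2$, where the first inequality is Lemma~\ref{lem:randomCvec} and the second is the hypothesis on $k$.

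For the tail bound \eqref{eq:event} (which I read as a ``$\leq$'' statement, since in the form printed it would contradict the small mean we just derived), the same identity shows that the bad event $\|\psi-\Omega\|^2 > \langle\Omega|\Omega\rangle - 1 + \delta^2$ coincides with the one-sided deviation $\mathrm{Re}\,\langle\psi|\Omega\rangle < 1 - \delta^2/2$. Now write $\mathrm{Re}\,\langle\psi|\Omega\rangle = \frac{1}{k}\sum_{\alpha=1}^{k} X_\alpha$ where $X_\alpha := \|c\|_1\,\mathrm{Re}\,\langle\psi|\omega_\alpha\rangle$ are i.i.d.\ with $\mathbb{E}X_\alpha = 1$. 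Since each $|\omega_\alpha\rangle$ is a stabilizer state up to a phase, the definition of the stabilizer fidelity yields $|\langle\psi|\omega_\alpha\rangle|^2 \leq F(\psi)$, hence $X_\alpha \in [-\|c\|_1\sqrt{F(\psi)},\,\|c\|_1\sqrt{F(\psi)}]$. Hoeffding's inequality applied with $t = \delta^2/2$ gives
\begin{equation*}
\Pr\!\left[\mathrm{Re}\,\langle\psi|\Omega\rangle \leq 1 - \tfrac{\delta^2}{2}\right] \;\leq\; 2\exp\!\left(-\frac{k\,\delta^4}{8\,\|c\|_1^2\,F(\psi)}\right),
\end{equation*}
and the hypothesis $k \geq \|c\|_1^2/\delta^2$ collapses the exponent to $-\delta^2/(8F(\psi))$, matching \eqref{eq:event}.

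The main (and really the only non-routine) input is the range bound $|\langle\psi|\omega_\alpha\rangle| \leq \sqrt{F(\psi)}$, which is what makes $F(\psi)$ appear in the exponent rather than a trivial constant. This is what allows one to diagnose the quality of the approximation by computing $\langle\Omega|\Omega\rangle$ alone, without ever touching $\psi$. I do not foresee any serious obstacles beyond bookkeeping of Hoeffding constants; the argument is essentially a short concentration computation built on top of Lemma~\ref{lem:randomCvec}.
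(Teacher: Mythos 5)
Your proof is correct and follows essentially the same route as the paper's: Hoeffding's inequality applied to the i.i.d.\ variables $X_\alpha=\|c\|_1\,\mathrm{Re}\langle\psi|\omega_\alpha\rangle$, whose range is controlled by $|\langle\psi|\omega_\alpha\rangle|\le\sqrt{F(\psi)}$, combined with the expansion of $\|\psi-\Omega\|^2$ and the bound $\mathbb{E}[\langle\Omega|\Omega\rangle]\le 1+\|c\|_1^2/k$ from the Sparsification Lemma, with the same constants in the exponent. Your reading of Eq.~\eqref{eq:event} as an upper bound on the error (i.e.\ the event $\|\psi-\Omega\|^2\le\langle\Omega|\Omega\rangle-1+\delta^2$ holding with probability at least $1-2\exp(-\delta^2/8F(\psi))$) is also what the paper's own proof actually establishes, so the inequality direction printed in the statement is indeed a typo that you diagnosed correctly.
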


Note that we are interested in cases where the stabilizer fidelity $F(\psi)$ is exponentially small as a function of the number of qubits $n$. In such cases the Lemma states that
\[
\|\psi-\Omega\|^2\leq \langle \Omega|\Omega\rangle-1+\delta^2 ,
\]
with all but vanishingly small probability if $n$ is sufficiently large. Moreover, the quantity $\langle \Omega|\Omega\rangle$ appearing in the above can be approximated to a given relative error using the norm estimation algorithm from Section \ref{Sec_fast_norm} which has runtime scaling linearly with $k$.

\begin{proof}[Proof of Lemma~\ref{lem:randomCvec}]
Define a probability distribution $p_j := |c_j| / || c ||_1$ and write
\begin{equation}
\ket{\psi} = \|c\|_1\sum_j p_j 	\ket{W_j} 
\end{equation}	
where $\ket{W_j}:= (c_j/|c_j|) 	\ket{\phi_j} $ are normalized stabilizer states. Now define a random variable $|\omega\rangle$ which is equal to $|W_j\rangle$ with probability $p_j$.   Then
\begin{equation}
|\psi\rangle=\|c\|_1\mathbb{E}\left[|\omega\rangle\right].
\end{equation}	
Let $k$ be a positive integer and consider a random state 
\begin{equation}
|\Omega\rangle=\frac{\|c\|_1}{k}\sum_{\alpha=1}^{k} |\omega_\alpha\rangle ,
\label{eq:Omega}
\end{equation}
where $\omega_1,\omega_2,\ldots,\omega_k$ are i.i.d random copies of $|\omega\rangle$.  By construction,  on average we have
\begin{equation}
 \mathbb{E} [ \bk{\psi}{\Omega} ] =  \mathbb{E} [ \bk{\Omega}{\psi} ] = 1
\label{eq:overlap}
\end{equation}
even though for any particular random sample $ \bk{\Omega}{\psi} \neq 1$.  In general, not only will $\Omega$ not be proportional to $\psi$, but $\Omega$  will not be correctly normalized.  However, the normalization can be bounded in expectation as follows
\begin{align}
\mathbb{E}\left[\langle \Omega |\Omega\rangle\right]&=\frac{\|c\|_1^2}{k^2}\mathbb{E}\left[\sum_{\alpha=1}^{k} \langle \omega_\alpha|\omega_\alpha\rangle\right]+\frac{\|c\|_1^2}{k^2}\mathbb{E}\left[\sum_{\alpha\neq \beta} \langle \omega_\alpha|\omega_\beta\rangle\right]\\
&=\|c\|_1^2\frac{\mathbb{E}\left[\langle \omega|\omega\rangle\right]}{k}+\frac{1}{k^2} k(k-1)\\
&\leq 1+\frac{\|c\|_1^2}{k}
\label{eq:omom}
\end{align}
where in the second line we used the fact that $\|c\|_1^2\mathbb{E}\left[\langle \omega_\alpha|\omega_\beta\rangle\right]=\langle \psi|\psi\rangle$ for $\alpha\neq \beta$. 

We are interested in the expected error 
\begin{align}
 \mathbb{E}\left[ \|\psi\rangle -|\Omega\rangle\|^2\right] & = \mathbb{E}\left[ \bk{\Omega}{\Omega} \right]- \mathbb{E}\left[ \bk{\Omega}{\psi} \right]-\mathbb{E}\left[ \bk{\psi}{\Omega} \right]+\mathbb{E}\left[ \bk{\psi}{\psi} \right]
\label{eq:experror}
\end{align} 
Using  $\bk{\psi}{\psi}=1$, Eq.~\eqref{eq:omom} and Eq.~\eqref{eq:overlap} we find 
\begin{align}
  \mathbb{E}\left[ \|\psi\rangle -|\Omega\rangle\|^2\right] & =  \frac{\|c\|_1^2}{k}.
\end{align}
This completes the proof of Lemma ~\ref{lem:randomCvec}.
\end{proof}
\begin{proof}[Proof of Lemma \ref{lem:tailbound}]
Equation \eqref{eq:expomega} follows directly from Eq.~\eqref{eq:omom} and the choice of $k$. Define random variables
\[
X_\alpha=\|c\|_1 \mathrm{Re}(\langle \psi|\omega_\alpha\rangle) \qquad 1\leq\alpha \leq k
\]
and let 
\[
\bar{X}=\frac{1}{k}\sum_{\alpha=1}^{k} X_\alpha=\mathrm{Re}(\langle \psi|\Omega\rangle).
\]
Then
\begin{equation}
\left|\mathrm{Re}(\langle \psi|\Omega\rangle)-1\right|=\left|\bar{X}-\mathbb{E}[\bar{X}]\right|.
\label{eq:repart}
\end{equation}
Now $\bar{X}$ is a sample mean of $k$ independent and identically distributed random variables $X_\alpha$,  each of which is bounded as
\begin{equation}
|X_\alpha|\leq \|c\|_1 |\langle \psi|\omega_\alpha\rangle| \leq \|c\|_1 \sqrt{F(\psi)}
\label{eq:xkbound}
\end{equation}
where in the last inequality we used the definition of stabilizer fidelity. Applying Hoeffding's inequality \cite{hoeffding1963probability} and using Eqs.~(\ref{eq:repart}, \ref{eq:xkbound}) gives
\begin{equation}
\mathrm{Pr}\left[\left|\mathrm{Re}(\langle \psi|\Omega\rangle)-1\right|\geq \frac{\delta^2}{2}\right]\leq 2\exp\left(-\frac{2k\delta^4}{4\left(2\|c\|_1\sqrt{F(\psi)}\right)^2}\right) \leq 2\exp\left(-\frac{\delta^2}{8F(\psi)}\right)
\label{eq:hoeff}
\end{equation}
where we used $k\geq \|c\|_1^2/\delta^2$. Finally, applying the triangle inequality to Eq.~\eqref{eq:experror} gives
\begin{equation}
\|\psi-\Omega\|^2 \leq \langle \Omega|\Omega\rangle-1+2\left|1-\mathrm{Re}(\langle \psi|\Omega\rangle)\right| 
\label{eq:normdiff}
\end{equation}
Combining Eqs.~(\ref{eq:normdiff}, \ref{eq:hoeff}) completes the proof.

\end{proof}

\subsection{Approximate stabilizer rank of Clifford magic states}
\label{Sec_Clifford_Magic_States}
Proposition \ref{thm:clifmagic} asserts that $\xi(\psi)=F(\psi)^{-1}$ when $\psi$ is a Clifford magic state (Def \ref{Dfn_CMS}). In fact, this relation holds for a wider class of $\psi$ and we comment on this at the end of the following proof. Recall that a Clifford magic state $\psi$ is stabilized by a group of Clifford unitaries with generators $Q_j:=V X_j V^\dagger$.  We denote this group as $\mathcal{Q}:=\langle Q_j \rangle = \langle V X_j V^\dagger \rangle$. Here we describe upper bounds on the approximate stabilizer rank of Clifford magic states. We begin with the proof of Proposition \ref{thm:clifmagic}

\begin{proof}[Proof of Proposition \ref{thm:clifmagic}]
From the definition of Clifford magic states, we have
\begin{align}
P_\psi = |\psi\rangle\langle\psi| & = V \frac{1}{2^n} \prod_j (\id + X_j) V^\dagger \\ \nonumber
& = \frac{1}{2^n} \prod_j (\id + Q_j ) \\ \nonumber
& = \frac{1}{| \mathcal{Q} |} \sum_{q \in \mathcal{Q}} q 
\end{align}
Let $\phi_0$ be a stabilizer state such that $|\langle \psi|\phi_0\rangle|^2>0$. Then
\begin{align}
|\psi\rangle & = \frac{\ket{\psi}\bk{\psi}{\phi_0}}{\bk{\psi}{\phi_0}}  \label{eq:cmag1}\\ \nonumber
& = \left[ \frac{1}{|\mathcal{Q}|} \sum_{q\in \mathcal{Q}} q \right]  \frac{\ket{\phi_0}}{\bk{\psi}{\phi_0}} \\ \nonumber
& = \frac{1}{|\mathcal{Q}|\langle \psi|\phi_0\rangle}\sum_{q\in \mathcal{Q}}q|\phi_0\rangle.
\end{align}
Using Eq.~(\ref{eq:cmag1}) and the fact that $q|\phi_0\rangle$ is a stabilizer state for all $q\in \mathcal{Q}$ we immediately obtain
\[
|| \vec{c} ||_1^2 = \frac{1}{|\langle \psi|\phi_0\rangle|^2},
\]
for this decomposition.  To minimise $|| \vec{c} ||_1^2$ it is natural to use the stabilizer state with the larger possible overlap, $F(\psi)= \mathrm{max}_{\phi_0}|\langle \psi|\phi_0\rangle|^2$, which we call the stabilizer fidelity.  Therefore, once we have found a $\phi_0$ attaining the maximum, we have a decomposition achieving $|| \vec{c} ||_1^2 =  F(\psi)^{-1}$.   This discussion suffices to prove that 
\[
\xi(\psi) \leq F(\psi)^{-1}.
\]
To establish the converse consider any stabilizer decomposition 
\[
|\psi\rangle=\sum_{j=1}^{\chi}c_j |\phi_j\rangle.
\]
Taking the inner product with $\psi$ we get
\[
1=\left|\sum_{j=1}^{\chi}c_j \langle \psi |\phi_j\rangle \right|\leq \|c\|_1 \sqrt{F(\psi)},
\]
where we used the fact that $|\langle \psi|\phi_j\rangle|^2\leq F(\psi)$. Squaring the above completes the proof.

More generally, let $\mathcal{Q}$ be \emph{any} subgroup of the Clifford group satisfying $\kb{\psi}{\psi}=|\mathcal{Q}|^{-1}\sum_{q\in\mathcal{Q}} q$ and with exactly one group element (the identity) stabilizing $\ket{\phi_0}$. The above proof goes through unmodified, but admits a wider class of states for which $\xi(\psi)=F(\psi)^{-1}$ including the face state, $\ket{f}$, satisfying
\begin{align}
\kb{f}{f}=\frac{1}{2}\left(\id+\frac{(X+Y+Z)}{\sqrt{3}}\right)=\frac{1}{|\mathcal{Q}|}\sum_{q\in\mathcal{Q}} q
\end{align}
where $\mathcal{Q}=\{\id,C_F,C_F^2\}$ and $C_F=e^{-i \pi /12}SH$ is the Clifford that cyclically permutes Pauli $X$,$Y$ and $Z$. 
\end{proof}
The $\ket{T}^{\otimes n}$ state is the most well known example of a Clifford magic state.  It has been shown (see Lemma~2 of Ref.~\cite{Campbell11} or Lemma~2 of Ref.~\cite{bravyi2016improved}) that  $F(T^{\otimes n})^{-1}=|\bk{+}{T}|^{2n}$ and so $\ket{+}^{\otimes n}$ can be used to generate the decomposition with optimal $\xi(\psi)$.  Combining this with Lemma~\ref{lem:randomCvec} gives the same upper bound on $\chi_\delta(T^{\otimes n})$ as was previously shown in Ref.~\cite{bravyi2016improved}.   However, the techniques are slightly different.  Our Lemma~\ref{lem:randomCvec} randomly selects a subset of terms from the decomposition, whereas Ref.~\cite{bravyi2016improved} randomly select a subset of terms that form a random linear code.  We remark that the random linear code construction also generalises to all Clifford magic states. For any linear code $\mathcal{L} \subseteq \mathbb{F}_2^n$ we can associate a subgroup $\mathcal{Q}_\mathcal{L} \subseteq \mathcal{Q}$.  That is, given a decomposition as in Eq.~\eqref{eq:cmag1} with group $\mathcal{Q}$, we can choose a random subgroup $\mathcal{Q_L} \subseteq \mathcal{Q}$ and define the normalised approximate state 
\begin{equation}
	|\mathcal{L} \rangle \propto \sum_{q\in \mathcal{Q_L}}q|\phi_0\rangle .
\label{eq:cmag}
\end{equation}

Following analogous steps to those in Ref.~\cite{bravyi2016improved}, one can show that this approach gives the same asymptotic scaling of $\chi_\delta$ as in Lemma~\ref{lem:randomCvec}.  While the behaviour of $\chi_\delta$  is identical, it may be easier to implement a simulator working with random subgroups than random subsets. 

As a further example, let us consider the Clifford magic state corresponding to a CCZ (control-control-Z) gate,
\begin{align}
		\ket{CCZ} = CCZ \ket{+}\ket{+}\ket{+} = \frac{1}{\sqrt{8}} \sum_{a,b,c \in \{0,1\}} (-1)^{abc} \ket{a}\ket{b}\ket{c}
\end{align}
This magic state is the ``+1" eigenstate for a group $\mathcal{Q}$ with three generators of the form $CCZ \cdot X_j \cdot CCZ^\dagger$.  More explicitly these generators are
\begin{align}
	Q_1   & =  CCZ \cdot X_1 \cdot CCZ^\dagger  =  X_1  CZ_{2,3} \\ \nonumber
	Q_2  & =  CCZ \cdot X_2 \cdot CCZ^\dagger  =  X_2  CZ_{1,3} \\ \nonumber
	Q_3  & =  CCZ \cdot X_3 \cdot CCZ^\dagger =  X_3  CZ_{1,2} 
\end{align}
where $CZ_{i,j}$ denotes a control-Z between qubits $i$ and $j$.  One can straightforwardly confirm that $F(CCZ)= |\bk{+++}{CCZ}|^2=9/16$, and that
\begin{align}
	\label{CCZ_decomp}
	 \ket{CCZ} = \frac{1}{6}\sum_{Q \in \mathcal{Q}} Q\ket{+++} ,
\end{align}	
has $|| \vec{c} ||_1^2 = 16/9$.  Using this decomposition for many CCZ states shows $ \chi_\delta( CCZ^{\otimes t} ) \leq \delta^{-2} (9/16)^t \sim \delta^{-2} 1.778^t$.  Note that this is slower exponential scaling than obtained by synthesizing each CCZ with 4 $T$-gates and using $ \chi_\delta( T^{\otimes 4t} ) \leq \delta^{-2} 1.884^t$. It is conceivable that a better decomposition exists since $\xi$ only provides an upper bound on the approximate stabilizer rank. 

One could obtain better decompositions if the stabilizer fidelity is not multiplicative, but we show later (see Corollary~\ref{Cor_single_qubits}) that $F(T^{\otimes t})=F(T)^t$ and $F(CCZ^{\otimes t})=F(CCZ)^t$. However, one of the significant open questions remaining from this work is whether stabilizer fidelity is always multiplicative for all Clifford magic states.  Lastly, we remark that one can lift the above stabilizer decomposition to obtain a Clifford unitary decomposition of CCZ that can be used for an approximate sum-over-Cliffords simulator.

\subsection{Lower bound based on ultra-metric matrices}
\label{Sec_ultra}
Previous sections give explicit stabilizer decompositions of states and therefore upper bounds on the stabilizer rank.   Yet we have no techniques that provide lower bounds on the stabilizer rank that scale exponentially with the number of copies.  Here we present results in this direction. Let $|H\rangle=\cos{(\pi/8)}|0\rangle + \sin{(\pi/8)}|1\rangle$ be the magic state which is Clifford equivalent to $\ket{T}$. We would like to approximate $n$ copies of $|H\rangle$ by a low-rank linear combination
of stabilizer states 
\[
|\tilde{x}\rangle =|\tilde{x}_1\rangle \otimes \cdots \otimes |\tilde{x}_n\rangle
\quad \mbox{where} \quad
|\tilde{0}\rangle=|0\rangle \quad \mbox{and} \quad
|\tilde{1}\rangle =|+\rangle.
\]
Here we derive a lower bound on the rank of such approximations stated earlier as Prop.~\ref{prop:lower_bound}.  We first restate this result as follows
\begin{theorem}
\label{thm:main}
Suppose $S\subseteq \{0,1\}^n$ is an arbitrary subset
and $\phi$ is an arbitrary   linear combination of states
$|\tilde{x}\rangle$ with $x\in S$ such that $\|\phi\|=1$.  Then 
\begin{equation}
|S|\ge |\langle H^{\otimes n}|\phi\rangle |^2 \cdot \cos{(\pi/8)}^{-2n}.
\end{equation}
\end{theorem}
\begin{proof}
Let $\chi=|S|$ and $S=\{x^1,x^2,\ldots,x^\chi\}$ for some bit strings $x^i$.
The orthogonal projector onto a linear subspace spanned by
the states $|\tilde{x}^1\rangle,\ldots,|\tilde{x}^\chi\rangle$ has the form
\begin{equation}
\label{eq2}
\Pi=\sum_{i,j=1}^\chi (G^{-1})_{i,j} |\tilde{x}^i\rangle\langle \tilde{x}^j|,
\end{equation}
where $G$ is the Gram matrix defined by
$G_{i,j}=\langle \tilde{x}^i | \tilde{x}^j\rangle = t^{|x^i \oplus x^j|}$,
with $t=2^{-1/2}$. Here and below $\oplus$ denotes addition of bit strings
modulo two.
Noting that $\langle\tilde{x}|H^{\otimes n}\rangle = \cos{(\pi/8)}^{n}$ for all $x$ one gets
\begin{equation}
\label{eq3}
|\langle H^{\otimes n}|\phi\rangle |^2 \le \langle H^{\otimes n} |\Pi|H^{\otimes n}\rangle
=\cos{(\pi/8)}^{2n} \sum_{i,j=1}^\chi (G^{-1})_{i,j}
\le \chi \cos{(\pi/8)}^{2n}.
\end{equation}
The last inequality follows from 
\begin{lemma}
Suppose $x^1,\ldots,x^\chi \in \{0,1\}^n$ are distinct bit strings
and $0<t < 1$ is a real number. 
Let $G$ be a matrix of size $\chi$ with entries
\begin{equation}
\label{Gt}
G_{i,j}=t^{|x^i \oplus x^j|}.
\end{equation}
Then $G$ is invertible and 
\begin{equation}
\label{ubound}
\sum_{i,j=1}^\chi (G^{-1})_{i,j} \le \chi.
\end{equation}
\end{lemma}
\begin{proof}
Let $|1\rangle,|2\rangle,\ldots,|\chi\rangle$ be the basis vectors of $\RR^\chi$
such that $G_{i,j}= \langle i|G|j\rangle$.
We claim that Eq.~(\ref{ubound}) holds whenever one can 
find a family of matrices $G_\sigma$ and probabilities $p_\sigma\ge 0$ such that
\begin{enumerate}
\item[(a)] $G=\sum_\sigma p_\sigma G_\sigma$ and $\sum_\sigma p_\sigma=1$
\item[(b)] $G_\sigma$ is positive definite 
\item[(c)] $0\le \langle i|G_\sigma|j\rangle \le 1$
and $\langle i|G_\sigma|i\rangle=1$ 
\item[(d)]  $\langle i|G_\sigma^{-1}|j\rangle \le 0$ for $i\ne j$ 
\end{enumerate}
Indeed, let $|e\rangle$ be the  all-ones vector, $|e\rangle=\sum_{i=1}^\chi |i\rangle$.
We have to prove that $\langle e|G^{-1}|e\rangle\le \chi$.
Conditions~(a,b) imply that  $G$ is positive definite (and thus invertible). 
Noting that  the function $f(x)=x^{-1}$ is operator convex on the interval $(0,\infty)$
one gets 
\begin{equation}
\label{upper1}
\langle e|G^{-1}|e\rangle \le \sum_\sigma p_\sigma \langle e|G_\sigma^{-1} |e\rangle.
\end{equation}
From conditions~(c,d) one gets
\[
\langle i|G^{-1}_\sigma|j\rangle \le \langle i|G^{-1}_\sigma|j\rangle \langle j|G_\sigma|i\rangle
\]
for $i\ne j$ with the equality for $i=j$. 
Therefore 
\begin{equation}
\label{upper2}
\langle e|G^{-1}_\sigma|e\rangle = \sum_{i,j=1}^\chi \langle i| G^{-1}_\sigma|j\rangle 
\le \sum_{i,j=1}^\chi \langle i| G^{-1}_\sigma|j\rangle  \langle j| G_\sigma|i\rangle  = \mbox{Tr}(G^{-1}_\sigma G_\sigma) =
\mbox{Tr}(I)= \chi.
\end{equation}
Substituting this into Eq.~(\ref{upper1}) gives $\langle e|G^{-1}|e\rangle\le \chi \sum_\sigma p_\sigma =\chi$,
as desired.

It remains to construct the requisite matrices $G_\sigma$.
Our construction is based on the so-called  {\em ultrametric matrices},
see Refs.~\cite{MMM,NabenVarga}.
\begin{dfn}
\label{dfn:UM}
A symmetric real matrix $A$ is called  ultrametric iff
$0\le A_{i,j}<1$ for $i\ne j$,  $A_{i,i}=1$, and 
\begin{equation}
\label{UM1}
A_{i,j} \ge \min{(A_{i,k}, A_{j,k})}
\quad \mbox{for all $i,j,k$}.
\end{equation}
\end{dfn}
The last condition demands that for any triple of elements $A_{i,j}$, $A_{i,k}$, $A_{j,k}$  the two smallest 
elements coincide. 
The following fact was established in Refs.~\cite{MMM,NabenVarga}.
\begin{fact}
\label{fact:UM}
Suppose $A$ is an ultrametric matrix. Then $A$ is invertible and 
positive definite. Furthermore,  $\langle i| A^{-1}|j\rangle \le 0$ for all $i\ne j$.
\end{fact}
Thus it suffices to show that $G$ is a probabilistic mixture of ultrametric matrices.
Indeed, if condition~(a) holds for some ultrametric matrices $G_\sigma$ then 
condition~(c) follows directly from Definition~\ref{dfn:UM}
while conditions~(b,d) follow from Fact~\ref{fact:UM}.

The first step is to equip the Boolean cube $\{0,1\}^n$
with a distance function that obeys an analogue of the ultrametricity condition
Eq.~(\ref{UM1}).
Given a pair of bit strings $x,y\in \{0,1\}^n$,
define $d(x,y)$ as the smallest integer $j\ge 0$ such that 
the last $n-j$ bits of $x$ and $y$ coincide (that is, $x_i=y_i$ for all $i>j$).
We set $d(x,y)=n$ if $x_n\ne y_n$.
Note that $d(x,y)$ is different from the Hamming distance.
For example, $d(101,111)=2$ and $d(101,100)=3$.
By definition $d(x,y)\in [0,n]$ and $d(x,y)=0$ iff $x=y$. 
Furthermore, $d(x,y)$ depends only on $x\oplus y$.
We claim that
\begin{equation}
\label{UM2}
d(x,y)\le \max{\{ d(x,z),d(z,y)\}}
\end{equation}
for any triple of strings $x,y,z$. Indeed,
let $j=\max{\{ d(x,z),d(z,y)\}}$. Then
$x_i=z_i=y_i$ for all $i>j$, that is, 
$d(x,y)\le j$.

Suppose $q_w$ is a normalized probability distribution on the set of integers
$w=0,1,\ldots,n$ such that $q_w>0$ for all $w$. 
Define a $\chi\times \chi$ matrix $A$ such that 
\begin{equation}
\label{Aij}
A_{i,j}= \sum_{w\ge d(x^i,x^j)} \;  q_w.
\end{equation}
Here $x^i$ and $x^j$ are the bit strings from the statement of the lemma.
We claim that $A$ is ultrametric (according to Definition~\ref{dfn:UM}).
Indeed, consider any triple $i,j,k$ as in Eq.~(\ref{UM1}) and assume wlog that
$A_{i,k}\le A_{j,k}$. 
Since the matrix element $A_{i,j}$ is a monotone decreasing function
of the distance $d(x^i,x^j)$, we get $d(x^i,x^k)\ge d(x^j,x^k)$.
Then Eq.~(\ref{UM2}) gives $d(x^i,x^j)\le d(x^i,x^k)$.
Using the monotonicity again one gets $A_{i,j}\ge A_{i,k}=\min{\{A_{i,k},A_{j,k}\}}$,
confirming Eq.~(\ref{UM1}). 
The remaining conditions $0\le A_{i,j}<1$ for $i\ne j$ and $A_{i,i}=1$ follow from the 
assumption that all bit strings $x^i$ are distinct and that $q_w$ is a 
normalized probability distribution.
Thus the matrix $A$ defined by Eq.~(\ref{Aij}) is indeed ultrametric.

We are now ready to define a family of ultrametric matrices $G_\sigma$
such that $G=\sum_\sigma p_\sigma G_\sigma$.
Let us choose the label $\sigma$ as a permutation of $n$ integers, $\sigma \in S_n$.
The distribution $p_\sigma$ will be the uniform distribution on the symmetric group, that is,
$p_\sigma=1/n!$ for all $\sigma\in S_n$.
Given a permutation $\sigma$ and a bit string $x\in \{0,1\}^n$ let $\sigma(x)\in \{0,1\}^n$
be the result of permuting bits of $x$ according to $\sigma$.
We set 
\begin{equation}
\label{Gsigma1}
\langle i|G_\sigma |j\rangle = \sum_{w\ge d(\sigma(x^i),\sigma(x^j))} \;  q_w.
\end{equation}
The same argument as above confirms that $G_\sigma$ is ultrametric for any permutation $\sigma$.
Define 
\begin{equation}
\label{Gsigma2}
G'=\frac1{n!} \sum_{\sigma \in S_n} G_\sigma.
\end{equation}
We claim that  $\langle i|G'|j\rangle = \langle i|G|j\rangle = t^{|x^i\oplus x^j|}$
for a suitable choice of probabilities $q_w$.
Indeed, the identity $d(x,y)=d(0^n,x\oplus y)$ implies that 
a matrix element $\langle i |G_\sigma|j\rangle$ depends only on $x^i\oplus x^j$.
By the symmetry, matrix elements $\langle i| G'|j\rangle$ depend only on the Hamming
weight $h=|x^i\oplus x^j|$. Therefore it suffices to compute
$\langle i |G'|j\rangle$ for the special case when $x^i=0^n$ is the all-zero string
and $x^j$ is any fixed bit string with the Hamming weight $h$, for example,
$x^j=1^h 0^{n-h}$. Then
\begin{equation}
\label{Gsigma3}
\langle i|G'|j\rangle=
\frac1{n!} \sum_{\sigma \in S_n} \; \;  \sum_{w\ge d(0^n,\sigma(1^h0^{n-h}))} \;  q_w.
\end{equation}
By definition of the distance $d(x,y)$ one gets $d(0^n,\sigma(1^h0^{n-h}))\le w$ iff 
$h \le w$ and $\sigma_1,\ldots,\sigma_h\le w$.  The number of such permutations $\sigma$ is ${w \choose h} h! (n-h)!$.
Exchanging the sums over $\sigma$ and $w$ in Eq.~(\ref{Gsigma3})  one gets
\begin{equation}
\label{Gsigma4}
\langle i|G'|j\rangle=
\frac1{n!} \sum_{w=h}^n {w \choose h} h! (n-h)! \,q_w.
\end{equation}
We shall choose $q_w$ as a binomial distribution,
\begin{equation}
\label{qw}
q_w= {n \choose w} t^w (1-t)^{n-w}.
\end{equation}
Substituting Eq.~(\ref{qw}) into Eq.~(\ref{Gsigma4}) 
and introducing a variable $p=w-h$ one gets
\begin{equation}
\label{Gsigma5}
\langle i|G'|j\rangle=
\sum_{p=0}^{n-h}  {n-h \choose p} t^{p+h} (1-t)^{n-h-p}  = t^h.
\end{equation}
By definition, $h=|x^i\oplus x^j|$, so that  $G'=G$ as claimed.
Thus $G$ is indeed a probabilisitic mixture of ultrametric matrices
and the lemma is proved.
\end{proof}
\end{proof}

\section{Stabilizer fidelity and Stabilizer extent}
In the previous Section we established upper bounds on the approximate stabilizer rank of a state $\psi$ which depend on the the squared $1$-norm $\|c\|_1^2$, where
\[
|\psi\rangle=\sum_{j}c_j |\phi_j\rangle,
\]
is a given stabilizer decomposition.   Recall that the stabilizer extent $\xi(\psi)$  denotes the minimum value of $|| c ||_1^2$ over all stabilizer decompositions of $\psi$. We find that $\xi$ is easier to work with than the approximate stabilizer rank. For any fixed $n$-qubit state $\psi$, $\xi(\psi)$ can be computed using a simple convex optimization program, although the size of this computation scales poorly with $n$. In this section we develop tools that allow us to efficiently compute $\xi(\psi)$ whenever $\psi$ is a tensor product of $1, 2$ and $3$ qubit states. In particular, we prove Proposition~\ref{multi} which establishes that $\xi$ is multiplicative for tensor products of $1$, $2$, and $3$-qubit states. 

In subsection~\ref{Sec_convex_dual} we use standard convex duality to give a characterization of $\xi$ in terms of the \textit{stabilizer fidelity}, defined as the maximum overlap with respect to the set of stabilizer states
\begin{equation}
F( \psi ) := \mathrm{max}_{\phi \in \mathrm{STAB}_n} |\bk{\psi}{\phi}|^2.
\end{equation}

As a consequence, multiplicativity of $\xi$ is directly related to multiplicativity of the stabilizer fidelity. In subsection~\ref{Sec_Fid_Multi} we give sufficient and necessary conditions for multiplicativity of the stabilizer fidelity. In particular, we define the class of \textit{stabilizer-aligned} states for which multiplicativity holds.  In subsection~\ref{Sec_when_stab_aligned} we investigate the class of stabilizer-aligned states and prove that all tensor products of $1,2$ and $3$ qubit states are stabilizer-aligned.  Finally, in section~\ref{Sec_Cstar_multi} we use these results to prove Proposition~\ref{multi}. 

\subsection{Convex duality}
\label{Sec_convex_dual}

Here we show that the optimization of $\xi(\psi)$ can be recast as a dual convex problem and we prove the following:
\begin{theorem}
	\label{thm:witness}
	For any $n$-qubit state $\psi$ we have
	\begin{equation}
	\xi(\psi)=\max_{\omega} \frac{|\bk{\psi}{\omega}|^2}{F(\omega)},
	\label{eq:witness}
	\end{equation}
	where the maximum is over all $n$-qubit states $\omega$. 
\end{theorem}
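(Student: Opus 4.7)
The plan is to obtain Eq.~\eqref{eq:witness} by Lagrangian duality applied to the defining optimization for $\xi(\psi)$. Since $\sqrt{\xi(\psi)}$ equals the infimum of $\|\vec{c}\|_1$ over stabilizer decompositions $\ket{\psi} = \sum_j c_j \ket{\phi_j}$ (the sum ranging over the finite set $\mathrm{STAB}_n$), I would work at the level of $\sqrt{\xi(\psi)}$ and square the final identity. The dual variable will play the role of the ``witness'' state $\omega$ appearing in Eq.~\eqref{eq:witness}.

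First I would introduce a Lagrange multiplier $\ket{\omega}\in\CC^{2^n}$ for the linear constraint $\sum_j c_j\ket{\phi_j} = \ket{\psi}$ and form the Lagrangian
\[
\calL(c,\omega) = \|\vec{c}\|_1 + \mathrm{Re}\,\bk{\omega}{\psi} - \sum_j \mathrm{Re}\bigl(c_j\,\bk{\omega}{\phi_j}\bigr).
\]
Writing $c_j = r_j e^{i\theta_j}$ with $r_j\ge 0$ and optimizing the phases $\theta_j$, the infimum over $c$ is $-\infty$ unless $|\bk{\omega}{\phi_j}|\le 1$ for every stabilizer state $\phi_j$, equivalently $F(\omega)\le 1$, in which case it equals $\mathrm{Re}\,\bk{\omega}{\psi}$ and is attained at $c=0$. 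The dual program therefore reads
\[
\sup_\omega \bigl\{\mathrm{Re}\,\bk{\omega}{\psi} \;:\; F(\omega)\le 1\bigr\}.
\]
Rotating $\ket{\omega}\to e^{i\theta}\ket{\omega}$ turns $\mathrm{Re}\,\bk{\omega}{\psi}$ into $|\bk{\omega}{\psi}|$, and positive-homogeneity of both $|\bk{\omega}{\psi}|$ and $\sqrt{F(\omega)}$ in $\omega$ lets us drop the inequality constraint and rewrite the dual value as $\sup_\omega |\bk{\omega}{\psi}|/\sqrt{F(\omega)}$ (note that $F(\omega)>0$ for $\omega\ne 0$, since stabilizer states span the Hilbert space). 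Squaring produces Eq.~\eqref{eq:witness}.

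The one step requiring genuine justification is strong duality. Splitting each $c_j$ into real and imaginary parts turns the primal into a finite-dimensional linear program that is feasible (again because $\mathrm{STAB}_n$ spans the Hilbert space) and has bounded optimum, so strong duality follows from standard LP theory and the primal optimum is attained. Equivalently, one may recognize $\sqrt{\xi(\psi)}$ as the atomic norm associated with the symmetric atom set $\{e^{i\theta}\ket{\phi} : \phi\in\mathrm{STAB}_n,\,\theta\in[0,2\pi)\}$ and invoke the textbook duality between an atomic norm and its polar, which here is precisely $\sqrt{F(\cdot)}$. I expect this bookkeeping -- keeping track of complex coefficients throughout, and verifying that no duality gap arises -- to be the only real nuisance in the argument; everything else is a direct computation.
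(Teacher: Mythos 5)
Your proposal is correct and follows essentially the same route as the paper's proof: Lagrangian/convex duality applied to the $\ell_1$-minimization defining $\sqrt{\xi(\psi)}$, with the dual variable reinterpreted as a witness state, the $\ell_\infty$-type dual constraint becoming $\sqrt{F(\omega)}\le 1$, and strong duality justified by the affine equality constraints (the paper does the same in coordinates, writing the primal as $M\vec{c}=\vec{a}$ and mapping the optimal multiplier $\nu_\star$ to $\omega_\star$). The only quibble is that splitting complex coefficients into real and imaginary parts gives a second-order cone program rather than a linear program (as the paper itself notes), but strong duality and attainment still hold---e.g.\ via the affine-constraint argument or the atomic-norm/polar duality you also cite---so your argument goes through unchanged.
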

 Thus any $n$-qubit state $\omega$ can act as a witness to provide a lower bound on $\xi$ and, furthermore, there exists at least one optimal witness state $\omega_{\star}$ which achieves the maximum in Eq.~\eqref{eq:witness}. For example, choosing $\omega=\psi$, we get the lower bound 
\begin{equation}
\xi(\psi) \geq \frac{1}{F(\psi)}.
\label{eq:mirrorwitness}
\end{equation}
For Clifford magic states this lower bound is tight as stated in Proposition~\ref{thm:clifmagic}.  We remark that Thm.~\ref{thm:witness} is a special case of results found in the literature on general resource theories~\cite{regula2017convex}.

\begin{proof}
We shall map the problem into the language of convex optimization and use standard results in that field~\cite{boyd2004convex}.   Using the computation basis $\{ \ket{ \vec{x} } \}$ we can decompose any stabilizer state $\ket{\psi_j} =  \sum_{\vec{x}} M_{\vec{x}, j } \ket{\vec{x}} $. Given a state $\ket{\psi} =  \sum_{\vec{x}}  a_{\vec{x}} \ket{\vec{x}} $, the primal optimization problem can be written as
\begin{align}
	\sqrt{\xi( \psi )}  &  = \mathrm{min}_{\vec{c}}  f(\vec{c})=|| \vec{c} ||_1 \\
	\mbox{such that } & M\vec{c} - \vec{a}	 = 0
\end{align}
This is clearly a convex optimization problem with affine constraints.  Because the coefficient in $\vec{c}$ are complex, rather than real, this is a second order cone problem~\cite{boyd2004convex}.  For any convex optimization problem there exists a dual function
\begin{align}
	g(\nu) &  = \mathrm{inf_{\vec{c}}}  \left( || \vec{c} ||_1 + \nu^{T}(M\vec{c} - \vec{a}) \right) \\
	&= \begin{cases}  -  \nu^{T}\vec{a} & \mbox{ when } ||M^T\nu ||_{\infty} \leq 1 \\
		-  \infty & \mbox{ otherwise }
	\end{cases}
\end{align}
where for any value of the dual variables $\nu$ we have $g(\nu)  \leq \sqrt{\xi( \psi )} $.  The dual optimization problem is the maximisation of $g(\nu)$ over $\nu$ to obtain the best lower bound on  $\sqrt{\xi( \psi )}$.   We can discount the need for two cases by adding  $||M^T \nu  ||_{\infty} \leq 1$ as a constraint, to obtain the problem
\begin{align}
	d^{\star}( \psi )  &  = \mathrm{max}_{\nu}  -  \nu \cdot \vec{a} \\ \nonumber
	\mbox{such that } & || M^T\nu ||_{\infty} \leq 1  ,
\end{align}
or more simply
\begin{align}
	d^{\star}( \psi )  &  = \mathrm{max}_{\nu} \frac{ -  \nu \cdot \vec{a}}{|| M^T\nu ||_{\infty} } .
\end{align}
Because the primal problem has affine constraints, we have strong duality and there must exist a  $\nu_\star$ such that $g(\nu_{\star}) = -  \nu_{\star}^{T}\vec{a}  =  \sqrt{\xi( \psi )} $.   Next, we restate this dual problem in terms of quantum states.   For every $\nu$ we can associate a normalised quantum state
\begin{equation}
\ket{\omega_\nu} : = \frac{1}{||\nu ||_2 } \sum_{\vec{x}} (-\nu^*_{\vec{x}}) \ket{\vec{x}} ,
\end{equation}
so that 
\begin{equation}
\bk{\omega_\nu}{\psi} =  \frac{-  \nu \cdot \vec{a}}{ ||\nu ||_2 } .
\end{equation}
Next we note that 
\begin{equation}
||M^T\nu ||_{\infty} = \frac{\mathrm{Max}_{\ket{\phi} \in \mathrm{STAB}} |	\bk{\omega_\nu}{\phi}   | }{ || \nu ||_2 } = \frac{\sqrt{F(\omega_\nu)}}{|| \nu ||_2}
\end{equation}
Therefore, the dual problem can also be stated as 
\begin{align}
	d^{\star}( \psi )  &  =  \mathrm{max}_{\ket{\omega_{\nu}}}	\frac{\bk{\omega_\nu}{\psi}}{\sqrt{F(\omega_\nu)}}  ,
\end{align}
where the factors $|| \nu ||_2$ have cancelled out.  The optimal $\nu_\star$ gives the optimal  $\ket{\omega_\star}$, which completes the proof.
\end{proof}

\subsection{Stabilizer alignment}
\label{Sec_Fid_Multi}
Combining Theorems \ref{thm:clifmagic} and \ref{thm:randomCvec} we get an upper bound
$\chi_{\delta}(\psi)\leq \delta^{-2} F(\psi)^{-1}$ on the approximate stabilizer rank of any Clifford magic state $\psi$.
We shall be  interested in the case when $\psi$ is a tensor product of 
a large number of few-qubit magic states such as $T$-type or CCZ-type states. For example, the case $\psi=CCZ^{\otimes m}$ is relevant to gadget-based simulation of 
quantum circuits composed of Clifford gates and $m$ CCZ gates. 
This motivates the question of whether the stabilizer fidelity $F(\psi)$ is multiplicative
under tensor product, i.e. \begin{equation}
F(\psi\otimes \phi)\stackrel{?}{=}F(\psi)F(\phi).
\label{eq:fmult}
\end{equation}
Note that $F(\psi\otimes \phi)\ge F(\psi)F(\phi)$ 
since the set of stabilizer states is closed under tensor product. 

Below we define a set of quantum states $\mathcal{S}$ such that 
$F(\phi \otimes \psi) =F(\phi)F(\psi)$ whenever $\phi,\psi\in \mathcal{S}$. Remarkably, this set is also closed under tensor product, that is $\phi\otimes \psi \in \mathcal{S}$ whenever $\phi,\psi\in \mathcal{S}$.
Moreover, we show that the stabilizer fidelity is not multiplicative for all states
$\phi \notin \calS$. More precisely, for any $\phi \notin \calS$ there exists a state $\psi$
such that $F(\phi\otimes \psi)>F(\phi) F(\psi)$. In that sense, our results provide
necessary and sufficient conditions under which the stabilizer fidelity is multiplicative
under tensor product.

To state our results let us generalize the definition of stabilizer fidelity as follows. For each $n\geq 1$ and $0\leq m\leq n$ define a set $S_{n,m}$ which consists of all stabilizer projectors $\Pi$ on $n$ qubits satisfying $\mathrm{Tr}[\Pi]=2^m$. 
\begin{dfn}
For any $n$-qubit state $|\phi\rangle$ define
\[
F_m(\phi)=2^{-m/2} \max_{\Pi\in S_{n,m}} \langle \phi|\Pi|\phi\rangle. \qquad \qquad m=0,\ldots, n.
\]
Let us say that
$\phi$ is  \textit{stabilizer-aligned} if $F_m(\phi)\leq F_0(\phi)$ for all $m$. 
\end{dfn}
Note that in the above $F_0=F$ is the stabilizer fidelity.  Here we investigate the consequences of stabilizer-alignment. Whether or not a given state is stabilizer-aligned is discussed in the following subsection.
\begin{theorem}
Suppose $\phi$ and $\psi$ are stabilizer-aligned. Then $\phi\otimes \psi$ is stabilizer-aligned and
\[
F(\phi\otimes \psi)=F(\phi)F(\psi).
\]
Conversely, suppose $\phi$ is not stabilizer-aligned. 
Let $\phi^{\star}$ be the complex conjugate of $\phi$.
Then 
\[
F(\phi\otimes \phi^{\star})>F(\phi)F(\phi^{\star}).
\]
\label{thm:stabaligned}
\end{theorem}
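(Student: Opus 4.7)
The plan is to reduce both parts of the theorem to a single key inequality:
\[
\langle\phi\otimes\psi|\Pi|\phi\otimes\psi\rangle \leq 2^{m/2} F(\phi) F(\psi)
\]
for all stabilizer-aligned $\phi,\psi$ and all rank-$2^m$ stabilizer projectors $\Pi$ on $AB$. Once this is in hand the forward direction is immediate: taking $m=0$ (so $\Pi$ is a rank-$1$ stabilizer-state projector) and maximizing gives $F(\phi\otimes\psi)\le F(\phi)F(\psi)$, and the trivial reverse inequality yields multiplicativity; dividing the key inequality by $2^{m/2}$ and using multiplicativity gives $F_m(\phi\otimes\psi)\le F(\phi\otimes\psi)$ for every $m$, which is precisely the definition of stabilizer alignment of $\phi\otimes\psi$.

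To establish the key inequality I would first prove a canonical form lemma: any bipartite stabilizer projector $\Pi$ on $AB$ is locally Clifford equivalent to
\[
\Pi_{A_1}\otimes\Pi_{B_1}\otimes |\mathrm{Bell}\rangle\langle\mathrm{Bell}|^{\otimes k_f}_{A_2 B_2} \otimes \frac{1}{2^{k_h}}\prod_{i=1}^{k_h}\bigl(I+Z^{A_3}_i Z^{B_3}_i\bigr),
\]
with $\Pi_{A_1},\Pi_{B_1}$ local stabilizer projectors of ranks $2^{m_A},2^{m_B}$ and $m=m_A+m_B+k_h$. The ``half-Bell'' factor is needed to accommodate entangling stabilizer generators (like $Z^A_i Z^B_i$) that may appear without their Bell partners; the form follows by Gaussian elimination on the stabilizer check matrix under local Cliffords and extends Fattal et al.'s canonical form for pure bipartite stabilizer states. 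Absorbing the Cliffords into $\phi,\psi$ and expanding
\[
|\phi\rangle = \sum_{y,x}|\mu_{y,x}\rangle_{A_1}|y\rangle_{A_2}|x\rangle_{A_3}, \qquad |\psi\rangle = \sum_{y,x}|y\rangle_{B_2}|\nu_{y,x}\rangle_{B_1}|x\rangle_{B_3},
\]
a direct computation (in which the Bell factor contracts the $A_2,B_2$ indices while the half-Bell factor enforces equality of the $A_3,B_3$ indices) yields
\[
\langle\phi\otimes\psi|\Pi|\phi\otimes\psi\rangle = \frac{1}{2^{k_f}}\sum_{x\in\{0,1\}^{k_h}} \mathrm{tr}\bigl(M^{(x)}(N^{(x)})^T\bigr),
\]
where $M^{(x)}_{y',y}:=\langle\mu_{y',x}|\Pi_{A_1}|\mu_{y,x}\rangle$ and $N^{(x)}$ is defined analogously; both are positive semidefinite Hermitian matrices.

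The bound then follows from two Cauchy--Schwarz-type estimates applied asymmetrically. For PSD Hermitian $M,N$, $\mathrm{tr}(MN^T)\le\|M\|_F\|N\|_F\le\mathrm{tr}(M)\mathrm{tr}(N)$; for nonnegative scalars, $\sum_x a_x b_x\le(\max_x a_x)(\sum_x b_x)$. Applied with $a_x=\mathrm{tr}(M^{(x)})$, $b_x=\mathrm{tr}(N^{(x)})$, the crucial identities
\[
\max_x \mathrm{tr}(M^{(x)}) = \max_x \langle\phi|\Pi_{A_1}\otimes I_{A_2}\otimes|x\rangle\langle x|_{A_3}|\phi\rangle, \qquad \sum_x \mathrm{tr}(N^{(x)}) = \langle\psi|\Pi_{B_1}\otimes I|\psi\rangle
\]
express each factor as the expectation of a local stabilizer projector on $A$ or $B$ (of ranks $2^{m_A+k_f}$ and $2^{m_B+k_f+k_h}$ respectively). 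Stabilizer alignment of $\phi,\psi$ bounds these by $2^{(m_A+k_f)/2}F(\phi)$ and $2^{(m_B+k_f+k_h)/2}F(\psi)$; multiplying and dividing by $2^{k_f}$ yields precisely $2^{m/2}F(\phi)F(\psi)$.

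For the converse, suppose $\phi$ is not stabilizer-aligned and fix a rank-$2^m$ stabilizer projector $\Pi$ with $\langle\phi|\Pi|\phi\rangle > 2^{m/2}F(\phi)$. I would consider the Choi--Jamio{\l}kowski state $|\sigma_\Pi\rangle := 2^{-m/2}(\Pi\otimes I)\sum_x|x\rangle|x\rangle$ on the doubled Hilbert space. This is normalized (since $\mathrm{tr}(\Pi)=2^m$) and is itself a stabilizer state because the Choi map sends stabilizer codes to stabilizer states: writing the codewords as $\{|\bar j\rangle\}$ exhibits $|\sigma_\Pi\rangle=2^{-m/2}\sum_j|\bar j\rangle|\bar j^\star\rangle$ as a Bell-pair-in-the-code-space. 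Using $\langle x|\phi^\star\rangle=\overline{\langle x|\phi\rangle}$, a one-line calculation gives $\langle\sigma_\Pi|\phi\otimes\phi^\star\rangle=2^{-m/2}\langle\phi|\Pi|\phi\rangle$, so $|\langle\sigma_\Pi|\phi\otimes\phi^\star\rangle|^2 > 2^{-m}\cdot 2^m F(\phi)^2 = F(\phi)F(\phi^\star)$ (invoking $F(\phi^\star)=F(\phi)$, since complex conjugation permutes the set of stabilizer states). Hence $F(\phi\otimes\phi^\star)\ge|\langle\sigma_\Pi|\phi\otimes\phi^\star\rangle|^2 > F(\phi)F(\phi^\star)$. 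The main technical obstacle in the whole argument is establishing the canonical form with its half-Bell component and tracking the exponents carefully enough that the two asymmetric Cauchy--Schwarz estimates tighten to exactly $2^{m/2}$; the remainder is a routine calculation.
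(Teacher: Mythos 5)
Your proposal is correct and follows essentially the same route as the paper: your canonical form for bipartite stabilizer projectors (local factors, Bell pairs, and ``half-Bell'' classically correlated factors) is exactly the content of the paper's Corollary~\ref{corol:ghz} obtained from the tripartite GHZ classification, and your asymmetric Cauchy--Schwarz plus stabilizer-alignment estimates reproduce the paper's bound $F_m(\phi\otimes\psi)\le F_0(\phi)F_0(\psi)$ with the same bookkeeping of ranks. For the converse, your Choi--Jamio{\l}kowski witness $\ket{\sigma_\Pi}=2^{-m/2}(\Pi\otimes I)\sum_x\ket{x}\ket{x}$ is precisely the paper's stabilizer state $\ket{\theta}=(C\otimes C^{\star})\ket{0}_A\ket{\Phi}_{BB'}\ket{0}_{A'}$, so the overlap computation $|\bk{\sigma_\Pi}{\phi\otimes\phi^{\star}}|^2=F_m(\phi)^2>F_0(\phi)F_0(\phi^{\star})$ coincides with the paper's.
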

The theorem implies that the stabilizer fidelity is multiplicative for any
stabilizer-aligned states: 
\begin{corol}
	\label{Fid_Multi}	
Suppose $\psi_1,\ldots,\psi_L$ are stabilizer-aligned quantum states. Then
\[
F(\psi_1\otimes \psi_2\otimes \ldots \otimes \psi_L)=\prod_{j=1}^{L} F(\psi_j).
\]
\label{cor:mul}
\end{corol}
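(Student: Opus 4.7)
The plan is to prove Corollary \ref{cor:mul} by a straightforward induction on $L$, leveraging Theorem \ref{thm:stabaligned} at each step. The base case $L=1$ is trivial since the statement reduces to $F(\psi_1)=F(\psi_1)$.

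For the inductive step, suppose the result holds for $L-1$ stabilizer-aligned factors. The crucial observation is that Theorem \ref{thm:stabaligned} delivers \emph{two} pieces of information when applied to a pair of stabilizer-aligned states $\phi,\psi$: not only is the stabilizer fidelity multiplicative, $F(\phi\otimes\psi)=F(\phi)F(\psi)$, but moreover $\phi\otimes\psi$ is again stabilizer-aligned. This closure property is precisely what makes induction work. Indeed, applying the inductive hypothesis together with repeated applications of the closure property, the product $\Psi_{L-1}:=\psi_1\otimes\psi_2\otimes\cdots\otimes\psi_{L-1}$ is stabilizer-aligned and satisfies $F(\Psi_{L-1})=\prod_{j=1}^{L-1}F(\psi_j)$.

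Now I would apply Theorem \ref{thm:stabaligned} one final time to the pair $(\Psi_{L-1},\psi_L)$, both of which are stabilizer-aligned. This yields
\[
F(\Psi_{L-1}\otimes\psi_L)=F(\Psi_{L-1})\cdot F(\psi_L)=\left(\prod_{j=1}^{L-1}F(\psi_j)\right)F(\psi_L)=\prod_{j=1}^{L}F(\psi_j),
\]
which is the desired conclusion.

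There is no real obstacle here — the corollary is an essentially immediate consequence of Theorem \ref{thm:stabaligned}. All the substantive work has been carried out in proving the theorem itself, in particular in establishing that the set of stabilizer-aligned states is closed under tensor product. Without that closure property, the induction would stall after the first application, since multiplicativity in Theorem \ref{thm:stabaligned} requires both factors to be stabilizer-aligned. Thus the only genuine content of this corollary is to package the pairwise statement into a statement about arbitrary finite tensor products.
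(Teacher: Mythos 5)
Your proof is correct and matches the paper's intended argument: the corollary follows by induction from Theorem~\ref{thm:stabaligned}, using exactly the closure of stabilizer-aligned states under tensor product together with pairwise multiplicativity, which is why the paper states it without further proof. Nothing is missing.
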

We prove Theorem~\ref{thm:stabaligned} using  characterization of entanglement in tripartite stabilizer states from Ref.~\cite{ghz}:
\begin{lemma}[\cite{ghz}]
\label{lemma:ghz}
	Any pure tripartite  stabilizer state can be transformed by
	local unitary Clifford  operators  to a tensor product of states from the set $\{|0\rangle,|\Psi^{+}\rangle,|\Psi^{+}_3\rangle\}$ where
	\[
	|\Psi^{+}\rangle=\frac{1}{\sqrt{2}}(|00\rangle+|11\rangle) \qquad \qquad |\Psi^{+}_3\rangle=\frac{1}{\sqrt{2}}\left(|000\rangle+|111\rangle\right).
	\]
\end{lemma}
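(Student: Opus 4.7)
My plan is to prove the classification by induction on the total number of qubits $n = n_A + n_B + n_C$, using the stabilizer group $S$ of $|\psi\rangle$ (which has $n$ independent generators). At each step I try to peel off one factor of the form $|0\rangle$, $|\Psi^+\rangle$, or $|\Psi^+_3\rangle$ by local Clifford operations on $A$, $B$, $C$, reducing to a smaller problem. Every generator $g\in S$ can be written as $g = g_A \otimes g_B \otimes g_C$; I classify it as \emph{local} (nontrivial on exactly one party), \emph{bipartite} (on exactly two), or \emph{tripartite} (on all three).

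First, if $S$ contains any local generator, say supported on $A$, then by applying a Clifford on $A$ I may reduce it to $Z_{A_1}$, so that $|\psi\rangle = |0\rangle_{A_1}\otimes|\psi'\rangle$ and I invoke the induction hypothesis on $|\psi'\rangle$. Next, suppose $S$ has no local generators but does contain a bipartite generator, say $g = g_A\otimes g_B\otimes I$. Then $g$ alone cannot stabilize a pure state, so there must be a second independent bipartite generator $h$ on the same pair $AB$ that anticommutes with $g$ on the $AB$ support (otherwise tracing out $C$ would yield a mixed stabilizer state whose local structure on $AB$ contains a pair forcing a local stabilizer, a contradiction). I can then use local Cliffords on $A$ and on $B$ to bring $(g,h)$ into the canonical form $(X_{A_1}X_{B_1},\,Z_{A_1}Z_{B_1})$; these stabilize a Bell pair $|\Psi^+\rangle_{A_1 B_1}$ that factors out of $|\psi\rangle$, and again I apply induction.

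The remaining and main case is when every generator of $S$ acts nontrivially on all three parties; here I claim $|\psi\rangle$ is local-Clifford equivalent to $|\Psi^+_3\rangle^{\otimes m}$ for some $m$. First, I argue that the single-party reduced density matrices $\rho_A, \rho_B, \rho_C$ are maximally mixed (any nontrivial stabilizer of $\rho_A$ would lift to a local generator in $S$), which forces $n_A = n_B = n_C =: m$ and $n = 3m$. Then I pass to the binary symplectic representation: each generator is a length-$6m$ vector in blocks $(a_X,a_Z \mid b_X, b_Z \mid c_X, c_Z)$, and local Cliffords on $A,B,C$ act as three independent copies of $\mathrm{Sp}(2m,\mathbb{F}_2)$ on the respective block pairs, while row reductions on $S$ are free. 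The constraint that no generator is local or bipartite means that for any nontrivial $\mathbb{F}_2$-combination of generators, all three blocks $(a_X,a_Z)$, $(b_X,b_Z)$, $(c_X,c_Z)$ are nonzero. Using standard row-and-column normal form for the check matrix under these three symplectic actions, together with the symplectic (commutation) constraints within $S$, I expect to reduce the check matrix to $m$ independent triples of generators $\{X_{A_i}X_{B_i}X_{C_i},\,Z_{A_i}Z_{B_i},\,Z_{B_i}Z_{C_i}\}_{i=1}^{m}$, which stabilize $|\Psi^+_3\rangle^{\otimes m}$.

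The main obstacle is this last step: verifying that the ``no local, no bipartite generator'' hypothesis really forces the canonical GHZ block form, with no alternative ``genuinely tripartite'' normal forms surviving. I anticipate handling this by showing that one can always identify a single triple $(g,h,k)\in S^3$ of pairwise commuting generators whose $6$-block supports are exactly the GHZ pattern above on a single coordinate $i$ of each party, after suitable local Cliffords; one then peels off this $|\Psi^+_3\rangle$ factor and applies induction. Finding the triple uses the symplectic form: starting from any tripartite $g$, pick any generator $h$ such that $g_A$ and $h_A$ anticommute; use its $B$-part to find a third generator $k$; then use the local $\mathrm{Sp}(2m,\mathbb{F}_2)$ actions to diagonalize each block-pair into a single qubit, while the ``no bipartite generator'' condition prevents degenerate configurations that would otherwise arise.
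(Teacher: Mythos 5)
The paper itself does not prove this lemma --- it is imported verbatim from Ref.~\cite{ghz} --- so your sketch has to stand on its own, and as written it has a genuine gap in its central case analysis. First, the trichotomy ``local / bipartite / tripartite generators'' is not an invariant of the state: it depends on the generating set. The GHZ stabilizer group can be generated entirely by tripartite elements (e.g.\ $XXX$, $-YYX$, $-YXY$) even though it contains the bipartite elements $Z_AZ_B$, $Z_BZ_C$, $Z_AZ_C$; conversely your own target canonical form $\{X_{A_i}X_{B_i}X_{C_i},\,Z_{A_i}Z_{B_i},\,Z_{B_i}Z_{C_i}\}$ contains bipartite generators, contradicting the premise of your third case. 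More seriously, the key claim in your second case is false: the existence of an element $g=g_A\otimes g_B\otimes I_C$ in $S$ does \emph{not} force a second element supported on $AB$ whose restrictions anticommute with those of $g$, and hence does not let you peel off a Bell pair. The GHZ state is the counterexample: the subgroup of its stabilizer supported on $AB$ is exactly $\{I,\,Z_AZ_B\}$, there is no partner element, and no $\ket{\Psi^{+}}$ factor can be extracted (all two-party marginals of GHZ are separable). Your parenthetical justification via tracing out $C$ gives no contradiction: $\rho_{AB}=\tfrac12\left(\kb{00}{00}+\kb{11}{11}\right)$ is a legitimate mixed stabilizer state with no local stabilizer element.

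This misplaced case is exactly where the hard content of the lemma lives, and it is the step you explicitly defer (``I expect to reduce\ldots'', ``I anticipate handling this\ldots''). Note that your third case is essentially vacuous for nontrivial states: if the stabilizer group had \emph{no} nontrivial element supported within any single party or any pair, then $\rho_{AB},\rho_{BC},\rho_{AC}$ would all be maximally mixed, giving $n_A+n_B\le n_C$ together with its two permutations and hence $n=0$. So GHZ blocks must be recognized from bipartite-supported elements that are \emph{unpaired} within their own cut but whose ``partners'' live across the other two cuts; one must show, after quotienting by the locally supported subgroups $S_A,S_B,S_C$, that such unpaired classes on the three cuts match up into commuting triples local-Clifford equivalent to $(X_AX_BX_C,\,Z_AZ_B,\,Z_BZ_C)$, and only then peel off $\ket{0}$, $\ket{\Psi^{+}}$ and $\ket{\Psi^{+}_3}$ factors. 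That matching-up argument (the substance of Ref.~\cite{ghz}) is absent from your proposal, so the induction does not go through as stated.
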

\begin{corol}[\cite{ghz}]
Suppose $\Pi$ be a stabilizer projector describing a bipartite system $AB$. Then there exists a unitary Clifford operator $U=U_A\otimes U_B$ and
integers $a,b,c,d\ge 0$ such that 
\be
\label{PROJ1}
U\Pi U^{-1}
=\sum_{\alpha=1}^{2^a} \sum_{\beta=1}^{2^b} \sum_{\gamma=1}^{2^c}
|\omega_{\alpha\beta\gamma}\ra\la \omega_{\alpha\beta\gamma}|,
\ee
where
\be
\label{PROJ2}
|\omega_{\alpha\beta\gamma}\ra = 2^{-d/2} \sum_{\delta=1}^{2^d}  |\alpha,\gamma,\delta\ra \otimes
|\beta,\gamma,\delta\ra.
\ee
Here $|\alpha,\gamma,\delta\ra$
and $|\beta,\gamma,\delta\ra$ are the computational basis vectors of $A$ and $B$.
\label{corol:ghz}
\end{corol}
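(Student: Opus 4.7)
The plan is to reduce the corollary to Lemma~\ref{lemma:ghz} by purifying $\Pi$ to a tripartite stabilizer state on $(A,B,C)$ and then tracing out the auxiliary system $C$. Let $m$ be the integer with $\mathrm{Tr}[\Pi] = 2^m$, so that $\Pi$ is the code projector of a stabilizer code on $n = n_A + n_B$ qubits with logical dimension $2^m$. I would fix a symplectic basis $\bar X_j, \bar Z_j$ ($j = 1,\dots,m$) of logical Pauli operators and a reference codeword $\ket{\bar 0}$, defining logical basis states $\ket{\sigma}_L = \bar X^{\sigma} \ket{\bar 0}$ for $\sigma \in \{0,1\}^m$. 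With an $m$-qubit ancillary register $C$, I would form the purification
\[
\ket{\Phi}_{ABC} = 2^{-m/2}\sum_{\sigma \in \{0,1\}^m} \ket{\sigma}_L \otimes \ket{\sigma}_C,
\]
which is stabilized by the $n-m$ original stabilizer generators (acting trivially on $C$) together with the $2m$ operators $\bar X_j \otimes X_j^C$ and $\bar Z_j \otimes Z_j^C$; hence $\ket{\Phi}$ is a stabilizer state on $n+m$ qubits with $\mathrm{Tr}_C\ket{\Phi}\bra{\Phi} = \Pi/2^m$.

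Next I would apply Lemma~\ref{lemma:ghz} to $\ket{\Phi}$ with tripartition $(A,B,C)$, obtaining local Cliffords $U_A, U_B, U_C$ that take $\ket{\Phi}$ into a canonical tensor product of factors drawn from $\{\ket{0}, \ket{\Psi^+}, \ket{\Psi^+_3}\}$ distributed among the three parties. Let $d, a, b, c$ count the number of $\ket{\Psi^+}_{AB}$, $\ket{\Psi^+}_{AC}$, $\ket{\Psi^+}_{BC}$, and $\ket{\Psi^+_3}_{ABC}$ factors respectively. Any $\ket{0}_C$ factors drop out under the trace, and any residual $\ket{0}_A$ or $\ket{0}_B$ factors are absorbed by treating the corresponding pinned qubits as spectator subregisters within $A_1$ or $B_1$. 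Because the partial trace is invariant under unitary action on $C$,
\[
(U_A \otimes U_B)\,\Pi\,(U_A \otimes U_B)^\dagger = 2^m\,\mathrm{Tr}_C\Bigl[ (U_A\otimes U_B\otimes U_C)\ket{\Phi}\bra{\Phi}(U_A\otimes U_B\otimes U_C)^\dagger \Bigr].
\]

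The third step is to evaluate this partial trace term by term on the standard form. Each $\ket{\Psi^+}_{AC}$ traces to $I/2$ on the corresponding qubit of $A$; each $\ket{\Psi^+}_{BC}$ traces to $I/2$ on $B$; each $\ket{\Psi^+_3}_{ABC}$ traces to the classical mixture $\tfrac12(\ket{00}\bra{00} + \ket{11}\bra{11})_{AB}$; and each $\ket{\Psi^+}_{AB}$ is untouched. The prefactor $2^m = 2^{a+b+c}$ exactly cancels the combined $2^{-(a+b+c)}$ produced by these traces, yielding
\[
I^{\otimes a}_{A_1} \otimes I^{\otimes b}_{B_1} \otimes \bigl(\ket{00}\bra{00} + \ket{11}\bra{11}\bigr)^{\otimes c}_{A_2 B_2} \otimes \ket{\Psi^+}\bra{\Psi^+}^{\otimes d}_{A_3 B_3}.
\]
Expanding $I^{\otimes a} = \sum_{\alpha}\ket{\alpha}\bra{\alpha}$ and $I^{\otimes b} = \sum_\beta\ket{\beta}\bra{\beta}$, the classical block as $\sum_{\gamma}\ket{\gamma}\bra{\gamma}_{A_2}\otimes\ket{\gamma}\bra{\gamma}_{B_2}$, and writing $\ket{\Psi^+}^{\otimes d} = 2^{-d/2}\sum_{\delta}\ket{\delta}\ket{\delta}$, then permuting the tensor factors to group all qubits of $A$ together and all qubits of $B$ together, I recover exactly $\sum_{\alpha,\beta,\gamma}\ket{\omega_{\alpha\beta\gamma}}\bra{\omega_{\alpha\beta\gamma}}$ as required.

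The main obstacle I anticipate is purely bookkeeping: verifying via its logical Paulis that $\ket{\Phi}$ is a stabilizer state (standard but requiring care), tracking the powers of $2$ through the partial trace, and handling residual spectator $\ket{0}$ factors cleanly so that the index ranges $\alpha \in \{1,\dots,2^a\}$, $\beta \in \{1,\dots,2^b\}$, $\gamma \in \{1,\dots,2^c\}$, $\delta \in \{1,\dots,2^d\}$ carry exactly the meaning demanded by the statement. No new conceptual ingredient beyond Lemma~\ref{lemma:ghz} is needed.
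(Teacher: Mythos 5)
Your proof is correct and follows essentially the same route as the paper: both purify $\Pi$ to a tripartite stabilizer state on $(A,B,C)$, invoke Lemma~\ref{lemma:ghz} for that tripartition, and then read off $\Pi$ as a tensor product of identities, classical-correlation projectors, EPR projectors and pinned $|0\rangle\langle 0|$ factors, expanded into rank-one computational-basis projectors. The only cosmetic difference is the choice of purification---the paper applies $\Pi\otimes I$ to a maximally entangled state with an $n$-qubit ancilla, whereas you use a logical-basis purification with an $m$-qubit ancilla---but the subsequent bookkeeping is the same.
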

\begin{proof}
Let us apply Lemma~\ref{lemma:ghz} to a tripartite stabilizer state
\[
|\Psi\ra= (\Pi \otimes I)2^{-n/2}\sum_{z\in \{0,1\}^n }\; |z\rangle_{AB} \otimes |z\rangle_C ,
\]
where $n=|A|+|B|$ and $C$ is a system of $n$ qubits.
The lemma implies that 
$\Pi$ is equivalent modulo local Clifford operators to a tensor
product of local stabilizer projectors  $|0\ra\la 0|$ and $I=|0\ra\la 0|+|1\ra\la 1|$
as well as bipartite projectors
 $|00\ra\la 00|+|11\ra\la 11|$ 
and $|\Psi^+\ra\la \Psi^+|$ shared between $A$ and $B$.
Let $a$ and $b$ be the number of times $\Pi$ contains
the identity factor on $A$ and $B$ respectively. Let $c$ be the number of times
$\Pi$ contains the projector $|00\ra\la 00|+|11\ra\la 11|$ shared between $A$ and $B$.
Let $d$ be the number of times $\Pi$ contains the EPR projector $|\Psi^+\ra\la \Psi^+|$.
The desired family of states $\omega_{\alpha\beta\gamma}$ is then obtained
by writing each projector $I$ and $|00\ra\la 00|+|11\ra\la 11|$
as a sum of rank-$1$ projectors onto the computational basis vectors. 
\end{proof}
\begin{proof}[Proof of Theorem~\ref{thm:stabaligned}]
To prove the first two claims of the theorem 
it suffices to show that 
	\begin{equation}
	F_m(\phi\otimes \psi)\leq F_0(\phi)F_0(\psi).
	\label{eq:k00}
	\end{equation}
for all $m$.
Indeed, combining Eq.~\eqref{eq:k00} and the obvious bound
 $F_0(\phi)F_0(\psi) \le F_0(\phi\otimes \psi)$ shows that 
$F_m(\phi\otimes \psi)\le F_0(\phi\otimes \psi)$, that is,
$\phi\otimes \psi$ is stabilizer-aligned.
Using Eq.~\eqref{eq:k00} for $m=0$ gives 
multiplicativity of the stabilizer fidelity $F_0(\phi\otimes \psi)=F_0(\phi)F_0(\psi)$. 

Define a bipartite system $AB$ such that $\phi$ and $\psi$ are states of $A$ and $B$.
Let  $\Pi$  be a stabilizer projector of rank $2^m$ such that 
	\[
	F_m(\phi\otimes \psi)=2^{-m/2}\langle \phi \otimes \psi |\Pi|\phi\otimes \psi \rangle.
	\]
We shall write $\Pi$ as a sum of rank-$1$ stabilizer projectors as stated in 
Corollary~\ref{corol:ghz}.
Since local Clifford unitary operators do not change the stabilizer fidelity,
we shall absorb the unitaries $U_A$ and $U_B$
into the states $\phi$ and $\psi$ respectively.
Accordingly, below we set $U=I$.
Consider a single term $\omega_{\alpha\beta\gamma}$ in the decomposition of $\Pi$.
Applying the Cauchy-Schwarz inequality one gets
\be
\label{multi_eq1}
|\la \phi\otimes \psi|\omega_{\alpha\beta\gamma}\ra|^2 = 2^{-d} 
\left| \sum_{\delta=1}^{2^d} \la \phi|\alpha,\gamma,\delta\ra
\cdot \la \psi|\beta,\gamma,\delta\ra \right|^2
\le 2^{-d} \la \phi|\Pi^A_{\alpha\gamma}|\phi\ra
\cdot \la \psi|\Pi^B_{\beta \gamma}|\psi\ra ,
\ee
where we defined stabilizer projectors 
\be
\label{PROJ_AB}
\Pi^A_{\alpha,\gamma} = 
\sum_{\delta=1}^{2^d} |\alpha ,\gamma,\delta\ra\la \alpha ,\gamma,\delta|
\quad \mbox{and} \quad 
 \Pi^B_{\beta,\gamma}=\sum_{\delta=1}^{2^d}
 |\beta ,\gamma,\delta\ra\la \beta ,\gamma,\delta|.
\ee
By assumption, $\psi$ is stabilizer-aligned. Thus 
\be
\label{multi_eq1'}
\max_\gamma 
\la \psi| \sum_{\beta=1}^{2^b} \Pi^B_{\beta \gamma}|\psi\ra \le 2^{(b+d)/2} F_0(\psi).
\ee
Here we noted that $\sum_{\beta=1}^{2^b} \Pi^B_{\beta \gamma}$ is 
a projector of rank $2^{b+d}$ for all $\gamma$.
Combining Eq.~(\ref{multi_eq1},\ref{multi_eq1'}) gives
\be
\label{multi_eq2}
\la \phi\otimes \psi|\Pi|\phi\otimes \psi\ra
=
\sum_{\alpha=1}^{2^a} \sum_{\beta=1}^{2^b} \sum_{\gamma=1}^{2^c}
|\la \phi\otimes \psi|\omega_{\alpha\beta\gamma}\ra|^2
\le 2^{(b-d)/2} F_0(\psi) \cdot  \la \phi| \sum_{\alpha=1}^{2^a} \sum_{\gamma=1}^{2^c}
 \Pi^A_{\alpha,\gamma}|\phi\ra
\ee
The assumption that $\phi$ is stabilizer-aligned gives
\be
\label{multi_eq3}
 \la \phi| \sum_{\alpha=1}^{2^a} \sum_{\gamma=1}^{2^c}
\Pi^A_{\alpha,\gamma}|\phi\ra
\le 
2^{(a+c+d)/2} F_0(\phi).
\ee
Here we noted that $\sum_{\alpha=1}^{2^a} \sum_{\gamma=1}^{2^c}
 \Pi^A_{\alpha,\gamma}$ is a projector of rank $2^{a+c+d}$.
Combining Eqs.~(\ref{multi_eq2},\ref{multi_eq3}) gives
\[
\la \phi\otimes \psi|\Pi|\phi\otimes \psi\ra\le 2^{(a+b+c)/2} F_0(\psi)F_0(\phi).
\]
It remains to notice that $\Pi$ has rank
$2^m$, where $m=a+b+c$.
This establishes Eq.~\eqref{eq:k00}.

We now prove the converse statement from Theorem \ref{thm:stabaligned}.
\begin{lemma}
	Let $\phi$ be an $n$-qubit state which is not stabilizer-aligned. Then
	\[
	F_0(\phi\otimes \phi^{\star})>F_0(\phi)F_0(\phi^{\star}).
	\]
\end{lemma}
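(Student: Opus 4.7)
The plan is to exhibit, for each stabilizer projector $\Pi$ on $\phi$'s system that witnesses failure of stabilizer-alignment, a bipartite stabilizer state $|\Omega_\Pi\rangle$ that has large overlap with $|\phi\otimes \phi^\star\rangle$. The witness will be built via the Choi-Jamio\l kowski-style trick of ``folding'' $\Pi$ through a maximally entangled state.

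First I would observe that $F_0(\phi^\star)=F_0(\phi)$: the complex conjugate of any stabilizer state is itself a stabilizer state (the Pauli group is closed under complex conjugation and Clifford gates can be chosen to be real or purely imaginary up to global phase), so maximising overlap over $\mathrm{STAB}_n$ gives the same value for $\phi$ and $\phi^\star$. Consequently $F_0(\phi)F_0(\phi^\star)=F_0(\phi)^2$, and it suffices to prove $F_0(\phi\otimes \phi^\star)>F_0(\phi)^2$.

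Next, given any rank-$2^m$ stabilizer projector $\Pi$ on $n$ qubits, define
\begin{equation}
|\Omega_\Pi\rangle \;=\; \frac{1}{\sqrt{2^m}}\,(\Pi\otimes I)\sum_{z\in\{0,1\}^n} |z\rangle|z\rangle .
\end{equation}
Since $|\Phi^+\rangle^{\otimes n}\propto \sum_z|z\rangle|z\rangle$ is a stabilizer state and $\Pi\otimes I$ is a stabilizer projector, $|\Omega_\Pi\rangle$ (if nonzero) is a normalized stabilizer state on $2n$ qubits. A direct computation using $|\phi^\star\rangle=\sum_z\overline{\langle z|\phi\rangle}|z\rangle$ yields
\begin{equation}
\langle \phi\otimes\phi^\star|\Omega_\Pi\rangle \;=\; \frac{1}{\sqrt{2^m}}\sum_z \langle\phi|\Pi|z\rangle\langle z|\phi\rangle \;=\; \frac{\langle\phi|\Pi|\phi\rangle}{\sqrt{2^m}} ,
\end{equation}
so that
\begin{equation}
F_0(\phi\otimes \phi^\star) \;\geq\; \bigl|\langle\phi\otimes\phi^\star|\Omega_\Pi\rangle\bigr|^2 \;=\; \frac{\langle\phi|\Pi|\phi\rangle^2}{2^m} .
\end{equation}

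Finally, invoke the hypothesis that $\phi$ is not stabilizer-aligned: there exists some $m\geq 1$ and a rank-$2^m$ stabilizer projector $\Pi$ with $\langle\phi|\Pi|\phi\rangle > 2^{m/2}F_0(\phi)$ (the case $m=0$ is automatic). Plugging this into the inequality above gives
\begin{equation}
F_0(\phi\otimes\phi^\star)\;\geq\;\frac{\langle\phi|\Pi|\phi\rangle^2}{2^m}\;>\;\frac{2^m F_0(\phi)^2}{2^m}\;=\;F_0(\phi)F_0(\phi^\star),
\end{equation}
as required. The only genuinely nontrivial ingredient is the Choi-style construction of $|\Omega_\Pi\rangle$ and the identification of it as a stabilizer state; everything else is bookkeeping. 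I do not anticipate a serious obstacle, though one must check that $(\Pi\otimes I)\sum_z|z\rangle|z\rangle$ is nonzero (it is, because $\mathrm{tr}(\Pi)=2^m>0$) and that the bound is strict, which follows directly from the strict inequality in the definition of non-alignment.
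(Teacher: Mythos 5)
Your proof is correct and takes essentially the same route as the paper: your Choi-style witness $|\Omega_\Pi\rangle\propto(\Pi\otimes I)\sum_z|z\rangle|z\rangle$ is exactly the paper's $2n$-qubit stabilizer state $|\theta\rangle$, which the paper just writes in Clifford normal form as $(C\otimes C^{\star})|0\rangle_A|\Phi\rangle_{BB'}|0\rangle_{A'}$ with $\Pi=C(|0\rangle\langle 0|^{\otimes(n-m)}\otimes I_m)C^{\dagger}$. Your direct computation of the overlap $\langle\phi\otimes\phi^{\star}|\Omega_\Pi\rangle=2^{-m/2}\langle\phi|\Pi|\phi\rangle$ and the use of $F_m(\phi)>F_0(\phi)=F_0(\phi^{\star})$ match the paper's argument step for step.
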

\begin{proof}
	If $\phi$ is not stabilizer-aligned then we have $F_m(\phi)>F_0(\phi)$ for some $m\in \{1,\ldots,n\}$. Let $\Pi$ be a stabilizer projector with
	\[
	F_m(\phi)=\frac{1}{\sqrt{2}^m} \langle \phi|\Pi|\phi\rangle.
	\]
	Let $C$ be an $n$-qubit Clifford such that 
	\[
	\Pi=C\left( |0\rangle\langle0|_{n-m}\otimes I_m\right)C^{\dagger}.
	\]
	Next consider a system of $2n$ qubits and partition them as $[2n]=ABA'B'$ where $|A|=|A'|=n-m$ and $|B|=|B'|=m$. Define a $2n$-qubit stabilizer state
	\[
	|\theta\rangle=C\otimes \alpha |0\rangle_A |\Phi\rangle_{BB'} |0\rangle_{A'} ,
	\] 
	where 
	\[
	|\Phi\rangle_{BB'}=\frac{1}{\sqrt{2}^m}\sum_{z\in \{0,1\}^m}|z\rangle_B|z\rangle_{B'}.
	\]
	Also define a normalized $m$-qubit state
	\[
	|\omega\rangle=\frac{1}{2^{m/4}\sqrt{F_m(\phi)}}\left(\langle 0|_{n-m}\otimes I_m\right) C|\phi\rangle.
	\]
	\begin{align}
		F_0(\phi\otimes \phi^{\star})&\geq \langle \phi\otimes \phi^{\star} |\theta\rangle\langle \theta| \phi\otimes \phi^{\star}\rangle\\
		&=\langle \omega\otimes \omega^\star|\Phi\rangle \langle \Phi|\omega\otimes \omega^\star\rangle 2^m (F_m(\phi))^2\\
		&=(F_m(\phi))^2\\
		&> F_0(\phi) F_0(\phi^{\star}).
	\end{align}
	where in the last line we used the fact that $F_m(\phi)>F_0(\phi)=F_0(\phi^{\star})$.
\end{proof}
\end{proof}

\subsection{Proving and disproving stabilizer alignment}
\label{Sec_when_stab_aligned}

In this section we prove that all states of $n\le 3$ qubits
are stabilizer-aligned.  We also show that 
typical  $n$-qubit states are not stabilizer-aligned for sufficiently large $n$.
An important lemma is the following
\begin{lemma}
	\label{Lem_F_ordering}
For any quantum state $\psi$ we have
$F_m(\psi)\le F_0(\psi)$ for $m=1,2,3$.	 
\end{lemma}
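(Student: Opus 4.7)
The plan is to reduce the inequality to a statement about the minimum stabilizer fidelity of pure $m$-qubit states, and then verify that bound directly for $m\in\{1,2,3\}$.

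\emph{Clifford reduction.} Both sides of the target inequality are invariant under $n$-qubit Clifford conjugation of $\psi$, and the set $S_{n,m}$ of rank-$2^m$ stabilizer projectors is closed under Clifford conjugation. Since any $\Pi\in S_{n,m}$ is Clifford-equivalent to the trivial-code projector $I_{2^m}\otimes|0^{n-m}\rangle\langle 0^{n-m}|$, I may assume without loss of generality that the $\Pi$ achieving the maximum in $F_m(\psi)$ has this standard form. Setting $|\psi_0\rangle := (I_{2^m}\otimes\langle 0^{n-m}|)|\psi\rangle$ (an unnormalised $m$-qubit vector), one has $F_m(\psi)=2^{-m/2}\|\psi_0\|^2$. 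For every $m$-qubit stabilizer state $|\phi\rangle$ the product $|\phi\rangle\otimes|0^{n-m}\rangle$ is a bona fide $n$-qubit stabilizer state, so
\[
F_0(\psi)\;\ge\;\max_{|\phi\rangle\in\mathrm{STAB}_m}|\langle\phi|\psi_0\rangle|^2\;=\;\|\psi_0\|^2\,F_0^{(m)}\!\bigl(\psi_0/\|\psi_0\|\bigr),
\]
where $F_0^{(m)}$ denotes the $m$-qubit stabilizer fidelity. The lemma is therefore equivalent to
\[
\mu_m\;:=\;\min_{|\phi\rangle\in\mathbb{C}^{2^m},\,\|\phi\|=1}F_0^{(m)}(\phi)\;\ge\;2^{-m/2}\qquad\text{for }m=1,2,3.\qquad(\star)
\]

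\emph{Case $m=1$.} A pure qubit with Bloch vector $(x,y,z)$ on the unit sphere satisfies $F_0^{(1)}(\phi)=\tfrac12(1+\max(|x|,|y|,|z|))$, since the six single-qubit stabilizer states are the $\pm 1$ eigenstates of $X,Y,Z$. Because $\max(|x|,|y|,|z|)\ge 1/\sqrt 3$ on the unit sphere, $\mu_1=\tfrac12(1+1/\sqrt 3)>2^{-1/2}$, with the infimum attained by the face state from the proof of Proposition~\ref{thm:clifmagic}.

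\emph{Cases $m=2,3$.} These are the substantive cases: writing $p_s=|\langle s|\phi\rangle|^2$ for $s\in\mathrm{STAB}_m$, the $3$-design identities $\sum_s p_s^k = N_m\,k!/\prod_{j=0}^{k-1}(2^m+j)$ for $k\le 3$, combined with $\sum_s p_s^3\le(\max_s p_s)^2\sum_s p_s$, yield only $\max_s p_s\ge[6/((2^m+1)(2^m+2))]^{1/2}$, i.e.\ $1/\sqrt 5$ at $m=2$ and $1/\sqrt{15}$ at $m=3$, both falling short of $2^{-m/2}$. I would therefore verify $(\star)$ by explicit case analysis, using that the pure $m$-qubit state space modulo the local Clifford group $\mathcal{C}_1^{\otimes m}$ has very small dimension for $m\le 3$. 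Lemma~\ref{lemma:ghz}, applied to bipartitions of $[m]$, sorts the LC orbits into a short list of canonical forms (products, bipartite EPR-like, and, for $m=3$, genuinely tripartite GHZ-like), each parametrised by a handful of real amplitudes. On every canonical form $F_0^{(m)}(\phi)$ is the maximum of finitely many explicit quadratic polynomials in those amplitudes, so the minimisation is a finite semi-algebraic programme solvable by a KKT/Lagrange analysis or by a small SDP over the $m$-qubit stabilizer polytope. The extremisers one checks include tensor products of single-qubit face states, which give $F_0=\tfrac14(1+1/\sqrt 3)^2\approx 0.622$ at $m=2$ and its cube $\approx 0.491$ at $m=3$, both comfortably above $2^{-m/2}$; the remaining LC orbits are checked similarly.

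\emph{Main obstacle.} The delicate step is certifying $(\star)$ at $m=3$: the genuinely tripartite-entangled LC orbit gives a multi-parameter family of candidate minimisers with no product reduction available, so one must verify directly that $F_0^{(3)}$ never dips below $2^{-3/2}$ on this family. The approach does not extend beyond $m=3$: the analogue of $(\star)$ fails for some large enough $m$, and those $m$-qubit pure states are precisely the witnesses to non-stabilizer-alignment exploited in the converse half of Theorem~\ref{thm:stabaligned}.
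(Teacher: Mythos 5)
Your Clifford reduction and the equivalence of the lemma with the statement $(\star)$ that every normalized $m$-qubit pure state has stabilizer fidelity at least $2^{-m/2}$ for $m\le 3$ are correct, as is the $m=1$ computation and the observation that the 3-design moment bound is too weak. But the substantive cases $m=2,3$ are not actually proved, and the route you sketch for them has a genuine flaw. The local Clifford group is a \emph{finite} group, so quotienting the pure $m$-qubit state space by it does not reduce its dimension at all: for $m=3$ you are still minimizing $F_0^{(3)}$ over a 14-real-parameter family against 1080 stabilizer states, not over ``a short list of canonical forms with a handful of amplitudes.'' Lemma~\ref{lemma:ghz} cannot be invoked here either: it classifies \emph{stabilizer} states under local Cliffords, whereas your minimization ranges over arbitrary states (canonical forms with few parameters do exist under local \emph{unitaries}, but $F_0$ is not invariant under those, so that quotient is unavailable). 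Finally, the resulting problem is a min--max of quadratics over the sphere, which is not convex, so ``a small SDP over the stabilizer polytope'' is not an off-the-shelf certificate; you yourself flag the genuinely tripartite family at $m=3$ as unresolved. As written, the hard part of the lemma is missing.

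The paper closes exactly this gap with a different, uniform idea absent from your proposal. After the same Clifford reduction, write $\Pi'|\psi'\rangle=\Gamma^{1/2}|\omega\rangle\otimes|0^{n-m}\rangle$ with $\omega$ a normalized $m$-qubit state, and use purity, $\sum_{P}\langle\omega|P|\omega\rangle^2=2^m$ over all $m$-qubit Pauli operators, to conclude that some nontrivial Pauli $P$ satisfies $\langle\omega|P|\omega\rangle\ge\sqrt{(2^m-1)/(4^m-1)}$. Projecting onto the rank-$2^{m-1}$ stabilizer projector $\tfrac12(I+P)\otimes|0\rangle\langle0|^{\otimes(n-m)}$ yields the recursion $F_{m-1}(\psi)\ge 2^{-1/2}\bigl(1+\sqrt{(2^m-1)/(4^m-1)}\bigr)F_m(\psi)$, and iterating three times gives $F_0\ge 1.115\,F_1$, $F_0\ge 1.141\,F_2$, $F_0\ge 1.076\,F_3$, which is the whole lemma (and, through your own reduction, also $(\star)$). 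To repair your argument, either adopt this descent step or supply an actual certified global optimization for the $m=2,3$ minimizations; the structural shortcut you propose does not exist.
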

\noindent
It follows immediately that 
\begin{corol}
	\label{Cor_single_qubits}
All states of $n\le 3$ qubits are stabilizer-aligned.  
\end{corol}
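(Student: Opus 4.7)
The plan is to reduce the inequality to a lower bound on the $m$-qubit stabilizer fidelity via projection onto the stabilizer-code subspace, and then handle each $m\in\{1,2,3\}$ in turn.

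First I would pick a stabilizer projector $\Pi\in S_{n,m}$ attaining the maximum in $F_m(\psi)$. By the stabilizer formalism $\Pi$ is the code projector of some stabilizer code with $m$ logical qubits, and the logical Pauli group furnishes a canonical unitary identification of its range with $(\mathbb{C}^2)^{\otimes m}$; under this identification, stabilizer states that lie inside the code correspond bijectively to $m$-qubit stabilizer states, and each such code-internal stabilizer state is automatically a stabilizer state of the full $n$-qubit system. Writing $|\tilde\psi\rangle:=\Pi|\psi\rangle/\|\Pi|\psi\rangle\|$, regarded as an $m$-qubit state via the isomorphism above, every code-internal stabilizer state $\chi$ satisfies
\[
|\langle\psi|\chi\rangle|^2 \;=\; \langle\psi|\Pi|\psi\rangle\cdot|\langle\tilde\psi|\chi\rangle|^2.
\]
Maximizing over such $\chi$ and using that $\chi$ is still a stabilizer state of the full system yields $F_0(\psi)\ge \langle\psi|\Pi|\psi\rangle\cdot F_0(\tilde\psi)$, where on the right-hand side $F_0(\tilde\psi)$ is computed with respect to $m$-qubit stabilizer states. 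It therefore suffices to prove that $F_0(\tilde\psi)\ge 2^{-m/2}$ for every pure $m$-qubit state when $m\in\{1,2,3\}$; rearranging then yields $F_m(\psi)\le F_0(\psi)$.

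For $m=1$ the six single-qubit stabilizer states form the octahedron $\{\pm\hat x,\pm\hat y,\pm\hat z\}$ in the Bloch ball, so $F_0(\tilde\psi)=(1+\|\vec r\|_\infty)/2$ for a pure state with Bloch vector $\vec r$. The minimum of $\|\vec r\|_\infty$ over unit vectors is $1/\sqrt{3}$, attained at $\vec r=(1,1,1)/\sqrt{3}$, and so $F_0(\tilde\psi)\ge (1+1/\sqrt{3})/2>1/\sqrt{2}=2^{-1/2}$. For $m=2$ I would invoke the fact that the stabilizer states form a complex-projective $3$-design~\cite{Webb16,kueng15,zhu16}: writing $X=|\langle\tilde\psi|\phi\rangle|^2$ under a uniform draw of a stabilizer state $\phi$, the design identities give $\mathbb{E}[X^k]=\binom{2^m+k-1}{k}^{-1}$ for $k\le 3$, and the elementary estimate $\max X\ge \mathbb{E}[X^3]/\mathbb{E}[X^2]=3/(2^m+2)$ evaluates to exactly $1/2=2^{-2/2}$ at $m=2$.

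The main obstacle is the case $m=3$, where the same moment argument only yields $3/10<1/(2\sqrt{2})$ and so the $3$-design property by itself falls short. To close the gap I would exploit additional structure of the $3$-qubit setting: reduce $\tilde\psi$ to a canonical form under local unitaries (a pure $3$-qubit state has a five-real-parameter generalized Schmidt form), and then use the classification of tripartite stabilizer entanglement in Lemma~\ref{lemma:ghz} together with the rich action of the $3$-qubit local Clifford group to catalogue the possible overlaps $|\langle\tilde\psi|\phi\rangle|^2$. Either an analytic minimization on the resulting low-dimensional family, or a rigorous compactness-plus-continuity argument combined with a numerical global search on this finite-dimensional parameter space, should certify the required bound $F_0(\tilde\psi)\ge 2^{-3/2}$ and complete the proof via the reduction established in the first paragraph.
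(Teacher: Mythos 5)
Your reduction in the first paragraph is sound: writing $\Pi = U\bigl(I^{\otimes m}\otimes |0\rangle\langle 0|^{\otimes(n-m)}\bigr)U^{\dagger}$ for a Clifford $U$, the code-internal stabilizer states are genuine $n$-qubit stabilizer states, the overlap identity gives $F_0(\psi)\ge \langle\psi|\Pi|\psi\rangle\,F_0(\tilde\psi)$, and the corollary would indeed follow from the universal bound $F_0(\tilde\psi)\ge 2^{-m/2}$ for all $m$-qubit pure states with $m\le 3$. Your $m=1$ octahedron computation and your $m=2$ third-moment argument (using that stabilizer states form a 3-design, giving $\max_\phi|\langle\tilde\psi|\phi\rangle|^2\ge 3/(2^m+2)=1/2$, which suffices since the definition only requires a non-strict inequality) are both correct, and this is a genuinely different route from the paper, which instead proves the recursive inequality $F_{m-1}(\psi)\ge 2^{-1/2}\bigl(1+\sqrt{(2^m-1)/(4^m-1)}\bigr)F_m(\psi)$ of Lemma~\ref{Lem_F_ordering} and chains it for $m\le 3$.

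However, the $m=3$ case is a genuine gap, and it is exactly where the real work lies. The bound you need, $F_0(\tilde\psi)\ge 2^{-3/2}$ for every pure $3$-qubit state, is essentially the statement being proved in the special case $n=3$, $\Pi=I$ (there $F_3(\tilde\psi)=2^{-3/2}$ identically), so deferring it to an ``analytic minimization or a compactness-plus-continuity argument combined with a numerical global search'' over the five-parameter generalized Schmidt family is not a proof: nothing in your outline certifies the bound, the invocation of Lemma~\ref{lemma:ghz} does not apply to the non-stabilizer target $\tilde\psi$, and a numerical search would not be rigorous as stated. The paper closes precisely this gap with an elementary Pauli second-moment argument: for any normalized $m$-qubit state $\omega$ one has $\sum_{P\ne I}\langle\omega|P|\omega\rangle^2=2^m-1$, so some nontrivial Pauli has expectation at least $\sqrt{(2^m-1)/(4^m-1)}$; projecting onto $\frac12(I+P)$ gives a rank-halving step which, iterated three times starting from $\Pi=I$ on three qubits, yields $F_0(\tilde\psi)\ge 1.076\cdot 2^{-3/2}$. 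Grafting that step onto your reduction would finish your $m=3$ case, but as written the hardest part of the proof is missing.
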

Indeed, if we consider $n$-qubit states, it suffices to check
that $F_m(\psi)\le F_0(\psi)$ for $m\le n$.
\begin{corol}
\label{Cor_Fid_Stab_align}	
	If  $F_0(\psi) \geq 1/4$ then $\psi$ is stabilizer-aligned.  
\end{corol}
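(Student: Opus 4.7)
The plan is to split into the small-$m$ range where Lemma~\ref{Lem_F_ordering} already handles the claim, and the large-$m$ range where I would exploit a trivial operator-norm bound on stabilizer projectors. By definition, stabilizer-alignment requires $F_m(\psi)\le F_0(\psi)$ for every admissible $m\in\{0,1,\ldots,n\}$. Lemma~\ref{Lem_F_ordering} immediately delivers the inequality for $m\in\{1,2,3\}$ without any assumption on $F_0(\psi)$, so the only remaining cases are $m\ge 4$.

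For $m\ge 4$, first I would observe that any stabilizer projector $\Pi\in S_{n,m}$ satisfies $\Pi\le \id$, hence $\langle\psi|\Pi|\psi\rangle\le \|\psi\|^2=1$. Plugging into the definition gives
\begin{equation}
F_m(\psi)=2^{-m/2}\max_{\Pi\in S_{n,m}}\langle\psi|\Pi|\psi\rangle \le 2^{-m/2}\le 2^{-2}=\tfrac{1}{4}.
\end{equation}
Combining this with the hypothesis $F_0(\psi)\ge 1/4$ yields $F_m(\psi)\le 1/4\le F_0(\psi)$, as required.

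Putting the two ranges together, $F_m(\psi)\le F_0(\psi)$ holds for every $m$ in $\{1,2,3\}\cup\{4,\ldots,n\}$, which is the defining property of stabilizer-alignment. There is no real obstacle here: the result is essentially a two-line consequence of Lemma~\ref{Lem_F_ordering} together with the elementary bound $\Pi\le\id$, with the threshold $1/4$ arising precisely because $2^{-m/2}$ first drops to $1/4$ at $m=4$, which is exactly where Lemma~\ref{Lem_F_ordering} stops applying.
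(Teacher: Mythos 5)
Your proof is correct and matches the paper's argument exactly: the cases $m\in\{1,2,3\}$ follow from Lemma~\ref{Lem_F_ordering}, and for $m\ge 4$ the paper likewise uses $F_m(\psi)\le 2^{-m/2}\le 1/4\le F_0(\psi)$. Nothing is missing.
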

Indeed, if $m\ge 4$ then  $F_m(\psi) \le 2^{-m/2}\le 1/4\le F_0(\psi)$.

Finally,  we show that Haar-random $n$-qubit states  are not stabilizer-aligned
for sufficiently large $n$.
\begin{claim}
	\label{Claim_Harr}
	Let $\psi$ be a Haar-random $n$-qubit state.  Then
	\[
	\mathrm{Pr}[F_0(\psi\otimes \psi^{\star})\neq F_0 (\psi)F_0(\psi^{\star})]\geq 1-o(1).
	\]
	and so for large enough $n$ a typical state $\psi$ is not stabilizer-aligned.
\end{claim}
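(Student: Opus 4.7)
The plan is to reduce the claim to a statement about stabilizer alignment, then exhibit a specific witness index $m$ that forces failure of alignment for a Haar-random state.

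By the second part of Theorem~\ref{thm:stabaligned}, it suffices to show that with probability $1-o(1)$ a Haar-random $\psi$ is \emph{not} stabilizer-aligned, i.e.\ that there exists some $m$ with $F_m(\psi)>F_0(\psi)$. The natural choice is $m=n$: the set $S_{n,n}$ contains only the trivial projector $\Pi = I$, so $F_n(\psi) = 2^{-n/2}\langle\psi|I|\psi\rangle = 2^{-n/2}$ deterministically. Thus the claim reduces to proving
\[
\Pr\!\bigl[F_0(\psi)<2^{-n/2}\bigr] \;=\; 1-o(1).
\]

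For this I would combine standard concentration of random overlaps with a union bound over the (discrete) set of stabilizer states. Fix a stabilizer state $\phi$. For Haar-random $\psi$ on $\CC^{d}$ with $d=2^n$, the random variable $|\langle\phi|\psi\rangle|^2$ follows a $\mathrm{Beta}(1,d-1)$ distribution, yielding the tail bound
\[
\Pr\!\bigl[|\langle\phi|\psi\rangle|^2\ge t\bigr]=(1-t)^{d-1}\le \exp\!\bigl(-t(d-1)\bigr).
\]
The number of $n$-qubit stabilizer states is $|\mathrm{STAB}_n|=2^n\prod_{j=1}^{n}(2^j+1)=2^{O(n^2)}$, so a union bound gives
\[
\Pr\!\bigl[F_0(\psi)\ge t\bigr]\le 2^{O(n^2)}\exp\!\bigl(-t(2^n-1)\bigr).
\]
Setting $t=2^{-n/2}$ makes the right-hand side $2^{O(n^2)}\exp\!\bigl(-\Omega(2^{n/2})\bigr)=o(1)$.

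Putting the pieces together: with probability $1-o(1)$ over Haar-random $\psi$ one has $F_0(\psi)<2^{-n/2}=F_n(\psi)$, so $\psi$ fails to be stabilizer-aligned, and Theorem~\ref{thm:stabaligned} then gives $F_0(\psi\otimes\psi^{\star})>F_0(\psi)F_0(\psi^{\star})$. The only nontrivial step is the concentration-plus-union-bound estimate, which is routine because the doubly-exponential gap $2^{n/2}\gg n^2$ easily swallows the $2^{O(n^2)}$ counting factor; no delicate estimation is required.
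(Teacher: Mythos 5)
Your proposal is correct, but it routes the argument differently from the paper. The concentration half is the same in both: a Beta-distribution tail bound $\Pr[|\langle\phi|\psi\rangle|^2\ge t]\le e^{-t(2^n-1)}$ plus a union bound over the $2^{O(n^2)}$ stabilizer states; you take the threshold $t=2^{-n/2}$, while the paper takes the sharper $t=n^3/2^n$ (giving failure probability $e^{-\Omega(n^3)}$ rather than just $o(1)$ — both suffice for the claim). Where you diverge is the second half. The paper proves the non-multiplicativity statement \emph{directly}: it lower-bounds $F_0(\psi\otimes\psi^{\star})\ge 2^{-n}$ by overlapping $\psi\otimes\psi^{\star}$ with the maximally entangled stabilizer state $2^{-n/2}\sum_z\ket{z}\ket{z}$ (the overlap is exactly $\sum_z|\langle z|\psi\rangle|^2=1$ up to normalization), and compares this with $F_0(\psi)F_0(\psi^{\star})\le n^6 2^{-2n}$; non-alignment then follows as the corollary. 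You instead prove non-alignment first, via the clean observation that $S_{n,n}=\{I\}$ forces $F_n(\psi)=2^{-n/2}$ deterministically, so $F_0(\psi)<2^{-n/2}=F_n(\psi)$ with high probability, and then invoke the converse part of Theorem~\ref{thm:stabaligned} to get the strict inequality $F_0(\psi\otimes\psi^{\star})>F_0(\psi)F_0(\psi^{\star})$. This is legitimate, since that converse is proved in the paper independently of the claim; note, though, that its proof specialized to $m=n$ constructs essentially the same maximally entangled witness, so your argument is the paper's argument with the witness hidden inside a black box. Your route is slightly more economical and makes the source of non-alignment transparent (the trivial rank-$2^n$ projector); the paper's route is more self-contained and yields an explicit quantitative gap between $F_0(\psi\otimes\psi^{\star})$ and $F_0(\psi)F_0(\psi^{\star})$.
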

Highly structured states on a large number of qubits may be stabilizer-aligned, and for instance it is an open question whether or not all Clifford magic states are stabilizer-aligned.

\begin{proof}[Proof of Lemma~\ref{Lem_F_ordering}]
First, we claim that 
\be
\label{FFbound}
F_{m-1}(\psi)\ge 2^{-1/2} \left( 1+ \left[\frac{2^m-1}{4^m-1}\right]^{1/2}\right)
\cdot F_m(\psi)
\ee
for all $m\ge 1$. Indeed, consider a fixed $m$ and 
a rank-$2^m$ stabilizer projector $\Pi \in S_{n,m}$ such that 
$F_m(\psi)=2^{-m/2}\la \psi|\Pi|\psi\ra$.
Using the standard stabilizer formalism one can show that 
\[
U\Pi U^{-1} = I^{\otimes m}\otimes |0\ra \la 0|^{\otimes (n-m)}\equiv \Pi'
\]
for some $n$-qubit unitary Clifford operator $U$.
Define a state $|\psi'\ra=U|\psi\ra$.
We have
\[
\Pi'  |\psi' \ra = \Gamma^{1/2}  |\omega \ra \otimes |0^{n-m}\ra
\]
for some $m$-qubit normalized state $|\omega\ra$ and 
$\Gamma=\la \psi'|\Pi'|\psi'\ra=\la \psi|\Pi|\psi\ra$.
Since  $\omega$ is normalized, 
\[
\sum_{P\ne I} \la \omega|P|\omega\ra^2 = 2^m-1,
\]
where the sum runs over all $4^m-1$ non-trivial Pauli operators on $m$ qubits.
Thus there exists an $m$-qubit Pauli
operator $P\ne I$ such that 
\be
\label{goodP}
\la \omega |P|\omega\ra \ge \left(\frac{2^m-1}{4^m-1}\right)^{1/2}.
\ee
Define a stabilizer projector 
\[
\Pi''=
\frac12(I+P)\otimes |0\ra \la 0|^{\otimes (n-m)}\in S_{n,m-1}.
\]
Recalling that $\Gamma=\la \psi|\Pi|\psi\ra=2^{m/2} F_m(\psi)$ we arrive at
\begin{align}
F_{m-1}(\psi)=
F_{m-1}(\psi') & \ge 2^{-(m-1)/2}  \la \psi'|\Pi''|\psi'\ra \\ \nonumber
& =2^{-(m-1)/2} \frac{\Gamma}2 (1+ \la \omega |P|\omega\ra )  \\ \nonumber
& = 2^{-1/2} (1+ \la \omega |P|\omega\ra )\cdot F_m(\psi).
\end{align}
Combining this identity and Eq.~(\ref{goodP}) proves Eq.~(\ref{FFbound}).
Applying Eq.~(\ref{FFbound}) inductively gives
\be
\label{F1bound}
F_0(\psi)\ge  2^{-1/2}(1+\sqrt{1/3}) \cdot F_1(\psi)\approx 1.115 \cdot F_1(\psi),
\ee
\be
\label{F2bound}
F_0(\psi)\ge  2^{-1/2}(1+\sqrt{1/3}) \cdot 2^{-1/2}(1+\sqrt{3/15}) \cdot F_2(\psi)  \approx 1.141 \cdot F_2(\psi),
\ee
\be
\label{F3bound}
F_0(\psi)\ge 2^{-1/2}(1+\sqrt{1/3}) \cdot 2^{-1/2}(1+\sqrt{3/15})
\cdot 2^{-1/2}(1+\sqrt{7/63})\cdot F_3(\psi) 
  \approx 1.076 \cdot F_3(\psi).
\ee
Thus  $F_0(\psi)\ge F_m(\psi)$ for $m=1,2,3$ proving the lemma.
\end{proof}

Next, we prove claim~\ref{Claim_Harr}. \begin{proof}
	Let $w$ be any $n$-qubit state. For Haar-random $\psi$ the probability density function $p(y)$ of $y=|\langle w|\psi\rangle|^2$ does not depend on $w$ and is equal to (equation (9) of Ref.~\cite{random}), 
	\[
	p(y)=(2^n-1)(1-y)^{2^n-2}.
	\]
	Integrating this we obtain the cumulative distribution function
	\[
	\mathrm{Pr}\left[ |\langle w|\psi\rangle|^2\geq x\right]=(1-x)^{2^n-1}\leq \exp(-x(2^{n}-1)).
	\]
	Since an $n$-qubit stabilizer state is specified by $O(n^2)$ bits the cardinality of the set $\mathrm{STAB}_n$ of $n$-qubit stabilizer states is $|\mathrm{STAB}_n|\leq 2^{O(n^2)}$. Choosing $x=n^3/2^n$ and applying a union bound we get
	\[
	\mathrm{Pr} \left[ \left(\max_{w\in \mathrm{STAB}_n} |\langle \psi|w \rangle|^2\right)\geq n^3/2^n\right]\leq e^{-\Omega(n^3)}.
	\]
	This says that with probability very close to 1 a random $\psi$ has $F_0(\psi)=F_0(\psi^{\star})\leq n^3/2^n$. Next suppose $\psi$ has this property. Then
	\[
	F_0(\psi\otimes \psi^{\star})\geq \left|\frac{1}{\sqrt{2}^n}\sum_{z\in \{0,1\}^n} \langle z|\psi\rangle \langle z| \psi^{\star}\rangle \right|^2=\frac{1}{2^n},
	\]
	which is strictly greater than $F_0(\psi)F_0(\psi^{\star})\leq 2^{-2n}(n^3)^2$.
\end{proof}

\begin{figure}[t]
	\centering
	\includegraphics[scale=0.7]{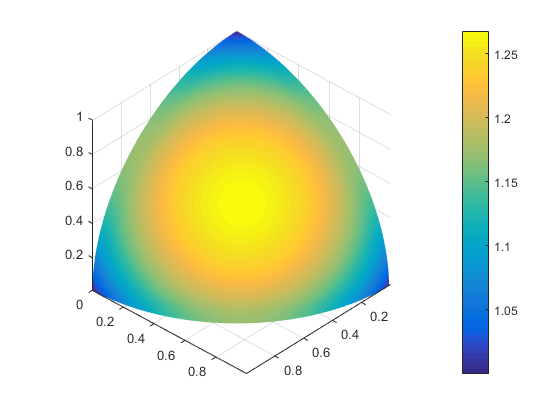}
	\caption{The color indicates the value of $\xi$ for single-qubit states in the first octant of the Bloch sphere. This function controls the upper bound on the approximate stabilizer rank as in Eq.~\eqref{eq:expsingleq}.}
	\label{fig:cstar1qubit}
\end{figure}

\subsection{Multiplicativity of stabilizer extent}
\label{Sec_Cstar_multi}
This subsection considers tensor products  of few-qubit states 
that involve at most three qubits each
and shows that $\xi$ behaves multiplicatively for such products, proving Proposition~\ref{thm:prod}. The proof will draw heavily on Theorem \ref{thm:witness} and Corollary~\ref{Cor_single_qubits}.
\begin{proof}[Proof of Proposition~\ref{thm:prod}]
By Theorem~\ref{thm:witness} there exist witness states $\{ \omega_{\star,1} , \omega_{\star,2}  , \ldots , \omega_{\star,L}  \}$ such that
	\begin{equation}
\frac{|\bk{\psi_j}{\omega_{\star, j}}|^2}{F(\omega_{\star, j})} =  \xi(\psi_j).
\end{equation}
We consider the product witness $\ket{\Omega } =  \bigotimes_j \ket{\omega_{\star,j}}$
for which
\begin{equation}
|\bk{\Psi}{\Omega}|^2 =  \prod_j |\bk{\psi_j}{\omega_{\star,j}}|^2.
\end{equation}
Furthermore,  using Corollary~\ref{Cor_single_qubits} and Theorem \ref{thm:stabaligned} we get
\begin{equation}
	F( \Omega ) = \prod_j  	F( \omega_{\star, j} ) .
\end{equation}
Putting this together yields
\begin{equation}  
\frac{|\bk{\Psi}{\Omega}|^2}{ F( \Omega ) } =  \prod_j  \frac{ |\bk{\psi_j}{\omega_{\star,j}}|^2 }{ F( \omega_{\star, j} ) } = \prod_{j=1}^L  \xi(\psi_j) .
\end{equation}
Thus, using $\Omega$ as a witness, we get
\begin{equation}  
\prod_{j=1}^L  \xi(\psi_j) \leq \xi(\Psi) .
\end{equation}
Furthermore, $\xi$ is inherently sub-multiplicative and so we must have equality. 
\end{proof}
Now let us see how this can be used to bound the approximate stabilizer rank of a product state $\alpha^{\otimes n}$ where $\alpha$ is a single-qubit state. Combining Theorem \ref{thm:prod} with Lemma \ref{lem:randomCvec} we get

\begin{equation}
\chi_{\delta}(\alpha^{\otimes n})\leq \delta^{-1} \xi(\alpha^{\otimes n})=\delta^{-2}(\xi(\alpha))^{n}.
\label{eq:expsingleq}
\end{equation}
Note that since $\alpha$ is a single-qubit state we can easily compute $\xi(\alpha)$ by solving a small convex optimization program.  In Figure \ref{fig:cstar1qubit} we plot $\xi(\alpha)$ as a function of the single-qubit state $\alpha$ on the first octant of the Bloch sphere.

\begin{figure}[t]
\centering
\includegraphics[scale=0.7]{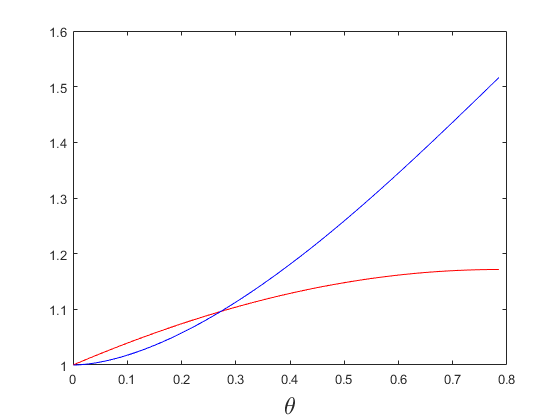}
\caption{The approximate stabilizer rank of $|\theta^{\otimes n}\rangle$ is upper bounded as $\chi_\delta(\theta^{\otimes n})\leq \delta^{-2}\xi(\theta)^{n}$, where $\xi(\theta)=(\cos(\theta/2)+\tan(\pi/8)\sin(\theta/2))^2$ is attained by the stabilizer decomposition from Eq.~\eqref{Eq_SingleQubitsDecomp}. The red line shows the function $\xi(\theta)$ for $\theta \in [0,\pi/4]$ and the blue line shows the function $g(\theta)=2^{h_2(\cos^2(\theta/2))}$ where $h_2$ is the binary entropy.  Our upper bound on the approximate stabilizer rank of $\theta^{\otimes n}$ performs better that obtained by a naive expansion in the $0,1$ basis whenever the red line lies below the blue line.}
\label{fig:redblue}
\end{figure}

The maximum value plotted in Figure \ref{fig:cstar1qubit} is $\xi(f)=2/(1+1/\sqrt{3})\approx 1.2679$, which is achieved by the so-called face state $|f\rangle$ which lies in the center of the surface and is defined by
\[
|f\rangle\langle f|=\frac{1}{2}\left(I+\frac{1}{\sqrt{3}}(X+Y+Z)\right).
\]

The single-qubit states in Figure \ref{fig:cstar1qubit} which lie in the $x$-$z$ plane are of the form
\begin{equation}
	\label{Eq_SingleQubitsDecomp}
|\theta\rangle=\cos(\theta/2)|0\rangle+\sin(\theta/2)|1\rangle= \left(\cos(\theta/2)-\sin(\theta/2)\right)|0\rangle+\sqrt{2}\sin(\theta/2)|+\rangle
\end{equation}
for $\theta\in [0,\pi/2]$. In this case, the stabilizer decomposition on the right hand side achieves the optimal value of $\xi$. We can use this example to show that in the general case the upper bound on approximate stabilizer rank given in Theorem \ref{thm:randomCvec} is not tight (for $\delta=O(1)$, say). When $\theta$ is close to $0$ it becomes advantageous to expand $\theta^{\otimes n}$ in the standard $0,1$ basis and truncate amplitudes which are very small. Using this approach one obtains an approximate stabilizer rank scaling as $2^{h_2(\cos^2(\theta/2))}$ where $h_2$ is the binary entropy. In Figure \ref{fig:redblue} we compare the performance of these upper bounds as a function of  $\theta$.

\section{Acknowledgements}

EC and MH are supported by the EPSRC (Grant No. EP/M024261/1). PC is supported by the EPSRC (Grant No. EP/L015242/1). The collaboration benefited from support by the NQIT project partnership fund (Grant No. EP/M013243/1), an EPSRC IIKE award and the IBM Research Frontiers Institute.

\bibliography{StabRank}

\begin{thebibliography}{60}
\providecommand{\natexlab}[1]{#1}
\providecommand{\url}[1]{\texttt{#1}}
\expandafter\ifx\csname urlstyle\endcsname\relax
  \providecommand{\doi}[1]{doi: #1}\else
  \providecommand{\doi}{doi: \begingroup \urlstyle{rm}\Url}\fi

\bibitem[Aaronson and Chen(2017)]{aaronson2016complexity}
Scott Aaronson and Lijie Chen.
\newblock Complexity-theoretic foundations of quantum supremacy experiments.
\newblock In \emph{32nd Computational Complexity Conference (CCC 2017)}.
  Schloss Dagstuhl-Leibniz-Zentrum fuer Informatik, 2017.
\newblock \doi{10.4230/LIPIcs.CCC.2017.22}.

\bibitem[Aaronson and Gottesman(2004)]{aaronson04improved}
Scott Aaronson and Daniel Gottesman.
\newblock Improved simulation of stabilizer circuits.
\newblock \emph{Physical Review A}, 70\penalty0 (5):\penalty0 052328, 2004.
\newblock \doi{10.1103/PhysRevA.70.052328}.

\bibitem[Aharonov et~al.(2017)Aharonov, Ben-Or, Eban, and
  Mahadev]{aharonov2017interactive}
Dorit Aharonov, Michael Ben-Or, Elad Eban, and Urmila Mahadev.
\newblock Interactive proofs for quantum computations.
\newblock \emph{arXiv preprint arXiv:1704.04487}, 2017.

\bibitem[Aleksandrowicz et~al.(2019)Aleksandrowicz, Alexander, Barkoutsos,
  Bello, Ben-Haim, Bucher, Cabrera-Hern{\'a}ndez, Carballo-Franquis, Chen,
  Chen, Chow, C{\'o}rcoles-Gonzales, Cross, Cross, Cruz-Benito, Culver,
  Gonz{\'a}lez, Torre, Ding, Dumitrescu, Duran, Eendebak, Everitt, Sertage,
  Frisch, Fuhrer, Gambetta, Gago, Gomez-Mosquera, Greenberg, Hamamura,
  Havlicek, Hellmers, Herok, Horii, Hu, Imamichi, Itoko, Javadi-Abhari,
  Kanazawa, Karazeev, Krsulich, Liu, Luh, Maeng, Marques,
  Mart{\'\i}n-Fern{\'a}ndez, McClure, McKay, Meesala, Mezzacapo, Moll,
  Rodr{\'\i}guez, Nannicini, Nation, Ollitrault, O'Riordan, Paik, P{\'e}rez,
  Phan, Pistoia, Prutyanov, Reuter, Rice, Davila, Rudy, Ryu, Sathaye, Schnabel,
  Schoute, Setia, Shi, Silva, Siraichi, Sivarajah, Smolin, Soeken, Takahashi,
  Tavernelli, Taylor, Taylour, Trabing, Treinish, Turner, Vogt-Lee, Vuillot,
  Wildstrom, Wilson, Winston, Wood, Wood, W{\"o}rner, Akhalwaya, and
  Zoufal]{Qiskit}
Gadi Aleksandrowicz, Thomas Alexander, Panagiotis Barkoutsos, Luciano Bello,
  Yael Ben-Haim, David Bucher, Francisco~Jose Cabrera-Hern{\'a}ndez, Jorge
  Carballo-Franquis, Adrian Chen, Chun-Fu Chen, Jerry~M. Chow, Antonio~D.
  C{\'o}rcoles-Gonzales, Abigail~J. Cross, Andrew Cross, Juan Cruz-Benito,
  Chris Culver, Salvador De La~Puente Gonz{\'a}lez, Enrique De~La Torre, Delton
  Ding, Eugene Dumitrescu, Ivan Duran, Pieter Eendebak, Mark Everitt,
  Ismael~Faro Sertage, Albert Frisch, Andreas Fuhrer, Jay Gambetta, Borja~Godoy
  Gago, Juan Gomez-Mosquera, Donny Greenberg, Ikko Hamamura, Vojtech Havlicek,
  Joe Hellmers, {\L}ukasz Herok, Hiroshi Horii, Shaohan Hu, Takashi Imamichi,
  Toshinari Itoko, Ali Javadi-Abhari, Naoki Kanazawa, Anton Karazeev, Kevin
  Krsulich, Peng Liu, Yang Luh, Yunho Maeng, Manoel Marques, Francisco~Jose
  Mart{\'\i}n-Fern{\'a}ndez, Douglas~T. McClure, David McKay, Srujan Meesala,
  Antonio Mezzacapo, Nikolaj Moll, Diego~Moreda Rodr{\'\i}guez, Giacomo
  Nannicini, Paul Nation, Pauline Ollitrault, Lee~James O'Riordan, Hanhee Paik,
  Jes{\'u}s P{\'e}rez, Anna Phan, Marco Pistoia, Viktor Prutyanov, Max Reuter,
  Julia Rice, Abd{\'o}n~Rodr{\'\i}guez Davila, Raymond Harry~Putra Rudy, Mingi
  Ryu, Ninad Sathaye, Chris Schnabel, Eddie Schoute, Kanav Setia, Yunong Shi,
  Adenilton Silva, Yukio Siraichi, Seyon Sivarajah, John~A. Smolin, Mathias
  Soeken, Hitomi Takahashi, Ivano Tavernelli, Charles Taylor, Pete Taylour,
  Kenso Trabing, Matthew Treinish, Wes Turner, Desiree Vogt-Lee, Christophe
  Vuillot, Jonathan~A. Wildstrom, Jessica Wilson, Erick Winston, Christopher
  Wood, Stephen Wood, Stefan W{\"o}rner, Ismail~Yunus Akhalwaya, and Christa
  Zoufal.
\newblock Qiskit: An open-source framework for quantum computing, 2019.

\bibitem[Alon(1990)]{alon1990transversal}
Noga Alon.
\newblock Transversal numbers of uniform hypergraphs.
\newblock \emph{Graphs and Combinatorics}, 6\penalty0 (1):\penalty0 1--4, 1990.
\newblock \doi{10.1007/BF01787474}.

\bibitem[Anders and Briegel(2006)]{anders2006fast}
Simon Anders and Hans~J Briegel.
\newblock Fast simulation of stabilizer circuits using a graph-state
  representation.
\newblock \emph{Physical Review A}, 73\penalty0 (2):\penalty0 022334, 2006.
\newblock \doi{10.1103/PhysRevA.73.022334}.

\bibitem[Bennink et~al.(2017)Bennink, Ferragut, Humble, Laska, Nutaro,
  Pleszkoch, and Pooser]{OakRidge17}
Ryan~S. Bennink, Erik~M. Ferragut, Travis~S. Humble, Jason~A. Laska, James~J.
  Nutaro, Mark~G. Pleszkoch, and Raphael~C. Pooser.
\newblock Unbiased simulation of near-{C}lifford quantum circuits.
\newblock \emph{Physical Review A}, 95:\penalty0 062337, Jun 2017.
\newblock \doi{10.1103/PhysRevA.95.062337}.

\bibitem[Boixo et~al.(2017)Boixo, Isakov, Smelyanskiy, and
  Neven]{boixo2017simulation}
Sergio Boixo, Sergei~V Isakov, Vadim~N Smelyanskiy, and Hartmut Neven.
\newblock Simulation of low-depth quantum circuits as complex undirected
  graphical models.
\newblock \emph{arXiv preprint arXiv:1712.05384}, 2017.

\bibitem[Boixo et~al.(2018)Boixo, Isakov, Smelyanskiy, Babbush, Ding, Jiang,
  Bremner, Martinis, and Neven]{boixo2018characterizing}
Sergio Boixo, Sergei~V Isakov, Vadim~N Smelyanskiy, Ryan Babbush, Nan Ding,
  Zhang Jiang, Michael~J Bremner, John~M Martinis, and Hartmut Neven.
\newblock Characterizing quantum supremacy in near-term devices.
\newblock \emph{Nature Physics}, 14\penalty0 (6):\penalty0 595, 2018.
\newblock \doi{10.1038/s41567-018-0124-x}.

\bibitem[Boyd and Vandenberghe(2004)]{boyd2004convex}
Stephen Boyd and Lieven Vandenberghe.
\newblock \emph{Convex optimization}.
\newblock Cambridge university press, 2004.

\bibitem[Bravyi and Gosset(2016)]{bravyi2016improved}
Sergey Bravyi and David Gosset.
\newblock Improved classical simulation of quantum circuits dominated by
  {C}lifford gates.
\newblock \emph{Physical Review Letters}, 116\penalty0 (25):\penalty0 250501,
  2016.
\newblock \doi{10.1103/PhysRevLett.116.250501}.

\bibitem[Bravyi and Kitaev(2005)]{bravyi2005universal}
Sergey Bravyi and Alexei Kitaev.
\newblock Universal quantum computation with ideal {C}lifford gates and noisy
  ancillas.
\newblock \emph{Physical Review A}, 71\penalty0 (2):\penalty0 022316, 2005.
\newblock \doi{0.1103/PhysRevA.71.022316}.

\bibitem[Bravyi et~al.(2006)Bravyi, Fattal, and Gottesman]{ghz}
Sergey Bravyi, David Fattal, and Daniel Gottesman.
\newblock Ghz extraction yield for multipartite stabilizer states.
\newblock \emph{Journal of Mathematical Physics}, 47\penalty0 (6):\penalty0
  062106, 2006.
\newblock \doi{10.1063/1.2203431}.

\bibitem[Bravyi et~al.(2016)Bravyi, Smith, and Smolin]{Bravyi16stabRank}
Sergey Bravyi, Graeme Smith, and John~A. Smolin.
\newblock Trading classical and quantum computational resources.
\newblock \emph{Physical Review X}, 6:\penalty0 021043, Jun 2016.
\newblock \doi{10.1103/PhysRevX.6.021043}.

\bibitem[Bremner et~al.(2016)Bremner, Montanaro, and
  Shepherd]{bremner2016average}
Michael~J Bremner, Ashley Montanaro, and Dan~J Shepherd.
\newblock Average-case complexity versus approximate simulation of commuting
  quantum computations.
\newblock \emph{Physical Review Letters}, 117\penalty0 (8):\penalty0 080501,
  2016.
\newblock \doi{0.1103/PhysRevLett.117.080501}.

\bibitem[Campbell(2011)]{Campbell11}
Earl~T. Campbell.
\newblock Catalysis and activation of magic states in fault-tolerant
  architectures.
\newblock \emph{Physical Review A}, 83:\penalty0 032317, Mar 2011.
\newblock \doi{10.1103/PhysRevA.83.032317}.

\bibitem[Chen et~al.(2018)Chen, Zhang, Chen, Huang, Newman, and
  Shi]{chen2018classical}
Jianxin Chen, Fang Zhang, Mingcheng Chen, Cupjin Huang, Michael Newman, and
  Yaoyun Shi.
\newblock Classical simulation of intermediate-size quantum circuits.
\newblock \emph{arXiv preprint arXiv:1805.01450}, 2018.

\bibitem[Crosson and Bowen(2017)]{crosson2017quantum}
Elizabeth Crosson and John Bowen.
\newblock Quantum ground state isoperimetric inequalities for the energy
  spectrum of local hamiltonians.
\newblock \emph{arXiv preprint arXiv:1703.10133}, 2017.

\bibitem[De~Raedt et~al.(2007)De~Raedt, Michielsen, De~Raedt, Trieu, Arnold,
  Richter, Lippert, Watanabe, and Ito]{de2007massively}
Koen De~Raedt, Kristel Michielsen, Hans De~Raedt, Binh Trieu, Guido Arnold,
  Marcus Richter, Th~Lippert, H~Watanabe, and N~Ito.
\newblock Massively parallel quantum computer simulator.
\newblock \emph{Computer Physics Communications}, 176\penalty0 (2):\penalty0
  121--136, 2007.
\newblock \doi{10.1016/j.cpc.2006.08.007}.

\bibitem[Delfosse et~al.(2015)Delfosse, Allard~Guerin, Bian, and
  Raussendorf]{Delfosse15rebits}
Nicolas Delfosse, Philippe Allard~Guerin, Jacob Bian, and Robert Raussendorf.
\newblock Wigner function negativity and contextuality in quantum computation
  on rebits.
\newblock \emph{Physical Review X}, 5:\penalty0 021003, Apr 2015.
\newblock \doi{10.1103/PhysRevX.5.021003}.

\bibitem[Eldar and Harrow(2017)]{eldar2017local}
Lior Eldar and Aram~W Harrow.
\newblock Local {H}amiltonians whose ground states are hard to approximate.
\newblock In \emph{Foundations of Computer Science (FOCS), 2017 IEEE 58th
  Annual Symposium on}, pages 427--438. IEEE, 2017.
\newblock \doi{10.1109/FOCS.2017.46}.

\bibitem[Farhi et~al.(2014)Farhi, Goldstone, and Gutmann]{farhi2014quantum}
Edward Farhi, Jeffrey Goldstone, and Sam Gutmann.
\newblock A quantum approximate optimization algorithm applied to a bounded
  occurrence constraint problem.
\newblock \emph{arXiv preprint arXiv:1412.6062}, 2014.

\bibitem[Fowler et~al.(2013)Fowler, Devitt, and Jones]{fowler2013surface}
Austin~G Fowler, Simon~J Devitt, and Cody Jones.
\newblock Surface code implementation of block code state distillation.
\newblock \emph{Scientific reports}, 3:\penalty0 1939, 2013.
\newblock \doi{10.1038/srep01939}.

\bibitem[Fried et~al.(2018)Fried, Sawaya, Cao, Kivlichan, Romero, and
  Aspuru-Guzik]{fried2017qtorch}
E~Schuyler Fried, Nicolas~PD Sawaya, Yudong Cao, Ian~D Kivlichan, Jhonathan
  Romero, and Al{\'a}n Aspuru-Guzik.
\newblock qtorch: The quantum tensor contraction handler.
\newblock \emph{PloS one}, 13\penalty0 (12):\penalty0 e0208510, 2018.
\newblock \doi{10.1371/journal.pone.0208510}.

\bibitem[Garcia et~al.(2012)Garcia, Markov, and Cross]{garcia2012efficient}
Hector~J Garcia, Igor~L Markov, and Andrew~W Cross.
\newblock Efficient inner-product algorithm for stabilizer states.
\newblock \emph{arXiv preprint arXiv:1210.6646}, 2012.

\bibitem[Garc{\'\i}a et~al.(2014)Garc{\'\i}a, Markov, and
  Cross]{Garcia14moreStabRank}
H{\'e}ctor~J. Garc{\'\i}a, Igor~L. Markov, and Andrew~W. Cross.
\newblock On the geometry of stabilizer states.
\newblock \emph{Quantum Information \& Computation}, 14:\penalty0 683, 2014.

\bibitem[Gottesman(1998)]{gottesman1998theory}
Daniel Gottesman.
\newblock Theory of fault-tolerant quantum computation.
\newblock \emph{Physical Review A}, 57\penalty0 (1):\penalty0 127, 1998.
\newblock \doi{10.1103/PhysRevA.57.127}.

\bibitem[Gottesman and Chuang(1999)]{CliffHier}
Daniel Gottesman and Isaac~L. Chuang.
\newblock Demonstrating the viability of universal quantum computation using
  teleportation and single-qubit operations.
\newblock \emph{Nature}, 402:\penalty0 390, 1999.
\newblock \doi{10.1038/46503}.

\bibitem[Gross et~al.(2017)Gross, Nezami, and Walter]{gross2017schur}
David Gross, Sepehr Nezami, and Michael Walter.
\newblock Schur-{W}eyl duality for the {C}lifford group with applications:
  Property testing, a robust {H}udson theorem, and de {F}inetti
  representations.
\newblock \emph{arXiv preprint arXiv:1712.08628}, 2017.

\bibitem[H{\"a}ner and Steiger(2017)]{haner20170}
Thomas H{\"a}ner and Damian~S Steiger.
\newblock 0.5 petabyte simulation of a 45-qubit quantum circuit.
\newblock In \emph{Proceedings of the International Conference for High
  Performance Computing, Networking, Storage and Analysis}, page~33. ACM, 2017.
\newblock \doi{10.1145/3126908.3126947}.

\bibitem[Hoeffding(1963)]{hoeffding1963probability}
Wassily Hoeffding.
\newblock Probability inequalities for sums of bounded random variables.
\newblock \emph{Journal of the American Statistical Association}, 58\penalty0
  (301):\penalty0 13--30, 1963.

\bibitem[Howard and Campbell(2017)]{Howard17robustness}
Mark Howard and Earl Campbell.
\newblock Application of a resource theory for magic states to fault-tolerant
  quantum computing.
\newblock \emph{Physical Review Letters}, 118:\penalty0 090501, Mar 2017.
\newblock \doi{10.1103/PhysRevLett.118.090501}.

\bibitem[Huang et~al.(2018)Huang, Newman, and Szegedy]{alibaba}
Cupjin Huang, Michael Newman, and Mario Szegedy.
\newblock Explicit lower bounds on strong quantum simulation.
\newblock \emph{arXiv preprint arXiv:1804.10368}, 2018.

\bibitem[Jones(2013)]{jones2013low}
Cody Jones.
\newblock Low-overhead constructions for the fault-tolerant {T}offoli gate.
\newblock \emph{Physical Review A}, 87\penalty0 (2):\penalty0 022328, 2013.
\newblock \doi{10.1103/PhysRevA.87.022328}.

\bibitem[Jozsa and Strelchuk(2017)]{Jozsa2017}
Richard Jozsa and Sergii Strelchuk.
\newblock Efficient classical verification of quantum computations.
\newblock \emph{arXiv preprint arXiv:1705.02817}, 2017.

\bibitem[Karanjai et~al.(2018)Karanjai, Wallman, and
  Bartlett]{karanjai2018contextuality}
Angela Karanjai, Joel~J Wallman, and Stephen~D Bartlett.
\newblock Contextuality bounds the efficiency of classical simulation of
  quantum processes.
\newblock \emph{arXiv preprint arXiv:1802.07744}, 2018.

\bibitem[Kocia and Love(2017)]{kocia2017discrete}
Lucas Kocia and Peter Love.
\newblock Discrete {W}igner formalism for qubits and noncontextuality of
  {C}lifford gates on qubit stabilizer states.
\newblock \emph{Physical Review A}, 96\penalty0 (6):\penalty0 062134, 2017.
\newblock \doi{10.1103/PhysRevA.96.062134}.

\bibitem[Kueng and Gross(2015)]{kueng15}
Richard Kueng and David Gross.
\newblock Qubit stabilizer states are complex projective 3-designs.
\newblock \emph{arXiv preprint arXiv:1510.02767}, 2015.

\bibitem[Li et~al.(2018)Li, Wu, Ying, Sun, and Yang]{li2018quantum}
Riling Li, Bujiao Wu, Mingsheng Ying, Xiaoming Sun, and Guangwen Yang.
\newblock Quantum supremacy circuit simulation on {S}unway {T}aihu{L}ight.
\newblock \emph{arXiv preprint arXiv:1804.04797}, 2018.

\bibitem[Markov and Shi(2008)]{markov2008simulating}
Igor~L Markov and Yaoyun Shi.
\newblock Simulating quantum computation by contracting tensor networks.
\newblock \emph{SIAM Journal on Computing}, 38\penalty0 (3):\penalty0 963--981,
  2008.
\newblock \doi{10.1137/050644756}.

\bibitem[Mart{\'\i}nez et~al.(1994)Mart{\'\i}nez, Michon, and Mart{\'\i}n]{MMM}
Servet Mart{\'\i}nez, G{\'e}rard Michon, and Jaime~San Mart{\'\i}n.
\newblock Inverse of strictly ultrametric matrices are of {S}tieltjes type.
\newblock \emph{SIAM Journal on Matrix Analysis and Applications}, 15\penalty0
  (1):\penalty0 98--106, 1994.
\newblock \doi{10.1137/S0895479891217011}.

\bibitem[Maslov and Roetteler(2017)]{maslov2017shorter}
Dmitri Maslov and Martin Roetteler.
\newblock Shorter stabilizer circuits via {B}ruhat decomposition and quantum
  circuit transformations.
\newblock \emph{arXiv preprint arXiv:1705.09176}, 2017.

\bibitem[McKay et~al.(2017)McKay, Wood, Sheldon, Chow, and
  Gambetta]{mckay2017efficient}
David~C McKay, Christopher~J Wood, Sarah Sheldon, Jerry~M Chow, and Jay~M
  Gambetta.
\newblock Efficient {Z} gates for quantum computing.
\newblock \emph{Physical Review A}, 96\penalty0 (2):\penalty0 022330, 2017.
\newblock \doi{10.1103/PhysRevA.96.022330}.

\bibitem[Morimae and Fitzsimons(2018)]{morimae2016post}
Tomoyuki Morimae and Joseph~F Fitzsimons.
\newblock Post hoc verification with a single prover.
\newblock \emph{Physical Review Letters}, 120:\penalty0 040501, 2018.
\newblock \doi{10.1103/PhysRevLett.120.040501}.

\bibitem[Nabben and Varga(1994)]{NabenVarga}
Reinhard Nabben and Richard~S Varga.
\newblock A linear algebra proof that the inverse of a strictly ultrametric
  matrix is a strictly diagonally dominant {S}tieltjes matrix.
\newblock \emph{SIAM Journal on Matrix Analysis and Applications}, 15\penalty0
  (1):\penalty0 107--113, 1994.
\newblock \doi{10.1137/S0895479892228237}.

\bibitem[Nielsen and Chuang(2002)]{nielsen2002quantum}
Michael~A Nielsen and Isaac Chuang.
\newblock Quantum computation and quantum information, 2002.

\bibitem[Pashayan et~al.(2015)Pashayan, Wallman, and Bartlett]{pashayan15}
Hakop Pashayan, Joel~J Wallman, and Stephen~D Bartlett.
\newblock Estimating outcome probabilities of quantum circuits using
  quasiprobabilities.
\newblock \emph{Physical Review Letters}, 115\penalty0 (7):\penalty0 070501,
  2015.
\newblock \doi{10.1103/PhysRevLett.115.070501}.

\bibitem[Pednault et~al.(2017)Pednault, Gunnels, Nannicini, Horesh, Magerlein,
  Solomonik, and Wisnieff]{pednault2017breaking}
Edwin Pednault, John~A Gunnels, Giacomo Nannicini, Lior Horesh, Thomas
  Magerlein, Edgar Solomonik, and Robert Wisnieff.
\newblock Breaking the 49-qubit barrier in the simulation of quantum circuits.
\newblock \emph{arXiv preprint arXiv:1710.05867}, 2017.

\bibitem[Preskill(2018)]{preskill2018quantum}
John Preskill.
\newblock Quantum computing in the {NISQ} era and beyond.
\newblock \emph{arXiv preprint arXiv:1801.00862}, 2018.

\bibitem[Regula(2017)]{regula2017convex}
Bartosz Regula.
\newblock Convex geometry of quantum resource quantification.
\newblock \emph{Journal of Physics A: Mathematical and Theoretical},
  51\penalty0 (4):\penalty0 045303, 2017.
\newblock \doi{10.1088/1751-8121/aa9100}.

\bibitem[R{\"o}tteler(2010)]{Roetteler09}
M.~R{\"o}tteler.
\newblock Quantum algorithms for highly non-linear {B}oolean functions.
\newblock In \emph{Proceedings of the 21st ACM-SIAM Symposium on Discrete
  Algorithms}, pages 448--457, 2010.

\bibitem[Smelyanskiy et~al.(2016)Smelyanskiy, Sawaya, and
  Aspuru-Guzik]{smelyanskiy2016qhipster}
Mikhail Smelyanskiy, Nicolas~PD Sawaya, and Al{\'a}n Aspuru-Guzik.
\newblock q{H}i{PSTER}: the quantum high performance software testing
  environment.
\newblock \emph{arXiv preprint arXiv:1601.07195}, 2016.

\bibitem[van Dam et~al.(2006)van Dam, Hallgren, and Ip]{van_dam_quantum_2006}
W.~van Dam, S.~Hallgren, and L.~Ip.
\newblock Quantum {Algorithms} for {Some} {Hidden} {Shift} {Problems}.
\newblock \emph{SIAM Journal on Computing}, 36\penalty0 (3):\penalty0 763--778,
  January 2006.
\newblock ISSN 0097-5397.

\bibitem[Van Den~Nest(2010)]{nest2008classical}
Maarten Van Den~Nest.
\newblock Classical simulation of quantum computation, the {G}ottesman-{K}nill
  theorem, and slightly beyond.
\newblock \emph{Quantum Information \& Computation}, 10\penalty0 (3):\penalty0
  258--271, 2010.

\bibitem[Van~den Nest(2011)]{nest2009simulating}
Maarten Van~den Nest.
\newblock Simulating quantum computers with probabilistic methods.
\newblock \emph{Quantum Information \& Computation}, 11\penalty0
  (9-10):\penalty0 784--812, 2011.

\bibitem[Veitch et~al.(2012)Veitch, Ferrie, Gross, and
  Emerson]{veitch2012negative}
Victor Veitch, Christopher Ferrie, David Gross, and Joseph Emerson.
\newblock Negative quasi-probability as a resource for quantum computation.
\newblock \emph{New Journal of Physics}, 14\penalty0 (11):\penalty0 113011,
  2012.
\newblock \doi{10.1088/1367-2630/14/11/113011}.

\bibitem[Webb(2016)]{Webb16}
Zak Webb.
\newblock The clifford group forms a unitary 3-design.
\newblock \emph{Quantum Information \& Computaion}, 16:\penalty0 1379, 2016.

\bibitem[Wolff et~al.(2004)Wolff, {A}lpha Collaboration,
  et~al.]{wolff2004monte}
Ulli Wolff, {A}lpha Collaboration, et~al.
\newblock {M}onte {C}arlo errors with less errors.
\newblock \emph{Computer Physics Communications}, 156\penalty0 (2):\penalty0
  143--153, 2004.
\newblock \doi{10.1016/S0010-4655(03)00467-3}.

\bibitem[Zhu et~al.(2016)Zhu, Kueng, Grassl, and Gross]{zhu16}
Huangjun Zhu, Richard Kueng, Markus Grassl, and David Gross.
\newblock The {C}lifford group fails gracefully to be a unitary 4-design.
\newblock \emph{arXiv preprint arXiv:1609.08172}, 2016.

\bibitem[Zyczkowski and Sommers(2000)]{random}
Karol Zyczkowski and Hans-J{\"u}rgen Sommers.
\newblock Truncations of random unitary matrices.
\newblock \emph{Journal of Physics A: Mathematical and General}, 33\penalty0
  (10):\penalty0 2045, 2000.
\newblock \doi{10.1088/0305-4470/33/10/307}.

\end{thebibliography}

\end{document}